\newif\ifFull
\newif\ifExtended
\definecolor{citecolor}{HTML}{0000C0}
\definecolor{urlcolor}{HTML}{000080}
\newcommand{\RomanNumeralCaps}[1]
{\MakeUppercase{\romannumeral #1}}
\newcommand{\G}{\ensuremath{\mathcal{G}}}
\newcommand{\cont}{\ensuremath{\text{cont}}}
\newtheorem{observation}{Observation}
\definecolor{lgray}{rgb}{0.95,0.95,0.95}
\begin{document}

\title{\Large On the Feasibility of Perfect Resilience with Local Fast Failover\vspace{-3mm}\thanks{Supported by Vienna Science and Technology Fund (WWTF) project WHATIF, ICT19-045, 2020-24}}
%\title{\Large SIAM/ACM Preprint Series Macros for Use With LaTeX\thanks{Supported by GSF grants ABC123, DEF456, and GHI789.}}
%\author{Corey Gray\thanks{Society for Industrial and Applied Mathematics.}
%\and Tricia Manning\thanks{Society for Industrial and Applied Mathematics.}}
\author{\large Klaus-Tycho\ Foerster$^\ddagger$ Juho Hirvonen$^\#$ Yvonne-Anne Pignolet$^\dagger$ Stefan Schmid$^\ddagger$ Gilles Tredan$^*$
\and \small{$^\ddagger$Faculty of Computer Science, University of Vienna, Austria~~~$^\#$Aalto University, Finland}
\and \small{$^\dagger$DFINITY, Switzerland~~~$^*$LAAS-CNRS, France}
\vspace{-4mm}
}

\date{}

\maketitle
\pagestyle{myheadings}
\markboth{On the Feasibility of Perfect Resilience with Local Fast Failover}{On the Feasibility of Perfect Resilience with Local Fast Failover}
%\markboth{K.-T.\ Foerster, J.\ Hirvonen, Y.-A.\ Pignolet, S.\ Schmid, and G.\ Tredan}{On the Feasibility of Perfect Resilience with Local Fast Failover}

\pagenumbering{arabic}

\begin{abstract} \small\baselineskip=9pt 

In order to provide a high resilience and to react quickly to
link failures, modern computer networks support fully decentralized
flow rerouting, also known as
local fast failover.
In a nutshell, the task of a local fast failover algorithm
is to \emph{pre-define}  fast failover rules
for each node using locally available information only.
These rules determine for each incoming link from which a packet may arrive
and the set of local link failures (i.e.,
the failed links \emph{incident} to a node),
on which outgoing link a packet should be forwarded.
Ideally, such a local fast failover algorithm
provides a \emph{perfect resilience} deterministically: a packet emitted from any source can reach
any target, as long as the underlying network
remains connected.
Feigenbaum et al.\ (ACM PODC 2012) and also Chiesa et al.\ (IEEE/ACM Trans.\ Netw.\ 2017) showed that it is not always possible to provide perfect resilience.
Interestingly, not much more is known currently about the
feasibility of perfect resilience.

This paper revisits perfect resilience
with local fast failover, both in a model where the source
can and cannot be used for forwarding decisions.
We first derive several fairly general impossibility results:
By establishing a connection between graph minors and resilience, 
we prove that it is impossible to achieve perfect resilience on \emph{any non-planar graph}; furthermore, while planarity is necessary, it is also not sufficient for perfect resilience. In some scenarios, a local failover algorithm cannot even guarantee that a packet reaches its target, even if the source is still highly connected to the target after the failures. 

On the positive side, we show that graph families closed under link subdivision allow for simple and efficient failover algorithms which simply  \emph{skip failed links}.  We demonstrate this technique by deriving perfect resilience for outerplanar graphs and related scenarios, as well as for scenarios where the source and target are topologically close after failures.

\vspace{-1mm}
\end{abstract}

\section{Introduction}
\label{sec:intro}
%\vspace{-1mm}
The dependability of distributed systems often critically
depends on the underlying network, realized by a set
of routers. To provide high availability, modern
routers support local fast rerouting of flows:
routers can be pre-configured
with conditional failover rules which define, for
each incoming port and desired target, to which
port a packet arriving on this incoming port
should be forwarded deterministically depending on the status of the
\emph{incident} links only: as routers need to react
quickly, they do not have time to learn about
remote failures.

\noindent This paper is motivated by the following
fundamental question
introduced by local fast rerouting mechanisms:

\emph{Is it possible to pre-define deterministic local failover rules
which guarantee that packets reach their target, as long as the underlying
network is connected?}

This desired property is known as
\emph{perfect resilience}.
The challenge of providing perfect resilience
hence lies in the decentralized nature of the problem
and the fact that routers only have local information
about failed links; achieving perfect resilience is
straightforward with global knowledge, as one could
simply compute a shortest path.

Unfortunately, perfect resilience cannot be achieved in general: Feigenbaum et al.~\cite{podc-ba,DBLP:journals/corr/abs-1207-3732} presented an example with 12 nodes for which, after certain failures, no forwarding pattern on the original network allows each surviving node in the target's connected component to reach the target.
Chiesa et al.~\cite{DBLP:journals/ton/ChiesaNMGMSS17} expanded on their result to require only two failures on a planar graph, but required over 30 nodes.
On the positive side, Feigenbaum et al.\ showed that it is at least always possible to tolerate one link failure, i.e., to be 1-resilient.
Interestingly, not much more is known today about when perfect resilience can be achieved, and when not.

\vspace{-3mm}

{
\begin{table*}[!t]
%\vspace{-0.1cm}
\setlength\extrarowheight{1pt}
\begin{center}
\begin{small}
\begin{tabular}{>{\centering\arraybackslash}p{2.7cm}>{\centering\arraybackslash}p{5.2cm}>{\centering\arraybackslash}p{5.2cm}}
  \textbf{Graph class}
&
  \textbf{Without source matching}
&
  \textbf{With source matching}
\\ \hline\hline
\rowcolor{lgray}
\arrayrulecolor{white}
Outerplanar & \cellcolor{green!15} Perfect resilience: Thm~\ref{thm:outerplanar} & \cellcolor{green!15} Perfect resilience (see left)\\ \hline
  $K_4$ &   \cellcolor{green!15} Perfect resilience: Thm~\ref{thm:k4} & \cellcolor{green!15} Perfect resilience (see left)\\\hline \rowcolor{lgray}%\multicolumn{2}{c}{Perfect resilience} % & c
  Planar~graphs & \cellcolor{yellow!15} $|V|=7$ counterexample:
                  Thm~\ref{thm:planar} & \cellcolor{yellow!15} $|V|=8$ counterexample: Thm~\ref{thm:planar-s}\\\hline
%  $K_5$, $K_{3,3}$ & \cellcolor{red!15} Impossible: Thm~\ref{nok5},\ref{nok3} & --- \\\hline
  %\cellcolor{lgray} 
	Non-planar graphs & \cellcolor{red!15} Perfect res. impossible: Thm~\ref{thm:planar-wagner} & ? \\
	\arrayrulecolor{black}
  \hline
\end{tabular}
\end{small}
\end{center}
\vspace{-6mm}
\caption{Summary of perfect resilience results for specific graph classes.}%
\vspace{-2.5mm}
\label{tbl:big1}

\setlength\extrarowheight{1pt}
\begin{center}
\begin{small}
\begin{tabular}{>{\centering\arraybackslash}p{2.7cm}>{\centering\arraybackslash}p{5.2cm}>{\centering\arraybackslash}p{5.2cm}}
  \textbf{Graph class}
&
  \textbf{Without source matching}
&
  \textbf{With source matching}
\\ \hline\hline
\arrayrulecolor{white}
\rowcolor{lgray}
  Closed under link subdivision & \multicolumn{2}{p{10.84cm}}{\cellcolor{green!15} \centering{Perfect resilience
                                $\Rightarrow$ skipping perfect resilience, \newline $f(m)$-resilience $\Rightarrow$ skipping $f(m)$-resilience:
                                Thm~\ref{thm:sim-argument}}} \\ \hline
%\rowcolor{lgray}
% General graphs & \cellcolor{red!15} No perfect resilience\ \cite{podc-ba} & \cellcolor{red!15} No perfect resilience\ Thm~\ref{thm:feigenbaum-gadget} \\ \hline
\rowcolor{lgray}
   General graphs & \multicolumn{2}{c}{ \cellcolor{red!15} No superconstant resilience: Thm~\ref{thm:feigenbaum-padded}} \\ \hline 
%\rowcolor{lgray}
  General graphs & \multicolumn{2}{c}{ \cellcolor{red!15} Impossible even if large connectivity under $F$: Thm~\ref{thm:feigenbaum-padded-replicated}} \\ \hline  \rowcolor{lgray}
Subgraph & \cellcolor{green!15} Perfect resilience\ kept: Thm~\ref{result:subset-s} & \cellcolor{green!15} Perfect resilience\ kept: Cor~\ref{result:subset-source}\\ \hline
Minor & \cellcolor{green!15} Perfect resilience\ kept: Thm~\ref{result:minor} & \cellcolor{green!15} Perfect resilience\ kept: Cor~\ref{minorstab-source} \\
\arrayrulecolor{black}
  \hline
\end{tabular}
\end{small}
\end{center}
\vspace{-6mm}
\caption{Characteristics of forwarding patterns and parametrized resilience results.}%
\vspace{-2.5mm}
\label{tbl:big2}

\setlength\extrarowheight{1pt}
\begin{center}
\begin{small}
\begin{tabular}{>{\centering\arraybackslash}p{2.7cm}>{\centering\arraybackslash}p{5.2cm}>{\centering\arraybackslash}p{5.2cm}}
  \textbf{Context}
&
  \textbf{Without source matching}
&
  \textbf{With source matching}
\\ \hline\hline
\arrayrulecolor{white}
\rowcolor{lgray}
$G$ planar and $s,t$ on the same face & \cellcolor{green!15} Perfect resilience: Cor~\ref{cor:sameface} & \cellcolor{green!15} Perfect resilience (see left)\\ \hline
Small distance in general graphs&  \cellcolor{green!15} All nodes at most two hops from $t$
                                  in $G\setminus F$: perfect resilience: Thm~\ref{thm:2hops-no-s} & \cellcolor{green!15} $s$ and $t$ at most two hops away in $G\setminus F$: perfect resilience: Thm~\ref{thm:2hops}\\
\arrayrulecolor{black}
\hline
\end{tabular}
\end{small}
\end{center}
\vspace{-6mm}
\caption{Perfect resilience under specific settings. 
%The algorithms only require linear space routing~tables.
}%
\label{tbl:big3}
\vspace{-4mm}
\end{table*}
}

\subsection{Contributions}

This paper studies the problem of providing perfect resilience in networks where nodes only have local information, considering both a model where nodes can and cannot match the packet source. 

On the negative side, we 
%characterize network instances which do not allow for perfectly resilient solutions.
%
%From this characterization it follows 
show that perfect resilience is impossible already on simple and small planar graphs, and even in scenarios where a source is in principle still highly connected to the target after the failures, by $\Omega{(n)}$ disjoint paths; however, it cannot route to it.
We also derive the general negative result that perfect resilience is impossible on any non-planar graph.
%, and that already a complete graph with five nodes $K_5$ and a complete bipartite graph $K_{3,3}$ with three nodes in each partition do not permit perfectly resilient solutions. 
To this end, we show an intriguing connection to graph minors, 
%and perfect resilience, 
where every graph minor retains the perfect resilience property.

On the positive side, we describe perfectly resilient algorithms for all outerplanar graphs and related scenarios (e.g., scenarios where the source and the destination are on the same face after failures or where the graph after removal of the destination is outerplanar), 
as well as for non-outerplanar scenarios where the destination is within two hops of the source. 
%\newpage
%
For our positive results, we establish the general insight that  graph families that are closed under subdivision of links, allow for simple failover algorithms in which nodes can just skip locally failed ports, requiring very small forwarding tables. 

Our results are summarized in the Tables~\ref{tbl:big1} to \ref{tbl:big3}. %  \cref{tbl:big1,tbl:big2,tbl:big3}.

%\vspace{-3mm}

\subsection{Related Work}

Besides the paper by Feigenbaum et al.~\cite{podc-ba,DBLP:journals/corr/abs-1207-3732}, which is the closest work to ours, and the paper by Chiesa et al.~\cite{DBLP:journals/ton/ChiesaNMGMSS17}, the design of local fast failover algorithms has already been studied intensively, see the recent survey~\cite{frr-survey}.
In the following, we will concentrate on related work that provides formal
 resilience guarantees for local fast failover, however, we point out
that there also exists much interesting applied
work on the topic, e.g.,~\cite{keep-fwd,purr,ref7,ref4,ref34,srds19failover,plinko-full,schapira1,schapira2}.

A key property and challenge in the design of local fast failover
algorithms is
related to the fact that nodes need to react to failures fast,
i.e., routers can only have local failure information and
failover decisions are static, ruling out algorithms
based on link reversal~\cite{gafni-lr,corson1995distributed}
or reconvergence~\cite{DBLP:conf/spaa/BuschST03}.
Furthermore, it is not possible to rewrite packet headers~\cite{elhourani2014ip,infocom15},
e.g., to carry failure information, which is often impractical;
this rules out graph exploration algorithms such as~\cite{hotsdn14failover,reingold,DBLP:journals/tcs/FoersterW16,DBLP:journals/tcs/MegowMS12}.
The fact that neither routing tables
nor packet headers can be modified also means that it is
not possible to adopt, e.g., rotor router approaches~\cite{rotorrouter}, which would provide connectivity but require state.
We also note that the focus in this paper is on \emph{deterministic} algorithms,
which, in contrast to related work such as \cite{icalp16,frr-rand}, 
do not require random number generators at routers.

From a distributed perspective, the design of local fast failover algorithms also features an interesting connection to distributed algorithms ``with disconnected cooperation''~\cite{malewicz2006distributed}, a subfield of distributed computing where nodes solve a problem in parallel without exchanging information among them.
This was first pointed out by Pignolet et al.~in~\cite{dsn17}, and used in several related works~\cite{opodis13shoot,infocom19casa}, however, primarily to reduce load in dense networks, rather than to improve resilience.

In terms of local fast failover algorithms which provide a provably high resilience, there exist several interesting results by Chiesa et al.\ who introduced a powerful approach which decomposes the network into arc-disjoint arborescence covers~\cite{icalp16,robroute16infocom,DBLP:journals/ton/ChiesaNMGMSS17}, further investigated in~\cite{infocom19casa,dsn19,srds19failover,ccr18failover} to reduce stretch and load.
Chiesa et al.\ are primarily interested in $k$-connected graphs, and leverage a well-known result from graph
theory, which allows us to decompose any $k$-connected graph into a set of $k$ directed spanning trees (rooted at the same vertex, the target) such that no pair of spanning trees shares a link in the same direction.
Packets are routed along some arborescence until hitting a failure, and are then rerouted along a different arborescence.
Chiesa et al.\ conjecture that for any $k$-connected graph, basic failover routing can be resilient to any $k-1$ failures, and show that this conjecture is true for a number of interesting graphs, including planar graphs.

However, the general conjecture
remains to be proved, and also differs conceptually
from the perfect resilience considered in our paper, which
asks for connectivity as long as the underlying graph is connected.
Indeed, as we show,
%
%However, as we show in this paper, 
perfect resilience is
for example \emph{not}
achievable in planar graphs, even with the packet source.
%Nevertheless,
%the arc-disjoint arborescence decomposition
%technique has been explored further,
%not to improve resilience but e.g., to reduce
%stretch~\cite{infocom19casa,dsn19,srds19failover,ccr18failover}.

%\subsection{Organization}
%
%The remainder of this paper is organized as follows.

\vspace{4mm}

\noindent\textbf{Bibliographical Note} This work also appears in the APOCS21 conference proceedings~\cite{DBLP:journals/corr/abs-2006-06513} and a three page brief summary of the main results is available at~\cite{disc-bam}\footnote{\url{https://www.youtube.com/watch?v=m3xWggfbHo4}}.

\section{Model}
\label{sec:model}

Let $G=(V,E)$ be a network represented by an undirected graph of nodes (``routers'') 
$V$ connected through undirected links $E$ along which packets are exchanged.
Initially,
%\todo{can we make it work without initially?} 
an arbitrary set $F\subset E$ of links fail (rendering them unusable in both directions). We write $G \setminus F$ to denote the graph $G$ with the links in $F$ removed. More generally, $G \setminus E'$ and $G \setminus V'$ represent the graph $G$ after removing the set of links in $E' \subset E$ or the set of nodes $V' \subset V$ and their incident links respectively.
We define $n=|V|$, $m=|E|$, and write $V_G(v)$ and $E_G(v)$ for the neighbors and incident links of node $v$, respectively. $G$ is omitted and the  degree $d_v=|V_G(v)|$ is shorthanded to $d$ when the context is clear.
We study the class of local routing (forwarding) algorithms, in which every node
$v\in V$ takes deterministic routing decisions on
\begin{itemize}
\setlength\itemsep{0em}
%\item the source $s$ of the packet to route, \klaus{skip source for now?}
\item the target $t$ of the packet to route,
\item the set of incident failed links $F\cap E(v)$, and
\item the receiving or incoming port (\emph{in-port}) of the packet at node $i$.\footnote{Note that without knowledge of the in-port, already very simple failure scenarios prevent resilience. For example, consider a packet reaching a node $v$ from a node $w$, where all further links incident to $v$ failed. Node $v$ must return the packet to $w$, which forwards it back to $v$, resulting in a permanent forwarding~loop.}
\end{itemize}

%In particular, 
This implies that neither the state of the packet nor the state of the node can be changed, e.g., by header rewriting
or using dynamic routing tables.
As such, routing has be to purely local, which we formalize next.

Given a graph $G$ and a target $t \in V(G)$, a local routing algorithm is modelled as a \emph{forwarding function} $\pi_v^t \colon E(v) \cup \{ \bot \} \times 2^{E(v)} \mapsto E(v)$ at each node $v \in V(G)$, where $\bot$ represents the empty in-port, i.e.\ the starting node of the packet. That is, given the set of failed links $F \cap E(v)$ incident to a node $v$, a forwarding function $\pi_v^t$ maps each incoming port (link) $e=(u,v)$ to the outgoing port (link) denoted by $\pi_v^t(e,F)$ or $\pi_v^t(u,F)$.
The tuple of forwarding functions $\pi^t = (\pi_v^t)_{v \in V}$ is called the \emph{forwarding pattern}.
When talking about the repeated use of a forwarding function of the node $v$ under a specific failure set $F$, we will also use the notation style~$\pi^t_v(\cdot, F)$, where for $a \in \mathbb{N}_{\geq 1}$, $\left(\pi^t_v(\cdot, F)\right)^a(u)$ is recursively defined as $\left(\pi^t_v(\cdot, F)\right)^{(a-1)}\left(\pi^t_v(u, F)\right) =\left(\pi^t_v(\cdot, F)\right)^{(a-2)}\left(\left(\pi^t_v(\cdot, F)\right)^2(u)\right)$ etc.

%$\left(\pi^t_v(\cdot, F)\right)^{(a-1)}(\pi^t_v(u, F)) = \left(\pi^t_v(\cdot, F)\right)^{(a-2)}\left(\left(\pi^t_v(\cdot, F)\right)^2(u)\right)\right) = \left(\pi^t_v(\cdot, F)\right)}\left(\left(\pi^t_v(\cdot, F)\right)^{(a-1)}(u)\right)\right)$.

We say that a forwarding pattern is \emph{$k$-resilient} if for all $G$ and all $F$ where $|F| \leq k$ the forwarding pattern routes the packet from all $v \in V$ to the target $t$ when $v$ and $t$ are connected in $G \setminus F$. A forwarding pattern is \emph{perfectly resilient} if it is $\infty$-resilient: the forwarding always succeeds in the connected component of the target.
Let $A_p(G,t)$ be the set of such perfectly resilient patterns (algorithms), abbreviated by $A_p$ when the context is~clear.

We also consider the model where the forwarding functions depend on the source\footnote{Note that matching on the source greatly increases routing table size, from one forwarding pattern for every destination, to one forwarding pattern for every source-destination pair.} node~$s$ in addition to the in-port, incident failures and target.
We denote the forwarding function and pattern by $\pi_v^{s,t}$ and $\pi^{s,t}$, respectively; it is perfectly resilient if forwarding always succeeds when $s$ and $t$ are connected in $G \setminus F$, where the set of such patterns is denoted by $A_p(G,s,t)$.

Note that resilient algorithms that do not match on the source also imply
resilient algorithms in the model where we can match on the source (it can be ignored); similarly, impossibility results in the model where we can match on the source also imply impossibility in the model where we do not match on the source (extra~information).

\section{First Insights}\label{sec:first}

Let us start by providing some basic insights and also derive a first impossibility result.
In general, we observe that to achieve perfect routing, every possible route to the target must be explored.
The challenge is that 
the nodes must ensure this property in a distributed fashion, while having no means to exchange their local views of the network.
%
% To exploit this difficulty, we essentially
% consider two cases, that informally cover \emph{coordinated} approaches,
% where nodes act coherently according to a predefined plan (a permutation), and
% \emph{uncoordinated} approaches where nodes more liberally act
% according to their local views.
%
%
We will show that in order to achieve this, the forwarding functions must be coordinated for global coherence.
We will then study the limits of such patterns.
But first, we formalize the ``local'' perspective of a node, illustrated in Figure~\ref{fig:relevance}.

	\begin{figure}[t]
  \begin{center}

  \begin{tikzpicture}[shorten >=1pt]
  \tikzstyle{vertex}=[circle,fill=black!25,minimum size=17pt,inner sep=0pt]
  \tikzstyle{edge}=[thick,black,--]

  \node[vertex,fill=green!25] (i) at (0,0) {$i$};

  % \foreach \name/\angle/\text in {P-3/90/7, P-4/45/8, P-5/0/9, P-6/-45/10}
  %   \node[vertex] (\name) at (\angle:1.3cm) {$\text$};

  \node[vertex] (v1) at (120:1.3cm) {$v_1$};
  \node[vertex] (v2) at (45:1.3cm) {$v_2$};
  \node[vertex] (v3) at (0:1.3cm) {$v_3$};

  \node[vertex,fill=blue!25] (source) at (-1,-1) {$s$};
  \node[vertex,fill=blue!25] (target) at (3,0) {$t$};

  \foreach \from/\to in {i/v1,i/v2,i/v3,target/v2,target/v3,v1/v3,v1/v2,i/source}
  { \draw (\from) -- (\to); }

\end{tikzpicture}
  \end{center}
	\vspace{-5mm}
\caption{In this context, only nodes $v_2$ and $v_3$ are relevant for   $i$: it is not useful for $i$ to relay to $v_1$ as $v_2$ and $v_3$
  constitute more direct alternatives for the same paths towards $t$. Note that relevance is lower bounded by connectivity in the failure-free case and that the number of relevant neighbors can increase with failures, e.g., if $(i,v_2)$ or $(i,v_3)$ fail, $v_1$ becomes relevant for $i$}
  \label{fig:relevance}
\end{figure}
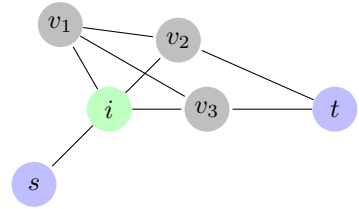

\begin{Definition}[Relevant node]
  For any node $i\in G$,
	$i \neq t$,
	and a failure set $F$, define $F_i$ as the failures in $F$ incident to $i$, 
	i.e., $F_i$ is the only failure set the node $i$ is aware of.
	Moreover, let $G'$ be the original graph $G$ without the links in $F_i$, i.e., $G'=G\setminus F_i$.
	
	A neighboring node $j \in V_{G'}(i)$ is 
		\emph{relevant} for routing to $t$ under the failure set $F$
	iff there is a path from $i$ to $t$ in $G' \setminus V'$, where $V'=V_{G'}(i)\setminus \{j\}$ is the set of all other nodes still connected to $i$.
	In other words, $j$ is a potential relay to reach $t$ from $i$'s perspective, if, in addition to $F$, all links incident to other neighbors of $i$ have failed.
  \end{Definition}

Note that nodes are oblivious to non-incident link failures and relevance is defined by a path to the target which does not use further neighbors of the current node.
Hence, if a neighbor $v$ of some node $i$ is relevant, then the failure of further links incident to $i$ can only remove $i$'s relevance if the link $(v,i)$ fails.\footnote{Note that a node itself does not have a global view and can only judge relevance based on its incident failures.}
\begin{observation}[Monotonicity of relevance]\label{obs:relevance}
If a node $v$ is relevant for a node $i$ under a failure set $F$, $v$ remains relevant for $i$ under any failure superset $F'$ with $F \subseteq F'$, if $v$ remains a neighbor of $i$ under $F'$.
\end{observation}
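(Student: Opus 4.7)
The plan is to unpack the definition of relevance and show that a path witnessing relevance under the smaller failure set also witnesses it under the larger one. First I would observe that, by the stated definition, relevance of a neighbor depends only on the failures incident to $i$, that is on $F_i = F \cap E(i)$ rather than on all of $F$. Writing $F'_i = F' \cap E(i)$, we have $F_i \subseteq F'_i$, and the additional non-incident failures in $F' \setminus F$ play no role at node $i$.

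Set $G_1 = G \setminus F_i$, $G_2 = G \setminus F'_i$, $V_1 = V_{G_1}(i) \setminus \{v\}$, and $V_2 = V_{G_2}(i) \setminus \{v\}$. Since more incident edges are removed in $G_2$, we have $V_{G_2}(i) \subseteq V_{G_1}(i)$, hence $V_2 \subseteq V_1$. The hypothesis that $v$ remains a neighbor of $i$ under $F'$ means $(i,v) \notin F'$, so the edge $(i,v)$ is present in both $G_1$ and $G_2$. By relevance of $v$ under $F$, there exists a path $P$ from $i$ to $t$ in $G_1 \setminus V_1$. Because every neighbor of $i$ in $G_1$ other than $v$ has been deleted, the first edge of $P$ must be $(i,v)$; taking $P$ to be simple, it does not revisit $i$. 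Hence every edge of $P$ after the initial one avoids $i$ altogether and therefore lies outside $F'_i \setminus F_i$, which consists solely of edges incident to $i$. Those edges were already in $G_1$ and, being non-incident to $i$, are equally present in $G_2$. Combined with $V_2 \subseteq V_1$, the same $P$ is a path in $G_2 \setminus V_2$, witnessing relevance of $v$ under $F'$.

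I do not foresee any significant obstacle here: the result is essentially a direct consequence of the structural fact that any relevance-witnessing path uses only the edge $(i,v)$ at $i$ and otherwise completely avoids $i$'s incident edges, which makes it robust against adding further incident failures. The only point requiring care is that the edge $(i,v)$ itself must be preserved, which is precisely what the hypothesis $(i,v) \notin F'$ in the statement guarantees; without it the conclusion could fail trivially because $v$ would no longer even be a candidate neighbor to relay through.
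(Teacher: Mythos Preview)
Your argument is correct and is essentially a careful formalization of the paper's own justification. The paper treats this as an observation with only a one-sentence informal remark preceding it (that relevance depends only on incident failures and the witnessing path avoids all other neighbors of $i$, so additional incident failures cannot destroy it unless $(i,v)$ itself fails); your proof unpacks exactly this reasoning in detail.
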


We next investigate forwarding under perfect resiliency when a packet arrives from a relevant neighbor.
As each relevant neighbor might be the only way of reaching the destination, the forwarding must permit the packet to ``try'' all relevant neighbors, no matter from which one the packet comes first, i.e., the only way of doing so is in a circular fashion.
%
%This key idea is formalized in the notation of an orbit

\begin{Definition}[Orbit]
 %Let $\pi$ be a permutation of a set $X$.
 %We say a subset $X'$ of $X$ is in the same \emph{orbit} w.r.t.\ $\pi$ if for all pairs $x_1,x_2 \in X'$ holds: there is some $k \in \mathbb{N}$ s.t.\ $\pi^k(x_1) = x_2$.
Let $\pi^t_v(\cdot, F)$ be the forwarding function of a node $v$ for some set of failed links $F$.
We say a set of neighbors $V' \subseteq V(v)$ is in the same orbit w.r.t.\ $\pi^t_v(\cdot, F)$, if for all pairs $v_1,v_2 \in V'$ it holds: there is some $k \in \mathbb{N}$ s.t.\ $\left(\pi^t_v(\cdot, F)\right)^k(v_1) = v_2$.
\end{Definition}

\begin{lemma}\label{lemma:orbit}
  Let $G,A\in A_p(G,t)$. For all $F$ and all $t,i \in V$ it holds: If $i$ has at least two relevant neighbors under $F$, then all relevant neighbors of $i$ must be part of the same orbit in $A$'s forwarding function $\pi^t_i(\cdot, F)$ of $i$.
\end{lemma}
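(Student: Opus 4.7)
My plan is to proceed by contradiction. Suppose two relevant neighbors $v_1,v_2$ of $i$ under $F$ fail to share an orbit of $\pi^t_i(\cdot,F)$, and note that since the forwarding function of $i$ depends only on its incident failures $F_i$, the orbit structure is a function of $F_i$ alone. I will construct a larger failure set $F'\supseteq F$ with $F'_i=F_i$ (so $i$'s behavior is unaffected) for which the only $i$-to-$t$ route in $G\setminus F'$ leaves $i$ via port $v_1$. Injecting a packet at source $s=v_2$ will then expose an infinite loop at $i$, contradicting $A\in A_p(G,t)$.

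To build $F'$, apply relevance of $v_1$: there exists a path $P$ from $i$ to $t$ in $G\setminus F_i$ whose internal vertices avoid every other neighbor of $i$, and which therefore starts with the edge $(i,v_1)$. Let $F'$ consist of $F_i$ together with every edge of $G$ that is neither incident to $i$ nor lies on $P$. Then $F\subseteq F'$ and $F'_i=F_i$, while in $G\setminus F'$ every neighbor $w\neq v_1$ of $i$ has only the edge $(i,w)$ alive. The unique $i$-to-$t$ path in $G\setminus F'$ therefore begins by exiting through $v_1$ and following $P$. Since $v_2$ remains connected to $t$ via $v_2,i,v_1,\dots,t$, perfect resilience demands delivery of a packet emitted at source $v_2$.

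The packet's first hop delivers it at $i$ with in-port $v_2$. Set $w_0=v_2$ and $w_{k+1}=\pi^t_i(w_k,F)$. If no $w_{k+1}$ ever equals $v_1$, then each $w_{k+1}$ is a neighbor of $i$ whose only alive incident edge in $G\setminus F'$ is $(i,w_{k+1})$; any perfectly resilient rule at such a degree-one node is forced to reflect the packet back to $i$, so it re-enters $i$ on in-port $w_{k+1}$ and the iteration continues. But $(w_k)_{k\ge 0}$ is confined to the forward trajectory of $v_2$ under $\pi^t_i(\cdot,F)$, which by hypothesis never hits $v_1$, so the packet loops forever and never uses the one productive edge, contradicting perfect resilience. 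Running the symmetric argument with $v_1$ and $v_2$ swapped shows that $v_1$ and $v_2$ reach each other in both directions under forward iteration, placing them on a common cycle of the functional graph of $\pi^t_i(\cdot,F)$, i.e.\ in the same orbit. The main care is to arrange $F'_i=F_i$ so that $i$'s output really is unchanged, and to justify each degree-one reflection purely from perfect resilience; both follow directly from the construction.
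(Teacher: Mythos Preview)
Your proposal is correct and follows essentially the same approach as the paper: construct an auxiliary failure set that leaves $i$'s local view unchanged ($F'_i=F_i$), strands every neighbor of $i$ except one relevant neighbor (which carries the unique route to $t$), and start the packet at another relevant neighbor so that the forced degree-one reflections trap it in the forward iterates of $\pi^t_i(\cdot,F)$.

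Two small remarks. First, the claim $F\subseteq F'$ need not hold: the relevance path $P$ lives in $G\setminus F_i$, not $G\setminus F$, so $P$ may contain edges of $F\setminus F_i$, which then do not belong to $F'$. This is harmless because you never use $F\subseteq F'$; only $F'_i=F_i$ matters (and your construction from $F_i$ rather than $F$ is in fact cleaner than the paper's ``beyond $F$, fail all further links'' phrasing for exactly this reason). Second, the phrase ``by hypothesis never hits $v_1$'' slightly overreaches: negating the orbit condition gives only that \emph{some} direction of reachability fails, not specifically $v_2\to v_1$. Your closing symmetric argument fixes this by establishing both directions outright; the paper instead handles it by an implicit w.l.o.g.\ relabeling of $v_1,v_2$.
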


\begin{proof}
Our proof will be by contradiction.
Assume there is a failure set $F$ and a node $i \in V$ s.t.\ there are relevant neighbors $v_1,v_2$ of $i$ where there is no $k \in \mathbb{N}$ s.t.\ $\left(\pi^t_v(\cdot, F)\right)^k(v_1) = v_2$.
In other words, $v_1,v_2$ are relevant for $i$, but not in the same orbit of $\pi^t_v$ for $F$.

Assume the packet starts on $v_1$, and beyond $F$, we fail all further links, except for those still incident to $i$ and links forming a path from $v_2$ to $t$ without visiting any other neighbors of $i$ except $v_2$.
Such a path must exist as $v_2$ is relevant for $i$.
The only routing choice at $v_1$ now is to route to $i$.  Observe that routing under $\pi_{i}$ will never reach $v_2$, as each packet leaving $i$ will immediately bounce back to $i$ (except if it would go to $v_2$).
Hence, one will not reach $t$, even though there is a path from $i$ to $t$ under $F$, and thus $A$ is not perfectly~resilient.
\end{proof}

We now extend Lemma~\ref{lemma:orbit} to allow matching on the source node as well.
However, we can now no longer enforce that the packet starts on an arbitrary neighbor of the node $i$, and hence obtain slightly modified results.
Essentially, we need to enforce that the packet can reach node $i$ from $s$, and that this taken path is node-disjoint from other paths from $i$ to $t$.

\begin{lemma}\label{lem:s-orbit}
  Let $G=(V,E),A\in A_p(G,s,t)$, where $s \neq i$ is adjacent to two different relevant neighbors $v_1,v_2\in V$ of $i \in V$. For all $F$ where $F \cap \left\{(v_1,i),(v_2,i)\right\} = \emptyset$, i.e., both $v_1,v_2$ are still neighboring $i$, it holds that all relevant neighbors of $i$ under $F$ must be part of the same orbit in $A$'s forwarding function~$\pi^{s,t}_i(\cdot,F)$.
\end{lemma}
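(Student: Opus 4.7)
The plan is to proceed by contradiction, closely paralleling the proof of Lemma~\ref{lemma:orbit}, with the added requirement that the packet be funneled from $s$ to $i$ along a path whose last hop is from $v_1$. Suppose $A\in A_p(G,s,t)$ yet there is a failure set $F$ with $F\cap\{(v_1,i),(v_2,i)\}=\emptyset$ under which $v_1,v_2$ are relevant for $i$ but lie in distinct orbits of $\pi_i^{s,t}(\cdot,F)$. By symmetry in $v_1,v_2$ we may assume $(s,v_1)\notin F$. Since $v_2$ is relevant for $i$ under $F$, pick a path $P$ in $G\setminus F$ from $v_2$ to $t$ that avoids every other neighbor of $i$ (and, without loss of generality, avoids $s$).

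The construction is $F^{*}=F\cup\bigl(E\setminus E_{\mathrm{keep}}\bigr)$ with $E_{\mathrm{keep}}=\{(s,v_1)\}\cup E(i)\cup E(P)$. The crucial property is $F^{*}\cap E(i)=F\cap E(i)$, so $\pi_i^{s,t}(\cdot,F^{*})$ agrees with $\pi_i^{s,t}(\cdot,F)$ and the orbit partition at $i$ is unchanged; in particular $v_1$ and $v_2$ still lie in different orbits. The path $s\to v_1\to i\to v_2\to P\to t$ is present in $G\setminus F^{*}$, so $s$ and $t$ are connected there, and perfect resilience of $A$ demands delivery of the packet to $t$.

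Tracing the forced behavior under $F^{*}$: at $s$ the only non-failed incident edge is $(s,v_1)$, so the packet is forwarded to $v_1$; at $v_1$ the only non-failed incident edges are $(s,v_1)$ and $(v_1,i)$, and returning to $s$ would produce an $s\leftrightarrow v_1$ loop that never reaches $t$ (already a contradiction), so the packet must reach $i$ with in-port $v_1$. At $i$ the algorithm then iterates $\pi_i^{s,t}(\cdot,F)$ through $v_1$'s orbit; every neighbor $w\neq v_2$ of $i$ has $(w,i)$ as its unique non-failed incident edge and must bounce the packet back to $i$, allowing the orbit to be traversed step by step. Since $v_2$ is not in $v_1$'s orbit, the packet never exits $i$ towards $v_2$, never enters $P$, and hence never reaches $t$, contradicting $A\in A_p(G,s,t)$.

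The main obstacle is crafting $F^{*}$ so that it simultaneously (i) preserves $F\cap E(i)$ to retain the orbit structure at $i$, (ii) forces the packet along a unique $s\to v_1\to i$ prefix, and (iii) keeps the witness path $P$ intact to certify $s$–$t$ connectivity. A secondary subtlety is the handling of the symmetric case $(s,v_1)\in F$ (swap $v_1$ and $v_2$) and of corner cases where $s$ coincides with a neighbor of $i$ or lies on $P$; these require small local adjustments to $E_{\mathrm{keep}}$ (for instance, failing an extra non-$E(i)$ edge incident to $s$, or shortening $P$), but they do not disturb the core orbit-cycling contradiction at $i$.
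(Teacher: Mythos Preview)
Your argument has a genuine gap in scope. The lemma asserts that \emph{all} relevant neighbors of $i$ under $F$ lie in one orbit, but your contradiction hypothesis is only that the two distinguished nodes $v_1$ and $v_2$ lie in distinct orbits. What you actually prove is that $v_1$ and $v_2$ share an orbit; this does not rule out a third relevant neighbor $j$ sitting in a different orbit from both. The paper's proof avoids this by arguing, for an \emph{arbitrary} relevant neighbor $j$, that $j$ is reached when iterating $\pi^{s,t}_i(\cdot,F)$ starting from $v_1$ (by routing the packet in through $s\!-\!v_1\!-\!i$ and keeping only $j$'s relevance path to $t$ alive), and symmetrically that every relevant $j$ is reached from $v_2$. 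Combining the two directions puts $v_1$ and $v_2$ on a common cycle and places every relevant $j$ on that cycle. Your construction is exactly right in spirit; you just need to run it with an arbitrary relevant $j$ in the role of the exit node (using $j$'s relevance witness as $P$), once entering via $v_1$ and once via $v_2$.

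There is also a smaller technical slip. You write $F^{*}=F\cup(E\setminus E_{\mathrm{keep}})$ and then invoke ``by symmetry we may assume $(s,v_1)\notin F$''. But the lemma places no restriction on $(s,v_1)$ or $(s,v_2)$; both could lie in $F$, and then neither symmetric choice salvages your $F^{*}$, since $(s,v_1)\in F\subseteq F^{*}$ regardless of $E_{\mathrm{keep}}$. The fix is simply not to insist that $F^{*}\supseteq F$: all you need is $F^{*}\cap E(i)=F\cap E(i)$, so define $F^{*}$ to agree with $F$ on $E(i)$ and to be whatever you like elsewhere (this is exactly what the paper does). With that change, the edge $(s,v_1)$ can always be kept alive.
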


\begin{proof}
We use an analogous argumentation as for the proof of Lemma~\ref{lemma:orbit}, where we will fail links not incident to $i$, s.t.\ $i$ cannot further adapt its forwarding function.
We start with the case where we fail all incident links of $v_1$ except $(v_1,i)$ and $(v_1,s)$.
Next, pick any further relevant neighbor $j \neq v_1$ of $i$, and fail all links in the graph except $1)$ a path from $j$ to $t$, not using further nodes from $V(i)$ (due to $j$ being relevant), $2)$ $(v_1,i),(v_1,s)$, and $3)$ all links still alive incident to $i$.
Then there must be some $k\in \mathbb{N}$ s.t.\ $\left(\pi^{s,t}_i(v_1, F)\right)^k = j$.
The same holds for $v_2$, i.e., every further relevant neighbor of $i$ will be reached by iterating the forwarding function, as that neighbor might be the only connection to the target $t$.
As every relevant neighbor of $i$ (including $v_2$) is in this way reachable from $v_1$, and as every relevant neighbor of $i$ (including $v_1$) is in this way reachable from $v_2$, all relevant neighbors of $i$ are in the same orbit w.r.t.\ $\pi^{s,t}_i(\cdot, F)$.
\end{proof}

Note that the above proof can be directly extended to $s$ being connected to multiple relevant neighbors of $i$, as long as two such neighbors remain non-faulty:
\begin{corollary}\label{corr:s-orbit-ext}
  Let $G=(V,E),A\in A_p(G,s,t)$, where $s \neq  i$ is connected to $k\geq 2$ relevant neighbors $v_1,\ldots,v_k\in V$ of $i \in V$. For all $F$ where $\left| F \cap \left\{(v_1,i),\ldots,(v_k,i)\right\}\right| \leq k-2$ it holds that all relevant neighbors of $i$ under $F$ must be part of the same orbit in $A$'s forwarding function $\pi^{s,t}_i(\cdot,F)$.
\end{corollary}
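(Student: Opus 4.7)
The plan is to reduce directly to Lemma~\ref{lem:s-orbit}. The hypothesis $|F \cap \{(v_1,i),\ldots,(v_k,i)\}| \leq k-2$ guarantees by a simple counting argument that at least two of the $k$ edges between the $v_j$'s and $i$ survive in $G \setminus F$. First I would pick two indices $a \neq b$ with $(v_a,i),(v_b,i) \notin F$. Since by hypothesis $s$ is adjacent to every $v_j$, it is in particular adjacent to both $v_a$ and $v_b$; and since $v_a$ and $v_b$ were relevant neighbors of $i$ with the links $(v_a,i)$ and $(v_b,i)$ still present in $G \setminus F$, Observation~\ref{obs:relevance} (monotonicity of relevance) ensures they remain relevant for $i$ under $F$.

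With these two designated neighbors $v_a, v_b$ in hand, the hypotheses of Lemma~\ref{lem:s-orbit} are satisfied verbatim: $s \neq i$, $s$ is adjacent to the two distinct relevant neighbors $v_a, v_b$ of $i$, and $F \cap \{(v_a,i),(v_b,i)\} = \emptyset$. Applying the lemma directly yields that every relevant neighbor of $i$ under $F$ lies in the same orbit of $\pi^{s,t}_i(\cdot, F)$, which is exactly the conclusion of the corollary.

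I expect no real obstacle here; the only point that requires a moment's care is making sure that ``relevance'' refers to relevance under $F$ (and not only under the empty failure set), so that the conclusion and the invocation of Lemma~\ref{lem:s-orbit} are consistent. This is handled cleanly by Observation~\ref{obs:relevance}. Alternatively, one could simply repeat the construction from the proof of Lemma~\ref{lem:s-orbit} — routing a packet from $s$ across $(s,v_a)$ into $i$ after failing all links outside a chosen $j$-to-$t$ path and the surviving links at $i$, and symmetrically from $(s,v_b)$ — to conclude that every relevant neighbor lies on the forward orbit of both $v_a$ and $v_b$. This is precisely the extension foreshadowed by the remark preceding the corollary, so no new ideas are needed beyond selecting the two surviving anchor neighbors.
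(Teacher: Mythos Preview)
Your proposal is correct and is essentially the paper's own argument made explicit: the paper merely remarks that the proof of Lemma~\ref{lem:s-orbit} ``can be directly extended to $s$ being connected to multiple relevant neighbors of $i$, as long as two such neighbors remain non-faulty,'' and you carry this out by pigeonholing two surviving anchors $v_a,v_b$ and invoking the lemma verbatim. Your extra care with Observation~\ref{obs:relevance} to ensure relevance under $F$ is appropriate and matches the intended reading.
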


Observe that the previous proof arguments rely on the path from the source $s$ to $i$ via some relevant neighbor $v_1$ being node-disjoint from the path from some other relevant neighbor $v_2$ of $i$ to the target $t$.
If the above statement also holds vice-versa, $v_1,v_2$ need to be in the same orbit.
We cast this insight into the following corollary:

	\begin{corollary}\label{corr:s-paths-deg}
	 Let $G=(V,E),A\in A_p(G,s,t)$ with $i \neq s$. Fix any $F$ where $i$ has $k$ relevant neighbors, $k' \geq 2$, denoted $v_1,\ldots,v_{k'}$. In this case there is a path from $s$ to $i$ via $v_x$ which is node-disjoint to a path from $i$ to $t$ via some $v_y$, $y \neq x, 1 \leq y \leq k'$ for each individual $v_x$, $1 \leq x \leq k'$.
	Then, all $v_1,\ldots,v_{k'}$ must be part of the same orbit in $A$'s forwarding function $\pi^{s,t}_i(\cdot,F)$ of $i$.
	\end{corollary}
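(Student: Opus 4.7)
The plan is to generalize the path-trapping construction of Lemma~\ref{lem:s-orbit} and Corollary~\ref{corr:s-orbit-ext}, where direct edges between $s$ and relevant neighbors of $i$ are replaced by the node-disjoint paths provided by the hypothesis. The goal is to show that each pair of relevant neighbors $v_x, v_y$ of $i$ (as supplied by the hypothesis) lies in a common orbit of $\pi_i^{s,t}(\cdot, F)$; iterating this conclusion over the hypothesis for every $v_x$ and using that on the live incident ports of $i$ the forwarding function is a bijection (so orbits partition the relevant neighbors into cycles) then forces $v_1, \ldots, v_{k'}$ to all share a single orbit.

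First, I would fix a pair $v_x, v_y$ together with node-disjoint paths $P_1$ (from $s$ to $i$ through $v_x$) and $P_2$ (from $i$ to $t$ through $v_y$) granted by the hypothesis. I would then construct a failure superset $F' \supseteq F$ by additionally failing every link of $G$ that is neither on $P_1 \cup P_2$ nor incident to $i$. Since no incident link of $i$ is added beyond $F$, one has $F' \cap E(i) = F \cap E(i)$, and hence $\pi_i^{s,t}(\cdot, F') = \pi_i^{s,t}(\cdot, F)$ by the locality of forwarding functions.

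Next I would trace a packet launched at $s$ in $G \setminus F'$. It can only leave $s$ along $P_1$, forcing it to arrive at $i$ via the port corresponding to $v_x$. At $i$, every live neighbor $v_z \notin \{v_x, v_y\}$ retains only the single live incident link $(v_z, i)$ in $G \setminus F'$, so whenever $i$'s forwarding function sends the packet to such a $v_z$, the packet is immediately reflected back to $i$ through the same port. Since $A \in A_p(G, s, t)$ and $s$ is still connected to $t$ through $P_1, i, P_2$, the packet must eventually reach $t$; but the unique outgoing port from $i$ that leads to $t$ in $G \setminus F'$ is $v_y$. It follows that some iterate $\left(\pi_i^{s,t}(\cdot, F)\right)^k(v_x)$ equals $v_y$, placing $v_y$ in the orbit of $v_x$.

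The main obstacle is the bookkeeping for these dead-end neighbors, together with ensuring that the forwarding function at $i$ is unchanged between $F$ and $F'$: both are handled by the two design choices in $F'$, namely that all newly failed links are non-incident to $i$ (preserving $\pi_i^{s,t}(\cdot, F)$) and that each irrelevant neighbor $v_z$ ends up with degree one in $G \setminus F'$ (so its only admissible action is to reflect the packet straight back, and the orbit iteration at $i$ proceeds uninterrupted). Once this is justified for a single pair, the promotion to all $k'$ neighbors follows from applying the hypothesis to each $v_x$ and invoking the equivalence-class structure of the orbits.
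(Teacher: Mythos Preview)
Your approach is exactly the one the paper intends: the corollary is presented as the observation that the path-trapping argument of Lemma~\ref{lem:s-orbit} goes through verbatim once the direct edges $(s,v_1),(s,v_2)$ are replaced by the node-disjoint paths supplied by the hypothesis, and your write-up fleshes this out correctly. Two small repairs are worth flagging. First, you should not insist on $F'\supseteq F$: the hypothesis paths need only live in $G\setminus F_i$ (where $F_i=F\cap E(i)$) and may well traverse links of $F\setminus F_i$; what you actually need---and what suffices for $\pi_i^{s,t}(\cdot,F')=\pi_i^{s,t}(\cdot,F)$---is only $F'\cap E(i)=F\cap E(i)$, exactly as in the proof of Lemma~\ref{lem:s-orbit}. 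Second, your closing appeal to the forwarding function being a bijection on the live ports is not something you have established, and for general $k'\geq 3$ the ``each $v_x$ reaches some $v_{y(x)}$'' conclusion does not by itself collapse all relevant neighbors into one orbit; however, for $k'=2$ (the only case the paper ever invokes, in Theorem~\ref{thm:planar-s}) this step is unnecessary, since running the trapping construction once with entry $v_1$ and once with entry $v_2$ already gives reachability in both directions and hence a common orbit.
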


We now analyze our results in the context of prior work, which considered forwarding without the source.
\smallskip

\noindent\textbf{Relation to the work of Feigenbaum et al.~\cite{DBLP:journals/corr/abs-1207-3732}.}
Note that Lemma~\ref{lemma:orbit} implies corollaries reminiscent to two results presented by Feigenbaum et al.~\cite[Lemma 4.1, 4.2]{DBLP:journals/corr/abs-1207-3732} namely that
\begin{itemize}\setlength\itemsep{0em}
	\item if all neighbors are relevant, the forwarding function 
	%at each node 
	must be a cyclic permutation, cf.~\cite[Lemma 4.2]{DBLP:journals/corr/abs-1207-3732},
	\item if a node has only two neighbors $v_1,v_2$ and both are relevant, then a packet from $v_1$ must be forwarded to $v_2$ and vice versa, cf.~\cite[Lemma 4.1]{DBLP:journals/corr/abs-1207-3732}.
\end{itemize}
Feigenbaum et al.\ use these lemmas to prove the impossibility of perfect resilience for a specific graph.
Their lemmas implicitly define a version of relevance based on link-connectivity, whereas we are interested in a node version, namely that a neighbor $v$ of $i$ is only relevant (from the viewpoint $i$) if there is a routing from $v$ to $t$ that does not use any other neighbor of $i$.
As such, their assumptions are weaker and hence their results are stronger.
However, both Lemmas 4.1 and 4.2 from Feigenbaum et al.~\cite{DBLP:journals/corr/abs-1207-3732} do not apply to all graphs, beyond the construction in their paper.
For a counterexample, consider the graph in Figure~\ref{fig:counter}.
If $u$ sends the packet to $v$, then both Lemma 4.1, 4.2 from Feigenbaum et al.\ state that the packet \emph{must not} be forwarded back to $u$, i.e., for perfect resilience, it \emph{must} next be sent to $w$, and from there to $u$.
Yet, $v$ could also forward the packet back to $u$, and then to $x$, a contradiction to their claim: the only two possible neighbors of $t$ are $x$ and $v$, hence visiting both~suffices.
%
%Notwithstanding,

\begin{figure}[b]
  \begin{center}

  \begin{tikzpicture}[shorten >=1pt]
  \tikzstyle{vertex}=[circle,fill=black!25,minimum size=17pt,inner sep=0pt]
  \tikzstyle{edge}=[thick,black,--]

	\node[vertex] (t) at (0,0) {$t$};
	\node[vertex] (x) at (2,0) {$x$};
	\node[vertex] (u) at (4,0) {$u$};
	\node[vertex] (v) at (6,0) {$v$};
	\node[vertex] (w) at (7.9,0) {$w$};

	\draw (t) to (x);
	\draw (x) to (u);
	\draw (u) to (v);
	\draw (v) to (w);

	\draw[dashed,red] (t) to[out=340,in=200] (v);
	\draw (u) to[out=20,in=160] (w);

	 %\draw[thick, red, ->] (v1) to[out=0,in=0] (v3) to[out=180,in=180](v4) to [out=180, in =180] ([xshift=-.4cm] v1);
\end{tikzpicture}
%\vspace{-9mm}
  \end{center}
\caption{Graph with target $t$ where the dashed red link has failed. Assume the packet starts at $u$ and is forwarded to $v$. The node $v$ can bounce the packet back.
% to $u$, as only $u$ is relevant.
}
	\label{fig:counter}
	%\vspace{-2mm}
\end{figure}
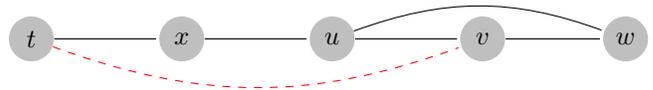

\section{A Minor Perspective on Perfect Resilience}
\label{sec:minor}

We now study the relationship between graph minors and perfect resilience algorithms. As we will see, this relationship will be relevant for the design of local resilient algorithms as well. We will first observe that a perfectly resilient algorithm is also perfectly resilient on subgraphs and contractions of its original graph (Theorems~\ref{result:subset-s} and~\ref{cont2}).
Then, since subsetting and contracting are the two fundamental operations in the minor relationship, we deduce that the existence of a perfectly resilient algorithm on a graph $G$ implies its existence on any minor of $G$ (Theorem~\ref{result:minor}).

As an application of the above results, we prove in \S\ref{subsec:nonplanar} that if a graph is not planar, it does not allow for perfect resilience (Theorem~\ref{thm:planar-wagner}), as both $K_5$ and $K_{3,3}$ do not allow for perfect resilience (Lemmas~\ref{nok5} and~\ref{nok3}).

\subsection{Contraction and Minors}
We first recall that a graph $G'$ is a minor of the graph $G$ if we can obtain $G'$ from $G$ by means of contraction (see Definition~\ref{def:contraction}) and by removing links and nodes.
 We hence start by retaining perfect resilience on subgraphs:

\begin{theorem}[Subset Stability]\label{result:subset-s}
\label{subset}
If a perfectly resilient forwarding pattern exists for a graph $G$ and target $t$ then there also exists a perfectly resilient pattern for all subgraphs of $G$ when they contain $t$,
  $A_p(G,t) \neq\emptyset \rightarrow % \for ????
	G' \subset G, t \in V(G'), A_p(G',t) \neq \emptyset$
\end{theorem}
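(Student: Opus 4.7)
The plan is to simulate the given perfectly resilient pattern $\pi^t$ of $G$ on the subgraph $G'$ by treating everything missing in $G'$ as a permanent failure on $G$. Concretely, fix $G'\subset G$ with $t\in V(G')$, and define for every $v\in V(G')$ the forwarding function
\[
  \pi'^t_v(e, F'\cap E_{G'}(v)) \;:=\; \pi^t_v\bigl(e,\; F'\cap E_{G'}(v)\,\cup\,(E_G(v)\setminus E_{G'}(v))\bigr),
\]
where $F'\subseteq E(G')$ is the actual failure set on $G'$ and $e\in E_{G'}(v)\cup\{\bot\}$. Note that $E_G(v)\setminus E_{G'}(v)$ collects both the links of $v$ present in $G$ but absent in $G'$ and the links of $v$ leading to nodes outside $V(G')$.

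First I would check that $\pi'^t$ is well defined, i.e.\ that its output is always a non-failed link of $G'$. Since $\pi^t$ is assumed to route deterministically, its output on a set of locally failed links must avoid that set; the lifted failure set $F_v := F'\cap E_{G'}(v)\,\cup\,(E_G(v)\setminus E_{G'}(v))$ contains all links incident to $v$ that are either failed in $F'$ or missing from $G'$, so $\pi^t_v(e,F_v)$ must lie in $E_{G'}(v)\setminus F'$, exactly what we need.

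Next I would establish perfect resilience of $\pi'^t$ on $G'$. Given any $F'\subseteq E(G')$, set $F := F'\cup(E(G)\setminus E(G'))$. Two observations drive the argument: (i) in $G\setminus F$, the nodes of $V(G)\setminus V(G')$ are isolated (all their incident links lie in $E(G)\setminus E(G')\subseteq F$), so the connected component of $t$ in $G\setminus F$ coincides with that in $G'\setminus F'$; and (ii) for every $v\in V(G')$ one has $F\cap E_G(v)=F_v$, hence $\pi'^t_v(\cdot,F'\cap E_{G'}(v))$ and $\pi^t_v(\cdot,F\cap E_G(v))$ produce identical forwarding choices. Consequently, for any source in the $t$-component of $G'\setminus F'$, the packet trajectory under $\pi'^t$ with failure set $F'$ is bit-for-bit the same as the trajectory under $\pi^t$ with failure set $F$, and the latter reaches $t$ by perfect resilience of $\pi^t$ on $G$.

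The only mild subtlety I expect is the node-removal case: one must argue that deleting a node $u\in V(G)\setminus V(G')$ can be faithfully simulated by failing all of $u$'s incident links in $G$, so that the routing on $G$ never traverses $u$ and the $t$-component in $G\setminus F$ equals the one in $G'\setminus F'$. This is exactly observation (i) above, and it is the key to reducing the general subgraph case to the simpler link-deletion case.
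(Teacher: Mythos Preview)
Your approach is essentially identical to the paper's: define the subgraph pattern by running the original pattern with the missing links (and links to deleted nodes) added to the failure set, then transfer correctness via the identity $G'\setminus F' \cong G\setminus(F'\cup(E(G)\setminus E(G')))$ on the $t$-component. You are in fact more careful than the paper---you spell out well-definedness of the codomain and the node-deletion-as-link-failure reduction, both of which the paper leaves implicit---so the proposal is correct and matches the intended argument.
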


The proof idea is as follows: we take the original algorithm $A$, and fail all links needed to obtain the desired subgraph $G'$.
Then, we can simulate a perfectly resilient algorithm on $G'$ via $A$, where $A$ also needs to function correctly when further links fail in  subgraph~$G'$.

\begin{proof}
	Let $A\in A_p(G,t)$. Let $\bar{G'}= G\setminus G'$.
	Let $A'$ be a fast rerouting algorithm on $G'$ s.t.\  $\forall F'\subset E(G'), A'(F')=A(F' \cup E(\bar{G'}))$: algorithm   $A'$ simply executes $A$ in a context where all the parts of $G$  that are not in $G'$, i.e., in $\bar{G'}$ have also failed.
  Since $A \in A(G,t)$, we have that if $v,t$ are connected in $G\setminus(F'  \cup E(\bar{G'}))$ then $A$ succeeds in routing.
  Since $\forall v,t \in  V(G')$, if $v,t$ are connected in $G'\setminus F'$ then $(v,t)$  is connected in $G\setminus (F' \cup E(\bar{G'} ))$, and we deduce that $A$ achieves a perfect resilience on $G'$.
\end{proof}

Note that the above proof arguments hold analogously 
when we can match on the source $s$, where perfect resilience only needs to hold when $s,t$ remain in the same connected component.

\begin{corollary}[$s$ Subset Stability]\label{result:subset-source}
\label{subset-source}
  $A_p(G,s,t) \neq\emptyset \rightarrow 	G' \subset G, s,t \in V(G'), A_p(G',s,t) \neq \emptyset$
\end{corollary}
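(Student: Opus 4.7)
The plan is a direct lift of the proof of Theorem~\ref{result:subset-s} to the source-matching model. Given $A \in A_p(G,s,t)$ and a subgraph $G' \subseteq G$ with $s,t \in V(G')$, I would define an algorithm $A'$ on $G'$ by the same simulation trick: for every failure set $F' \subseteq E(G')$, set $A'(F') := A(F' \cup E(\bar{G'}))$, where $\bar{G'} = G \setminus G'$. Concretely, each node $v \in V(G')$ reuses the source-dependent forwarding function $\pi_v^{s,t}$ of $A$, but fed the enlarged incident failure set. Because the source $s$ is identical in $G$ and in $G'$, no additional bookkeeping is needed when $A$ matches on the source.

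For correctness I would verify two facts. First, the simulation is well-defined on $G'$: at every node $v \in V(G')$, all edges in $E_G(v) \setminus E_{G'}(v)$ are declared failed, so the forwarding function $\pi_v^{s,t}$ must output an edge in $E_{G'}(v) \setminus F'$, which is a legal out-port in $G'$. Second, whenever $s$ and $t$ are connected in $G' \setminus F'$, they are also connected in $G \setminus (F' \cup E(\bar{G'}))$, since any $s$-$t$ path in the former is also a valid $s$-$t$ path in the latter. Hence the perfect resilience of $A$ on $G$ under the extended failure set guarantees that the packet emitted at $s$ reaches $t$, and because the trajectory never traverses an edge outside $G' \setminus F'$, the simulated execution of $A'$ on $G'$ reaches $t$ as well.

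I do not expect a serious obstacle: the extension to source matching is essentially syntactic, as $s$ is fixed throughout any routing execution and is unaffected by the simulation. The only real check — that the local failure information seen by $A$ at each $v \in V(G')$ exactly forbids the use of any edge not in $G' \setminus F'$ — is immediate from $F' \subseteq E(G')$ together with the definition of $\bar{G'}$, mirroring the original proof of Theorem~\ref{result:subset-s} verbatim.
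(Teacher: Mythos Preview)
Your proposal is correct and follows exactly the approach the paper takes: the paper simply remarks that the proof of Theorem~\ref{result:subset-s} carries over verbatim to the source-matching model, since perfect resilience now only needs to hold when $s$ and $t$ remain connected. Your write-up in fact spells out the simulation and the connectivity check more explicitly than the paper does, but the argument is identical.
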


After covering subgraphs, we next cover node contractions.
Abstractly, a node contraction merges two neighboring nodes, while retaining their joint connectivity to the graph.

\begin{Definition}[Node Contraction]\label{def:contraction}
  Let $G=(V,E)$ and $i,j \in V$ be two neighboring nodes, 
	i.e., $(i,j)\in E(G)$.
	Let $G'$ be the $(i,j)$-contracted graph of $G$ s.t.:
  $V(G')=V(G)\setminus \{j\}$ and $E(G')=E(G)\cup \{(a,i), \forall a \in V_G(j)
  \} \setminus \{(a,j), \forall a\in V_G(j) \}$.
  We denote by $\cont(G)$ the set of all possible contracted graphs of $G$.
\end{Definition}

Note that one can define the forwarding pattern also independent of the destination (and/or source) as a port mapping, where a packet arriving at an in-port gets forwarded to some out-port, and hence might talk simply about forwarding patterns $\pi$ if the context is clear.
We next define the natural emulation of a
$(i,j)$-contraction with respect to such port mappings:

        \begin{Definition}[Mapping Contraction]
          Let $i,j$ be two neighboring nodes without common neighbors, $V(i)\cap V(j)= \emptyset$, and let
          $\pi_i$ and $\pi_j$ be their
          forwarding patterns. We define $\pi_{ij}:
          V(i)\cup V(j) \cup \{\bot\}\mapsto V(i)\cup V(j)$ 
					as the \emph{contracted}
          mapping where on $i$'s side $\forall v\in V(i)$, $\pi_{ij}(v)=\pi_i(v)$ unless
          $\pi_i(v)=j$; in the latter case $\pi_{ij}(v)=\pi_j(\pi_i(v))=\pi_{j}(i)$, unless
          $\pi_j(i)=i$ in which case  $\pi_{ij}(v)=\pi_i(j)$, unless
          $\pi_i(j)=j$; in this last case, $\pi_{ij}(v)=\perp$. %TODO{???}. 
          For $j$'s
          side, proceed symmetrically accordingly. 
					By convention, we set $\pi_{ij}(\perp)=\pi_i(\perp)$
        \end{Definition}

Note that for the above two definitions, matching on the source is immaterial, and hence the definitions are identical for forwarding patterns with and without the source.
For the following Theorem~\ref{cont2}, we prove that they preserve resilience when not matching on the source. 
In the proof of the result we show that the routing of a packet, starting at some node, towards the target $t$, can be directly transferred to the contracted case, by simulating the original algorithm.
For this simulation, it is again immaterial if we match on the source or not, described in Observation~\ref{obs:stab}.

\begin{theorem}[Contraction Stability]\label{cont2}
  Let $G=(V,E), i,j \in V$ be two neighboring nodes, and
	let $G'$ be the  $(i,j)$-contraction of $G$. 
  Given $A \in A_p(G,t)$, a perfectly resilient algorithm on $G$,
  then the $(i,j)$-contracted algorithm of $A$ is perfectly resilient on $G'$.
\end{theorem}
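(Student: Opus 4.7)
The plan is to reduce perfect resilience of the contracted algorithm $A'$ on $G'$ to perfect resilience of $A$ on $G$ by exhibiting, for every failure set $F' \subseteq E(G')$, a corresponding failure set $F \subseteq E(G)$ and showing that the trajectory of $A'$ on $(G',F')$ mirrors the trajectory of $A$ on $(G,F)$ step by step. First I would fix any $F'$ and any source $v$ with $v$ and $t$ connected in $G' \setminus F'$. Using the standing assumption $V_G(i) \cap V_G(j) = \emptyset$, each edge of $G'$ incident to the super-node $ij$ has a unique preimage in $E(G)$ incident to exactly one of $i$ or $j$, and every other edge of $G'$ is identified with the same edge in $G$. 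I would define $F$ as this preimage of $F'$, explicitly leaving the link $(i,j) \in E(G)$ unfailed. A direct lifting argument then shows that any $v$-$t$ path in $G' \setminus F'$ induces a $v$-$t$ path in $G \setminus F$, by replacing each visit to $ij$ by a visit to $i$ and/or $j$ and traversing $(i,j)$ only when needed to switch sides; hence $v$ and $t$ remain connected in $G \setminus F$.

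Next I would establish the step-by-step simulation. At any node $w \notin \{i,j\}$ the forwarding function of $A'$ coincides by construction with $\pi^t_w$, so both algorithms take the same step. When the packet enters $ij$ from a port whose preimage lies in $V_G(i)$, the corresponding packet of $A$ enters $i$ from the same port, and its subsequent behavior is exactly the case distinction built into the definition of $\pi_{ij}$: either $\pi^t_i$ forwards it directly to a non-$j$ neighbor of $i$; or it sends it to $j$, after which $\pi^t_j$ either exits via some neighbor of $j$ or bounces back to $i$; in the latter case $\pi^t_i$ either exits via some neighbor of $i$ or bounces again to $j$. The first three sub-cases line up verbatim with the exit port returned by $\pi_{ij}$, while the fourth corresponds to $A$ looping forever between $i$ and $j$ and is mirrored by $\pi_{ij}$ returning $\bot$. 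The symmetric argument handles entries from the $V_G(j)$ side, and packets originating at $ij$ are covered by $\pi_{ij}(\bot) = \pi^t_i(\bot)$.

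To conclude, suppose for contradiction that $A'$ fails to deliver a packet from $v$ to $t$ in $(G',F')$, either by running forever or by producing $\bot$. The simulation above implies that $A$ also fails to deliver a packet from $v$ to $t$ in $(G,F)$, which contradicts $A \in A_p(G,t)$ since I have already shown that $v$ and $t$ are connected in $G \setminus F$. The main obstacle I anticipate is not the lifting of failures but the bookkeeping around the $\bot$ case: I must show that the infinite $i$-to-$j$ bouncing of $A$ is cleanly quotiented to a single failure step of $A'$, and that this failure mode cannot be reached under the connectivity hypothesis. Both points are handled uniformly by the connectivity lift in the first paragraph, rather than by any direct inspection of the local forwarding tables at $i$ and $j$.
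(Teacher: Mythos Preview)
Your approach is essentially the paper's: lift connectivity from $G'\setminus F'$ to $G$ under a suitable failure set, then argue that the trajectory of $A'$ on $G'$ is exactly the trajectory of $A$ on $G$ with the $i\leftrightarrow j$ bounces collapsed. The paper packages the second part as a separate induction-on-hops lemma, but the content is the same case analysis you sketch, including the treatment of the $\bot$ case.

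The one genuine gap is your ``standing assumption'' $V_G(i)\cap V_G(j)=\emptyset$. The theorem as stated does not assume this, and the paper's definition of the contracted algorithm handles common neighbors explicitly: it sets $R=\{(j,r):r\in V_G(i)\cap V_G(j)\}$ and runs the entire simulation against the failure set $F'\cup R$ on $G$ rather than against the bare preimage you construct. In effect, inside the contraction $j$'s links to common neighbors are \emph{always} treated as failed; this is precisely what makes the mapping contraction $\pi_{ij}$ well-defined and what makes your ``unique preimage'' claim true. Without folding in $R$, your lift is ambiguous whenever $r\in V_G(i)\cap V_G(j)$: the single edge $(r,ij)$ of $G'$ has two candidate preimages $(r,i)$ and $(r,j)$ in $G$, and a failure on one but not the other desynchronizes the two trajectories at $r$. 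So your argument is correct for the disjoint-neighborhood case, but to cover the theorem in full generality you need to add $R$ to the lifted failure set, exactly as the paper does.
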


\ifExtended
The full proof is deferred to \cref{sec:cont1}.
\fi
\ifFull
 We will utilize the following lemma for the proof: % of~Theorem~\ref{cont2}:

\begin{lemma}[Algorithm Transfer]
\label{cont1}
Given $G$ and $(i,j)\in E(G)$ let $G'$ be the corresponding $(i,j)$-contraction.
	Let $R=\{(j,r),r\in  V(i)\cap V(j)\}$.
	Let $A:F\mapsto \{\pi^t_v(\cdot,F),v\in V\}$.
	Define $A': F\mapsto  \{\pi^t_v(\cdot,F),v\in V'\}$ to be the \emph{$(i,j)$-contracted algorithm} of $A$ as follows:
  \begin{itemize}
  \item Case \RomanNumeralCaps{1}: Identical behavior on unaffected nodes, $\forall v\in V', v\neq \{i\}$, $\pi'^t_v(\cdot,F)=\pi^t_v(\cdot,F\cup R)$.
  \item Case \RomanNumeralCaps{2}: Replace $i$'s forwarding by the contracted algorithm. Let $\pi'^t_i(\cdot,F)=\pi^t_{\{i,j\}}(\cdot,F\cup R)$.
  \item Case \RomanNumeralCaps{3}: Replace $j$'s port by $i$'s port on $j$'s neighbors
    forwarding: $\forall k\in V(j),\exists v \text{~s.t.~} \pi^t_k(v,F \cup R)=j
    \Rightarrow \pi'^t_k(v,F)=i$,  $\forall k\in V(j),\exists v \text{~s.t.~} \pi_k^t(j,F \cup R)=v
    \Rightarrow \pi'^t_k(j,F)=v$.
  \end{itemize}
Let $P$ (respectively $P'$) be the sequence of links traversed using $A$ by a
packet from $s$ to $t$ under $F\cup R$ (respectively, traversed using $A$'
under $F$). 
Let $Q$ be a rewriting of $P$ in which we replace every occurence of $j$
by $i$. And let $Q'$ be the rewriting of $Q$ in which we remove every
occurrence of $(i,i)$. We have $Q'=P'$.
\end{lemma}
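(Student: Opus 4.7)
The plan is to proceed by induction on the packet's hop count, maintaining the invariant that the state of $P'$ (running $A'$ on $G'$ under $F$) after each hop corresponds, under the identification $j\leftrightarrow i$, to the state of $P$ (running $A$ on $G$ under $F\cup R$) after some (possibly larger) number of hops, modulo a possible ``in-progress'' bounce inside $\{i,j\}$ which will be collapsed by the rewriting $j\to i$ and the subsequent deletion of $(i,i)$-steps. A central preliminary observation is that adding $R$ to the failure set prevents $j$ from forwarding to any common neighbor of $i$ and $j$: the effective forwarding functions $\pi^t_i(\cdot,F\cup R)$ and $\pi^t_j(\cdot,F\cup R)$ satisfy the precondition $V(i)\cap V(j)=\emptyset$ required to apply the Mapping Contraction definition.

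\textbf{Base case.} Both packets start at $s$ with in-port $\bot$. If $s\notin\{i,j\}$, Case \RomanNumeralCaps{1} gives identical first out-ports. If $s\in\{i,j\}$, Case \RomanNumeralCaps{2} together with the convention $\pi_{ij}(\bot)=\pi_i(\bot)$ yields the same first step.

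\textbf{Inductive step.} Assume the invariant and consider the current node $v'$ of $P'$, arrived from $u'$, corresponding to $(v,u)$ in $P$. \emph{Case (a)} $v=v'\notin\{i,j\}$: Case \RomanNumeralCaps{1} gives $\pi'^t_v(u',F)=\pi^t_v(u,F\cup R)$. If the output is not $j$, both traces advance identically; otherwise Case \RomanNumeralCaps{3} makes $P'$ advance to $i$ while $P$ advances to $j$, which agree under $j\leftrightarrow i$. \emph{Case (b)} $v=v'=i$: $P'$ advances by $\pi_{ij}(u',F\cup R)$ (Case \RomanNumeralCaps{2}). Let $x=\pi^t_i(u,F\cup R)$. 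If $x\neq j$, the definition of $\pi_{ij}$ also returns $x$, and both traces advance to $x$. If $x=j$, the packet in $P$ crosses $(i,j)$ and bounces; since $R$ is failed the only possible moves during the bounce are $(i,j)$ and $(j,i)$, so the cascaded clauses of $\pi_{ij}$ (returning $\pi_j(i)$ if $\neq i$, else $\pi_i(j)$ if $\neq j$, else $\bot$) capture precisely the exit vertex of the bounce sequence; the intervening $(i,j),(j,i),\ldots$ links of $P$ all become $(i,i)$ after the $j\to i$ rewriting and are removed, leaving a single hop from $i$ to the exit vertex that matches the advance of $P'$. \emph{Case (c)} $v'=i,v=j$: symmetric to (b), with the roles of $i$ and $j$ swapped; note that by the relabeling of Case \RomanNumeralCaps{3} the in-port $u'$ at $i$ in $G'$ is the image of the in-port $u$ at $j$ in $G$, so $\pi_{ij}$ is applied to the matching argument.

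\textbf{Main obstacle.} The hardest part is Case (b)/(c): verifying that the three-level cascade in the Mapping Contraction definition exhausts every possible bounce pattern of $\pi^t_i(\cdot,F\cup R)$ and $\pi^t_j(\cdot,F\cup R)$ at $\{i,j\}$. This is exactly where the role of $R$ is decisive: without failing $R$, $j$ could escape the $i$--$j$ alternation via a common neighbor of $i$ and $j$, an edge that has no counterpart in $G'$, and $\pi_{ij}$ would no longer model the true behavior. With $R$ failed, the bounce is forced to alternate strictly between $i$ and $j$, and the cascade of $\pi_{ij}$ covers every exit option in at most three steps. Iterating the invariant until the packet reaches $t$, and observing that all bounces within $\{i,j\}$ along $P$ get collapsed by the rewriting, gives $Q'=P'$ link by link.
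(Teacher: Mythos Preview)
Your approach is essentially the same as the paper's: induction on the hop index, matching $P'$ step by step against the collapsed trace, with a case split according to whether the current node is the contracted node, a neighbor of $j$, or neither; your observation that failing $R$ enforces the precondition $V(i)\cap V(j)=\emptyset$ of the Mapping Contraction is exactly what the paper uses in its subcase~2.2. The paper makes one distinction you leave implicit---when $v\notin\{i,j\}$ but the in-port came from $j$ in $P$ (hence from $i$ in $P'$), the equality $\pi'^t_v(i,F)=\pi^t_v(j,F\cup R)$ comes from Case~\RomanNumeralCaps{3} (in-port relabeling) rather than Case~\RomanNumeralCaps{1}---but otherwise the arguments coincide.
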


\begin{proof}[Lemma~\ref{cont1}]
  We proceed by induction on the $k$ first hops  $Q[1..k]$ and
  $P'[1..k]$ of the sequences. When the context is clear, we write
  $\pi_v$ for the forwarding function of node $v$ using $A$ in context $F\cup R$,
  and $\pi'_v$ the forwarding function of $v$ using $A'$ in context $F$.

\emph{  Base case}: let $Q[1]=(s,q),P'[1]=(s,a)$. We need to prove
  $q=a$. $1)$: if $s\neq i$, $\pi'_s(\bot)=\pi_s(\bot)$ as defined in case
  \RomanNumeralCaps{1}. $2)$: if $s=i$, the packet starts in the
  $i,j$ contraction. Since $a=\pi'_s(\bot)=\pi_{i,j}(\bot)=\pi_i(\bot)$
  (Case \RomanNumeralCaps{2}),
  we deduce that $P$ describes possibly some hops between $i$ and $j$, and
  then $q$. Since $(i,j)$ transitions are rewritten $(i,i)$ in $Q$ and
  removed in $R$, the first node that is not $i$ nor $j$ to appear in
  $P$ must be $a$. Thus $q=a$.

\emph{Induction step:} Assume the following statement holds for any
$k'\leq k$: $Q[k']=P'[k']$. We prove that necessarily $Q[k]=P'[k]$.
Let $Q[k-1]=(v_-,v),P'[k-1]=(v_-,v),Q[k]=(v,q), P'[k]=(v,a)$. We need to
show that $a=q$. We again start by the simplest case $1)$: $v\not\in V'(i)
\cup \{i\}$: since $\pi_v=\pi'_v$ by Case \RomanNumeralCaps{1},
$q=\pi_v(v_-)=\pi'_v(v_-)=a$.
$2)$: if $v\in V'(i)$. We need to look at $v_-$. If $2.1)$: $v_-\neq i$ we
again directly use Case \RomanNumeralCaps{1}:
$\pi_v(v_-)=\pi'_v(v_-)$. If $2.2)$: $v_-= i$ the packet just left the
$(i,j)$ contraction. In $P$, $v_-$ can correspond to either $i$ or $j$
in $P$. If it corresponds to $i$, use Case \RomanNumeralCaps{1}. If it
was a $j$, since in Case \RomanNumeralCaps{3} we replaced $j$'s
connections (in $A$) by $i'$s connections (in $A'$), and since no node
$v$ is both a neighbor of $i$ and $j$ in $F\cup R$, this mapping
uniquely applies.
$3)$: if $v=i$: the packet is in the $(i,j)$
contraction. By definition of $Q$, necessarily $v_-\neq i$ or
$j$. Since $\pi'_i(v_-)=\pi_{ij}(v_-)$ we deduce $q=a$.
\end{proof}

\begin{proof}[Theorem~\ref{cont2}]
  %Let $C=V(i)\cap V(j)$. 
	Let $R=\{(j,r),r\in V(i)\cap V(j)\}$. 
	Let $A'$ be the $(i,j)$ contracted algorithm of $A$.
  Let $F$ be a set of link failures of $G'$. 
	Observe that if $s,t$ are connected in $G'\setminus F$
	then $s,t$ are connected in $G\setminus F\cup R$.
  Let $s,t$ be connected in $G'\setminus F$.
	Let $p_{A'}$ (resp. $p_{A}$) be the path produced by $A'$ on $G'\setminus F$ (resp.\ $A$ on $G\setminus (F \cup R)$) from $s$ to $t$.
  As $A\in A_p$ and $s,t$ are connected in $G\setminus (F\cup R)$,
  then $p_A$ is finite and ends up in $t$. Therefore,
  Lemma~\ref{cont1} implies $p_{A'}$ is finite and ends in $t$
  too: $A'$ succeeds.
\end{proof}
\fi

\begin{observation}[$s$ Contraction Stability]\label{obs:stab}
The results of Theorem~\ref{cont2} also apply when matching on the source, i.e., in particular a perfectly resilient algorithm $A\in A_p(G,s,t)$) implies
a perfectly resilient algorithm on a contracted graph $G'$, i.e., $A_p(G',s,t)\neq\emptyset$).
\end{observation}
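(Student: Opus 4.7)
The plan is to re-use the proof machinery of Theorem~\ref{cont2} essentially verbatim, treating the source $s$ as an additional passive parameter that rides along with each forwarding function. The first step is to observe that both the Node Contraction (Definition~\ref{def:contraction}) and the Mapping Contraction constructions are purely structural: the former only manipulates vertices and incident edges, and the latter rewires port-to-port mappings between $V(i)\cup V(j)\cup\{\bot\}$ and $V(i)\cup V(j)$. Neither construction inspects the source $s$ or target $t$. Therefore, given $A\in A_p(G,s,t)$, applying the $(i,j)$-contraction recipe of Lemma~\ref{cont1} to the source-matched forwarding functions $\pi^{s,t}_v(\cdot,F\cup R)$ produces a well-defined contracted algorithm $A'$ on $G'$, whose forwarding functions we denote $\pi'^{s,t}_v(\cdot,F)$.

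Next I would re-derive the Algorithm Transfer Lemma in the source-matched setting. The inductive argument on the $k$-th hop of the packet's trajectory tracks only the current node and the in-port from which the packet arrived, then compares the next hop produced by $A$ on $G\setminus(F\cup R)$ to the next hop produced by $A'$ on $G'\setminus F$. Since the source $s$ is fixed throughout the packet's journey and enters only as an additional index on the forwarding function, every case in the original induction (Cases \RomanNumeralCaps{1}--\RomanNumeralCaps{3}) carries over unchanged upon substituting $\pi^{s,t}_v$ for $\pi^{t}_v$ and $\pi'^{s,t}_v$ for $\pi'^{t}_v$. The conclusion is identical: the trajectory under $A'$ on $G'\setminus F$ starting from $s$ equals the trajectory under $A$ on $G\setminus(F\cup R)$ starting from $s$, after renaming $j\mapsto i$ and deleting the resulting $(i,i)$ self-hops.

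The final step mirrors the conclusion of Theorem~\ref{cont2}: fix any $F\subseteq E(G')$ and suppose $s,t$ are connected in $G'\setminus F$. Since $G'$ is the $(i,j)$-contraction of $G$ and contraction preserves connectivity of surviving node pairs, $s$ and $t$ are also connected in $G\setminus(F\cup R)$. Hence $A\in A_p(G,s,t)$ routes the packet from $s$ to $t$ in $G\setminus(F\cup R)$, and by the source-matched Algorithm Transfer Lemma, $A'$ likewise routes the packet from $s$ to $t$ in $G'\setminus F$, yielding $A'\in A_p(G',s,t)$.

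I do not expect a substantive mathematical obstacle here; the only point requiring care is the implicit assumption $s,t\in V(G')$, i.e., $s,t\neq j$, so that the contraction does not merge the source or the target with another node. Under this assumption the extension is wholly routine, which is precisely why the paper packages the result as an observation rather than as a separate theorem with its own proof.
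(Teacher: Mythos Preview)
Your proposal is correct and mirrors the paper's own reasoning: the paper does not give a separate proof but simply remarks that the contraction definitions and the simulation argument of Lemma~\ref{cont1}/Theorem~\ref{cont2} are immaterial to whether the forwarding functions carry an extra source index, which is exactly the point you spell out in detail. Your caveat about $s,t\neq j$ is a fair technical footnote that the paper leaves implicit.
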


We next combine Theorem~\ref{result:subset-s} (subsetting) and Theorem~\ref{cont2} (contraction) to obtain the corresponding result for minors.
In other words, if $G$ permits a   perfectly resilient scheme, so do its minors:

\begin{theorem}[Minor Stability]\label{result:minor}
\label{minorstab}
Given $G, G'$,  $G'$ a minor of $G$, it holds that %
$A_p(G,t) \neq \emptyset \Rightarrow A_p(G',t)\neq \emptyset$. 
\end{theorem}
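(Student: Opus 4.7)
\medskip

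\noindent\textbf{Proof plan.} The plan is to reduce the minor statement to the two stability results already established: subgraph stability (Theorem~\ref{result:subset-s}) and contraction stability (Theorem~\ref{cont2}). Recall that $G'$ is a minor of $G$ iff $G'$ can be obtained from $G$ by a finite sequence of operations each of which is either (i) deletion of a link, (ii) deletion of a node (together with its incident links), or (iii) contraction of an edge. Equivalently, any minor can be reached by first performing all subsettings to obtain an intermediate subgraph $H \subseteq G$, and then performing a sequence of edge contractions on $H$ to obtain $G'$; alternatively one can simply interleave the operations and induct on their total number.

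I would proceed by induction on the number $k$ of minor operations needed to transform $G$ into $G'$. The base case $k=0$ is trivial since $G' = G$ and $A_p(G,t) \neq \emptyset$ by hypothesis. For the inductive step, assume the statement for all minors reachable in $k$ operations, and suppose $G'$ is reachable in $k+1$ operations. Let $G''$ be the graph obtained after the first $k$ operations, so $G'$ is obtained from $G''$ by a single operation; by the induction hypothesis $A_p(G'',t) \neq \emptyset$ (assuming, as we may, that $t$ is preserved by each operation — link deletion never removes $t$, node deletion only removes nodes distinct from $t$, and in an edge contraction involving $t$ we keep the identifier $t$ for the contracted supernode). Then:

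\begin{itemize}\setlength\itemsep{0em}
\item If $G'$ is obtained from $G''$ by deleting a link or a non-$t$ node, then $G' \subset G''$ with $t \in V(G')$, and Theorem~\ref{result:subset-s} yields $A_p(G',t) \neq \emptyset$.
\item If $G'$ is obtained from $G''$ by contracting a link $(i,j) \in E(G'')$, then Theorem~\ref{cont2} yields a perfectly resilient algorithm on $G'$, so $A_p(G',t) \neq \emptyset$.
\end{itemize}

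\noindent This closes the induction and proves the theorem.

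The only mild subtlety I anticipate is bookkeeping around the target $t$: we must ensure $t$ is preserved under each operation, which is clear for the deletion cases and handled in the contraction case by naming the merged supernode $t$ whenever $t$ is one of the endpoints of the contracted edge (this is consistent with the setup of Theorem~\ref{cont2}, where the contraction stability statement refers to the same target $t$ on both sides). A second minor point is that the definition of node contraction in Definition~\ref{def:contraction} assumes the contracted endpoints are neighbors, which is exactly the case for an edge-contraction minor operation, so no additional massaging is needed. Beyond this, the argument is a routine composition of the two preceding stability theorems.
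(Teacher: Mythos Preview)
Your proposal is correct and takes essentially the same approach as the paper: both arguments decompose the minor relation into a finite sequence of subsetting and edge-contraction steps and then invoke Theorems~\ref{result:subset-s} and~\ref{cont2} at each step. The paper states this in two sentences without the explicit induction, while you spell out the induction and add the (useful) bookkeeping remark about preserving the target~$t$ through contractions; substantively the proofs are identical.
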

\begin{proof}
	Observe that there exists a sequence of graphs $G_1,G_2\ldots$ s.t.\ $G\mapsto G_1\mapsto G_2   \mapsto \ldots\mapsto G'$, where $\mapsto$ is either a subsetting or a contraction operation.
	Since both operations preserve the existence of a perfect scheme thanks to Theorems~\ref{result:subset-s} and~\ref{cont2}, in
	%contamination??
	combination,
	they imply the general minor relationship.
\end{proof}

The above proof transfers to the model with matching on the source, by utilizing 
Corollary~\ref{result:subset-source} and Observation~\ref{obs:stab}, instead of Theorems~\ref{result:subset-s} and~\ref{cont2}:

\begin{corollary}[Minor Stability]
\label{minorstab-source}
Given two graphs $G, G'$, $G'$ a minor of $G$, it holds that
 $A_p(G,s,t) \neq \emptyset$ implies that $A_p(G',s,t)\neq \emptyset$: if $G$ permits a
 perfectly resilient scheme, so do its minors.
\end{corollary}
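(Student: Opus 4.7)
The plan is to mirror the proof of Theorem~\ref{result:minor} almost verbatim, replacing each invocation of a source-oblivious stability result by the corresponding source-matching one already established in this section. Concretely, I would first invoke the definition of a graph minor to obtain a finite sequence of graphs
\[ G = G_0 \mapsto G_1 \mapsto \cdots \mapsto G_k = G', \]
where each step is either a link/node deletion or a contraction of an incident link. I would further insist that this sequence be chosen so that $s$ and $t$ remain as distinct labelled vertices in every intermediate $G_\ell$; this is always possible because $s, t \in V(G')$, and whenever a contraction merges $s$ (resp.\ $t$) with one of its neighbours, we simply keep the label $s$ (resp.\ $t$) on the identified vertex.

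Next, I would induct on $\ell$ to show $A_p(G_\ell, s, t) \neq \emptyset$. The base case $\ell = 0$ is the hypothesis. For the inductive step, I would distinguish by the type of the operation $G_\ell \mapsto G_{\ell+1}$: if it is a subsetting step, then Corollary~\ref{result:subset-source} gives $A_p(G_{\ell+1}, s, t) \neq \emptyset$; if it is a contraction step, then Observation~\ref{obs:stab} gives $A_p(G_{\ell+1}, s, t) \neq \emptyset$. Chaining these implications along the $k$ steps yields $A_p(G', s, t) \neq \emptyset$ as desired.

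The only potential obstacle I anticipate is the bookkeeping of the source and target across contractions: Observation~\ref{obs:stab} covers contracting an edge $(i,j)$, but if $j$ happens to equal $s$ (or $t$) we need to interpret the merged vertex consistently as the new source (target). Since the contracted forwarding pattern defined in \S\ref{sec:minor} is built by merging the two local forwarding tables, and since matching on the source is immaterial to this merge (as explicitly noted just before Theorem~\ref{cont2}), the semantic identification of $s$ with the merged vertex is legitimate and no genuine difficulty arises. Hence, as the authors anticipate, the argument reduces to a direct substitution of the source-aware building blocks into the proof of Theorem~\ref{result:minor}.
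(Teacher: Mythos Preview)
Your proposal is correct and matches the paper's own argument essentially line for line: the paper simply states that the proof of Theorem~\ref{result:minor} transfers to the source-matching model by substituting Corollary~\ref{result:subset-source} and Observation~\ref{obs:stab} for Theorems~\ref{result:subset-s} and~\ref{cont2}, which is precisely the induction along a subsetting/contraction chain that you spell out. Your additional remark about preserving the labels $s$ and $t$ across contractions is a reasonable piece of bookkeeping that the paper leaves implicit.
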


The subset and contraction lemmas provide constructive proofs, showing how to derive a perfectly resilient scheme from a larger graph to one of its minors. We can also exploit this result in its contrapositive form: 
by showing the absence of perfect resilience schemes on the minors defining a minor-closed graph family, we can prove the impossibility of perfectly resilient schemes for whole graph families, e.g., planar graphs.

\subsection{Case Study: Non-Planar Graphs}\label{subsec:nonplanar}

In order to prove the impossibility of perfect resilience on non-planar graphs, we first cover the impossibility of perfect resilience on the $K_5$ (the complete graph with five nodes) and then on the $K_{3,3}$ (the complete bipartite graph with three nodes in each partition).
Note that our impossibility results in this section are for forwarding patterns that do not match on the source.

\begin{lemma}\label{nok5}
The complete graph with five nodes does not allow for perfect resilience, i.e.,  $A_p(K_5,t)=\emptyset$.
\end{lemma}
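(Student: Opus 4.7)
I would argue by contradiction, assuming $A \in A_p(K_5,t)$ exists, and label the four non-target vertices $v_1,v_2,v_3,v_4$. The first observation is that in $K_5$ under $F = \emptyset$, every neighbor of every non-target node is relevant: for any non-target $v_i$ and any neighbor $w \in \{t,v_1,v_2,v_3,v_4\}\setminus\{v_i\}$ distinct from $t$, the one-hop detour $v_i \to w \to t$ avoids all other neighbors of $v_i$, and for $w=t$ relevance is immediate. Hence Lemma~\ref{lemma:orbit} forces $\pi^t_{v_i}(\cdot,\emptyset)$ to be a single cyclic permutation of the four neighbors of $v_i$.

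Next, I would pick a failure set that cuts $t$ off from all but one of its neighbors: let $F=\{(v_2,t),(v_3,t),(v_4,t)\}$, so that in $G\setminus F$ the vertex $v_1$ is the only gateway to $t$. The key is that each $v_i$ for $i \geq 2$ sees only its single incident failure $(v_i,t)$, while $v_1$ sees $F_{v_1}=\emptyset$. A quick relevance check shows that in every one of these local views all remaining neighbors are still relevant (the failures are not visible to them, so each neighbor still appears to have a direct link to $t$ in the local graph). Applying Lemma~\ref{lemma:orbit} again, $\pi^t_{v_1}(\cdot,\emptyset)$ is a 4-cycle on $\{t,v_2,v_3,v_4\}$, and each $\pi^t_{v_i}(\cdot,\{(v_i,t)\})$ for $i\geq 2$ is a 3-cycle on the three remaining non-failed neighbors.

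With this pinned-down structure, I would trace the packet starting at $v_2$ under $F$. To reach $t$ it must eventually arrive at $v_1$ from an in-port that $\pi^t_{v_1}(\cdot,\emptyset)$ maps to $t$; since $v_1$'s 4-cycle sends only one of its three in-ports $\{v_2,v_3,v_4\}$ straight to $t$, the packet is bounced back into $\{v_2,v_3,v_4\}$ from the other two in-ports. I would then exploit the monotonicity of relevance (Observation~\ref{obs:relevance}) and the rigidity of the three 3-cycles at $v_2,v_3,v_4$ to show that the composition of these cyclic permutations with $v_1$'s bounce-backs is a permutation of a finite state space with no fixed point at $t$, hence trajectories starting from some $v_i$ are forced into a loop among $\{v_1,v_2,v_3,v_4\}$ that never exits through $(v_1,t)$.

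The main obstacle is ruling out \emph{every} consistent choice of the four cyclic orderings. The vertex-transitive symmetry of $K_5$ and the fact that $v_1$'s bounce-back is essentially a uniform rotation should cut the case analysis down, but one may need to augment $F$ (for instance, by also failing an edge inside $\{v_2,v_3,v_4\}$ and re-invoking Lemma~\ref{lemma:orbit} at a larger failure set, using monotonicity to preserve the relevance of the remaining neighbors) in order to pin the cycles uniquely enough that a closed orbit avoiding $t$ is unavoidable. A cleaner alternative would be a parity/discrepancy argument on the aggregate orbit structure showing that the in-port mapping at $v_1$ is incompatible with any assignment of 3-cycles at $v_2,v_3,v_4$, which would avoid enumerating cases entirely.
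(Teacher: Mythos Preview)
Your setup via Lemma~\ref{lemma:orbit} is correct, but there is a real gap: with your \emph{fixed} failure set $F=\{(v_2,t),(v_3,t),(v_4,t)\}$ it is simply not true that every forwarding pattern satisfying the orbit constraints loops. For a concrete counterexample, take $\pi^t_{v_1}(\cdot,\emptyset)$ to be the 4-cycle sending in-port $v_2$ to $t$ (and $t\to v_3\to v_4\to v_2$), and let $\pi^t_{v_2}(\bot,\{(v_2,t)\})=v_1$; then the packet from $v_2$ goes $v_2\to v_1\to t$ in two hops. The orbit constraints you extract from this single $F$ are far too weak to force a loop, and no parity or discrepancy argument will change that, because a successful pattern exists. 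Your remark that ``one may need to augment $F$'' is pointing in the right direction, but as stated it is a hope rather than a mechanism: the crucial missing idea is that the extra failures must be chosen \emph{after} looking at the forwarding function, not before.

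This adaptive choice is exactly what the paper does. It fails $(v_1,t)$ first (so $v_1$ is the source, not the gateway), reads off the resulting 3-cycle $\pi^t_{v_1}(\cdot,\{(v_1,t)\})$ on $\{v_2,v_3,v_4\}$, and relabels so that $\pi^t_{v_1}(\bot)=v_2$ and $\pi^t_{v_1}(v_3)=v_2$. Only then does it pick the remaining failures: cut $(v_2,v_4),(v_3,v_4),(v_2,t),(v_3,t)$. None of these touch $v_1$, so $v_1$'s forwarding is unchanged; meanwhile $v_2$ and $v_3$ are now degree~2 with neighbors $\{v_1,v_3\}$ and $\{v_1,v_2\}$ respectively, and Lemma~\ref{lemma:orbit} \emph{forces} each to relay to the other port. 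The trajectory $v_1\to v_2\to v_3\to v_1\to v_2\to\cdots$ is then unavoidable, while $v_4$ (the only remaining neighbor of $t$) is never visited. The whole point is that dropping two nodes to degree~2 pins their behavior uniquely, so no case analysis over cyclic orderings is needed at all; your proposal never reaches this reduction because it commits to $F$ before inspecting the pattern.
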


\begin{proof}
	Let $V(K_5)= \{v_1,v_2,v_3,v_4,v_5\}$, where we assume w.l.o.g.\ $v_1$ to be the source $s$ and $v_5$ to be the target $t$.
	To prove the lemma, we construct sets of link failures in which we leave target $v_5$ connected to only one of the
  non-target nodes, and "fine-tune" the set of link failures so that $v_1$'s  packet will only visit 3 out of the 4 neighbors of $t$
	(say $(v_1,v_2,v_3)$).
  By contradiction, let $A\in A_p(K_5,t)$.
	%
	% Let $v_2$ be the first  target chosen by $av_1$.\yap{this sentence is confusing, omit it}
	%
	Let $\pi^t_{v_1}(\cdot,F)$ be the port mapping produced by $A$ at node ${v_1}$ given $F=\{({v_1},t={v_5})\}$ and let  $\pi^t_{v_1}(\bot,F)=v_2$ w.l.o.g.
	Since $A$ is perfectly resilient, we know by Lemma~\ref{lemma:orbit} that $\pi_{v_1}^t$ is a permutation over its relevant neighbors (all neighbors are relevant and must be in the same orbit).
	Necessarily, ${v_2}$ has a predecessor\footnote{As all neighbors are relevant, the perfect resilient routing forms a cyclic permutation and we can directly identify the predecessor.} in its orbit in $\pi^t_{v_1}(\cdot,F)$.
	W.l.o.g.\ assume it is ${v_3}$: $\pi^t_{v_1}({v_3},F)={v_2}$.
	%	$\pi^F_a(c)=b$.
	Construct further sets of link failures as follows:
  \begin{itemize}\setlength\itemsep{0em}
  \item $F_t=\{(t,{v_1}),(t,{v_2}),(t,{v_3})\}$: leave only ${v_4}$ connected to the
    target.
  \item $F_{v_2}=\{({v_2},t),({v_2},{v_4})\}$: make sure ${v_2}$ can only pass ${v_1}$'s
    packets to ${v_3}$.
    \item $F_{v_3}=\{({v_3},{v_4}),({v_3},t)\}$: make sure ${v_3}$ can only pass ${v_2}$'s
      packets to ${v_1}$.
    \end{itemize}

  Let $F_\emptyset= F\cup F_t\cup F_{v_2}\cup F_{v_3}$.
	Note that from the perspective of ${v_1}$, only the failure of $({v_1},t)$ is visible, and hence $F$ and $F_\emptyset$ are locally indistinguishable.
	Let us now construct the sequence of links traversed in $A$: at ${v_2}$, due to Lemma~\ref{lemma:orbit}, the packet is forwarded to ${v_3}$ (both $v_1,v_3$ are relevant for $v_2$).
	At ${v_3}$ it is necessarily forwarded to $v_2$ for the same reason.
	As $F_\emptyset$ and $F$ have the same impact on ${v_1}$'s ports,  we deduce $\pi^t_{v_1}({v_3},F)=\pi^t_{v_1}({v_3},F_\emptyset)={v_2}$.
	The link $({v_1},{v_2})$ is thus used repeatedly, i.e.,  $A$ causes a permanent loop.
  As $({v_1},{v_4})$ and $({v_4},t) \not \in F_\emptyset$, ${v_1}$ and $t$ are connected in $G\setminus F_\emptyset$, and yet $A$ loops, leading to the desired contradiction.
\end{proof}

\begin{lemma}\label{nok3}
The complete bipartite graph with six nodes, three in each partition, does not allow for perfect resiliency, i.e., it holds that  $A_p(K_{3,3},t)=\emptyset$.
\end{lemma}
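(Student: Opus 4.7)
The plan is a proof by contradiction. Suppose $A\in A_p(K_{3,3},t)$ exists; I will exhibit a failure set $F$ together with a source $a_1$ such that $A$'s packet is trapped in a loop although $a_1$ and $t$ remain connected in $K_{3,3}\setminus F$. Label the bipartition as $\{a_1,a_2,a_3\}\cup\{b_1,b_2,t\}$, so every $a_i$ has neighborhood $\{b_1,b_2,t\}$ and every $b_j$ (including $t$) has neighborhood $\{a_1,a_2,a_3\}$.

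First I extract, via Lemma~\ref{lemma:orbit}, the forced behavior at the local failure configurations that will arise. At any $a_i$ with incident failures $F_i=\{(a_i,t)\}$, the two surviving neighbors $b_1,b_2$ are both relevant (each reaches $t$ via some $a_{i'}$ with $i'\neq i$), so the forwarding function must swap them. At any $b_j$ with no incident failures, all three neighbors $a_1,a_2,a_3$ are relevant (each is directly adjacent to $t$), so they form a single cyclic orbit $\sigma_j$. And at any $b_j$ with incident failure $F_i=\{(b_j,a_k)\}$, the two remaining neighbors are still relevant (each reaches $t$ directly), so they swap. Appealing to the symmetry of $K_{3,3}$ with $t$ fixed, I then perform two WLOG renamings: swapping $b_1\leftrightarrow b_2$ if needed, assume $\pi^t_{a_1}(\bot,\{(a_1,t)\})=b_1$; and swapping $a_2\leftrightarrow a_3$ if needed, assume $\sigma_1=(a_1\,a_2\,a_3)$, i.e.\ $\sigma_1(a_1)=a_2$.

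The core of the argument is to consider $F=\{(a_1,t),(a_2,t),(b_2,a_3)\}$ and trace a packet starting at $a_1$ in-port $\bot$. Each step is now forced by the orbit conclusions above: $a_1\to b_1$ by the WLOG choice; $b_1\to a_2$ since $\sigma_1(a_1)=a_2$; $a_2\to b_2$ by the swap at $a_2$ under $\{(a_2,t)\}$; $b_2\to a_1$ by the swap at $b_2$ under $\{(b_2,a_3)\}$ (whose two surviving relevant neighbors are $a_1,a_2$); and then $a_1\to b_1$ again by the swap at $a_1$ under $\{(a_1,t)\}$ with in-port $b_2$. From this point the trace repeats, so the packet is trapped in the four-cycle on $\{a_1,b_1,a_2,b_2\}$ and never reaches $t$.

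Nevertheless, $a_1$ and $t$ are connected in $K_{3,3}\setminus F$ via $a_1\to b_1\to a_3\to t$, whose three edges are all outside $F$, contradicting the perfect resilience of $A$ and yielding $A_p(K_{3,3},t)=\emptyset$. The main technical point to verify is that every invocation of Lemma~\ref{lemma:orbit} is justified; the only nonobvious case is at $b_2$ under $F_i=\{(b_2,a_3)\}$, where both $a_1$ and $a_2$ must be relevant from $b_2$'s local view even though the global failures $(a_1,t)$ and $(a_2,t)$ are not visible to $b_2$. This is fine: the direct edges $(a_1,t),(a_2,t)$ are still present in $G\setminus F_i$, so the relevance of each of $a_1,a_2$ is witnessed by the one-hop path $b_2\to a_i\to t$ in $b_2$'s local graph.
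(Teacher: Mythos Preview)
Your proof is correct and follows essentially the same approach as the paper: both fail two of the three links to $t$ plus one additional link to create a four-cycle trap, then use Lemma~\ref{lemma:orbit} to force the packet into that cycle while a path to $t$ via the remaining third neighbor survives. Up to relabeling, your failure set coincides with the paper's; the only cosmetic difference is that the paper places the start node on the same side of the bipartition as $t$ (so the start node sees no incident failures and the cyclic-permutation argument is applied there), whereas you start on the opposite side and apply the 3-cycle argument at the intermediate node $b_1$.
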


\begin{proof}
  We proceed similarly to the $K_5$ case. Let
  $V_1=\{a,b,c=t\},V_2=\{v_1,v_2,v_3\}$ and $E=V_1\times V_2$, where we assume w.l.o.g.\ that we start on $a$.
  By contradiction, let $A\in A_p (K_{3,3})$.
	Let $v_1$ be the first target chosen by $a$.
	Let $\pi^t_a(\cdot,\emptyset)$ be the port mapping produced by $A$ at node $a$ given $F=\emptyset$.
	Since $A$ is perfectly resilient, we know by Lemma~\ref{lemma:orbit} that $\pi^t_a(\cdot,\emptyset)$ is a
  cyclic permutation over its relevant neighbors (all neighbors).
	Necessarily, $v_1$ has a predecessor in $\pi_a$.
	W.l.o.g.\ assume it is $v_2: \pi^t_a(v_2,\emptyset)=v_1$.

	We now construct a further set of link failures as follows, where we do not touch links incident to $a$:
  \begin{itemize}\setlength\itemsep{0em}
  \item $F_t=\{(t,v_1),(t,v_2)\}$: leave only $v_3$ connected to the
    target.
  \item $F_b=\{(b,v_3)\}$: make sure $b$ can only pass $v_1$'s packets to $v_2$ ($v_1,v_2$ are relevant).
  \end{itemize}
  Set $F'=  F_t\cup F_b$.
	Let us construct the sequence of links traversed in $A$:
	at $v_1$, due to Lemma~\ref{lemma:orbit}, the packet is necessarily forwarded to $b$.
	At $b$ it is necessarily forwarded to $v_2$ for the same reason.
	Since $F=\emptyset$ and $F'$ have the same impact on $a$'s forwarding function, we deduce
	$\pi^t_a(v_2,\emptyset) = \pi^t_a(v_2,F')=v_1$
	%$\pi^\emptyset_a(v_2)=\pi^{F}_a(v_2)=v_1$.
	As link $(a,v_1)$ is used repeatedly, $A$ thus causes a permanent loop.
  Since $(a,v_3)$ and $(v_3,t) \not \in F'$ , $a$ and $t$ are connected in $G\setminus F'$, and yet $A$ fails, leading to the desired contradiction.
\end{proof}

%Armed with these facts, 
We can now show that only planar graphs can permit perfect resilience.
In other words, if a graph is not planar, then it does not permit perfect resilience.

\begin{theorem}%[Not planar implies no perfect resilience]
\label{thm:planar-wagner}
  If $G$ is not planar, then it does not support a perfectly resilient forwarding pattern $\pi^t$, i.e., $A_p(G,t) = \emptyset$.
\end{theorem}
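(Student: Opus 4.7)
The plan is to invoke Wagner's theorem, which characterizes planarity via forbidden minors, and then combine it with the impossibility results already established for $K_5$ and $K_{3,3}$ together with Minor Stability (Theorem~\ref{result:minor}).

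More concretely, I would first recall that by Wagner's theorem, a graph $G$ is planar if and only if it contains neither $K_5$ nor $K_{3,3}$ as a minor. Hence if $G$ is non-planar, it must contain at least one of $K_5$ or $K_{3,3}$ as a minor.

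Next, I would read Theorem~\ref{result:minor} in its contrapositive form: if $H$ is a minor of $G$ and $A_p(H,t) = \emptyset$, then $A_p(G,t) = \emptyset$. Combined with Lemmas~\ref{nok5} and~\ref{nok3}, which establish $A_p(K_5,t) = \emptyset$ and $A_p(K_{3,3},t) = \emptyset$ respectively, this immediately gives $A_p(G,t) = \emptyset$ whenever $G$ has $K_5$ or $K_{3,3}$ as a minor, yielding the theorem.

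The only mild subtlety to address is that Minor Stability is phrased for a fixed target $t$, while the non-planar minor $H \in \{K_5, K_{3,3}\}$ sits inside $G$ at some specific vertex set and the target of the hypothetical perfectly resilient algorithm on $G$ is a given $t \in V(G)$. One resolves this by noting that Lemmas~\ref{nok5} and~\ref{nok3} hold for \emph{every} choice of target in $K_5$ and $K_{3,3}$ (by symmetry of these graphs, or by re-running the same argument with relabeled vertices), so we can pick any vertex of the $K_5$- or $K_{3,3}$-minor as the target when appealing to Minor Stability; more formally, since the minor relation allows us to first delete all vertices of $G$ except those realizing the minor (subsetting from Theorem~\ref{result:subset-s}) and then contract, we obtain an impossibility for at least one choice of target, contradicting the assumption that $A_p(G,t) \neq \emptyset$ for some $t$. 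I do not expect any real obstacle here; the proof is essentially a two-line consequence of Wagner's theorem plus the earlier lemmas.
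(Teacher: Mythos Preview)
Your proposal is correct and mirrors the paper's proof exactly: invoke Wagner's theorem, combine Lemmas~\ref{nok5} and~\ref{nok3} with the contrapositive of Minor Stability (Theorem~\ref{result:minor}), and conclude. The paper itself does not spell out the target-tracking subtlety you raise, so your extra paragraph on it is more careful than the original, though your final clause should read ``for the given $t$'' rather than ``for some $t$'' (one needs the minor to contain the fixed target, which can always be arranged in a connected $G$ by absorbing $t$ into a branch set).
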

\begin{proof}
  First, observe that both $K_{5}$ and $K_{3,3}$ do not support perfectly resilient schemes due to Lemmas~\ref{nok5} and \ref{nok3}.
	%as being a special case of lemma~\ref{nok3}.
  %Recall that $K_5$ does not support one  either thanks to lemma~\ref{nok5}.
	%
	Next, Wagner's theorem~\cite{wagner} states that $G$ planar
  $\Leftrightarrow (K_5 \not\in min(G) \wedge K_{3,3} \not \in
  min(G))$.
	The contrapositive form of Corollary~\ref{minorstab} is $A(G',t)= \emptyset \Rightarrow A(G,t)=
  \emptyset$.
	As perfect resilience is impossible on both $K_5$ and $K_{3,3}$, we deduce that no graph with $K_5$ or $K_{3,3}$ as a minor permits a perfectly resilient scheme.
\end{proof}

\section{Negative Results}
\label{sec:feigenLim}

Our observations above also allow us to derive a number
of additional impossibility results.

%!TEX root = perfect_resilience_disc_2020.tex

\subsection{Boosting Feigenbaum et al.'s\ Impossibility} \label{ssec:replicate-feigenbaum}

Feigenbaum et al.\ \cite{podc-ba} gave a construction for which there is no perfect resilience forwarding pattern. We strengthen their result slightly, showing that no perfectly resilient forwarding pattern exists, even when the source is known. This allows us to boost their result to almost any resilience.
We defer the proof details to Appendix~\ref{app:fb}.

\vspace{1mm}

By padding and replicating Feigenbaum's construction we gain different parametrizations of the impossibility result. In particular, we observe that no $\omega(1)$-resilient forwarding pattern exists (as a function of $m$), and show that no $\Theta(f(n))$-resilient forwarding pattern exists even when the source and the target are $\Theta(f(n))$-connected. Theorems~\ref{thm:feigenbaum-padded} and \ref{thm:feigenbaum-padded-replicated} are asymptotic in $m$, the number of links in the input graph before failures.

\begin{theorem} \label{thm:feigenbaum-padded}
  There is no $\omega_m(1)$-resilient forwarding pattern with source, target, and in-port matching.
\end{theorem}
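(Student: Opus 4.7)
The plan is to use the boosted Feigenbaum construction (whose proof details are deferred to Appendix~\ref{app:fb}) as a black box: it supplies a fixed graph $G_F$ with source $s_F$, target $t_F$, and a failure set $F_0 \subseteq E(G_F)$ of constant size $c$ such that no source-matching forwarding pattern on $G_F$ routes correctly from $s_F$ to $t_F$ under $F_0$, even though $s_F$ and $t_F$ remain connected in $G_F \setminus F_0$. To rule out $\omega_m(1)$-resilience, I will embed this constant-size counterexample into graphs of arbitrarily large edge count, so that for any $f(m) = \omega_m(1)$, eventually $c < f(m)$.

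Concretely, fix $f(m) = \omega_m(1)$ and set $m_F = |E(G_F)|$. For each $m \geq m_F$, I would define $G_m$ as the disjoint union of $G_F$ with any auxiliary graph $H$ on $m - m_F$ edges over a fresh vertex set, keeping the source and target as $s_F, t_F \in V(G_F)$. The central step is to argue that the Feigenbaum impossibility transfers to $G_m$: every node $v \in V(G_F)$ has the same incident edges, in-ports, and incident failures in $G_m$ as in $G_F$, since no new edge touches $V(G_F)$. Therefore the restriction to $V(G_F)$ of any source-matching forwarding pattern on $G_m$ is a valid source-matching forwarding pattern on $G_F$. Moreover, under the failure set $F_0 \subseteq E(G_F) \subseteq E(G_m)$, a packet originating at $s_F$ can never leave $V(G_F)$, because $V(G_F)$ and $V(H)$ lie in different connected components of $G_m$. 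The boosted Feigenbaum impossibility then forces the packet to miss $t_F$, while $s_F$ and $t_F$ are still connected in $G_m \setminus F_0$. Thus $F_0$ is a failure set of size $c$ in $G_m$ that no source-matching forwarding pattern can survive.

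Finally, since $f(m) = \omega_m(1)$, for all sufficiently large $m$ we have $f(m) > c$, so any purported $f(m)$-resilient pattern on $G_m$ would in particular be $c$-resilient, contradicting the previous step. The main obstacle I anticipate is justifying that the padding does not inadvertently enrich the routing power of $G_F$-nodes: attaching a pendant subtree to some $v \in V(G_F)$ could, via bounce-backs, effectively endow $v$ with extra in-ports and might sidestep the Feigenbaum obstruction. Using a \emph{disjoint} union cleanly avoids this because packets from $s_F$ provably never cross into $H$, keeping the local behavior of $V(G_F)$ identical to the standalone case. A more formal alternative would invoke Corollary~\ref{result:subset-source}, but that framework is phrased for perfect resilience and would require failing the padding edges as well, inflating the failure count beyond $c$ and spoiling the parametrization in $m$.
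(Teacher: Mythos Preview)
Your approach is sound and follows the same high-level idea as the paper: embed the constant-size Feigenbaum gadget into a graph with arbitrarily many edges so that a bounded number of failures still defeats every pattern. The paper pads differently, by replacing the source with a long path $P_k = (s_0,\dots,s_k)$ attached to the gadget rather than by taking a disjoint union; your disjoint-union construction is arguably cleaner, since (as you observe) the packet provably never touches the padding, whereas the paper must implicitly argue that traversing the appended path cannot help routing inside the gadget.

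One correction to your black-box statement: the quantifier order is reversed. Theorem~\ref{thm:feigenbaum-gadget} does \emph{not} supply a single failure set $F_0$ that defeats all patterns; rather, for each forwarding pattern $\pi^{s,t}$ on $G_F$ there is a (pattern-dependent) failure set of size bounded by some absolute constant~$c$. Your argument still goes through with this fix: given any pattern on $G_m$, restrict it to $V(G_F)$, obtain the defeating $F_0$ for that restriction, and observe that this same $F_0$ (viewed in $E(G_m)$) defeats the full pattern on $G_m$ because the packet never leaves $V(G_F)$. The uniform bound $|F_0|\le c$ over all patterns is all you need for the final comparison with $f(m)$.
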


By a combination of padding and replicating the construction of Feigenbaum et al.\ it is possible to create a construction where the source and the target are connected by many link-disjoint paths after the failure, yet any forwarding pattern will fail.

\begin{theorem} \label{thm:feigenbaum-padded-replicated}
  There is no $\Theta(f(m))$-resilient forwarding pattern with source, target, and in-port matching, for any $f(m)$ between $\Omega(1)$ and $O(m)$, even when there is a promise that there are $\Theta(f(m))$ link-disjoint paths between the source and the target after the failures.
\end{theorem}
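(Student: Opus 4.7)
The plan is to combine two transformations of Feigenbaum's base construction: \emph{padding} to calibrate the total edge count, and \emph{parallel replication} to simultaneously enlarge the failure budget and manufacture the $\Theta(f(m))$ link-disjoint $s$-$t$ paths required by the promise. Let $H$ denote the padded gadget that underlies Theorem~\ref{thm:feigenbaum-padded}, with distinguished source $s_H$, target $t_H$, and a constant-size failure set $F_H$ that defeats every source/target/in-port-matching forwarding pattern on $H$ while leaving $s_H$ and $t_H$ in the same component of $H\setminus F_H$. I would build $G_m$ from $k=\Theta(f(m))$ vertex-disjoint copies $H_1,\dots,H_k$ by identifying all source copies with a single vertex $s$ and all target copies with a single vertex $t$, and then attaching padding edges to auxiliary degree-$1$ stubs so that $|E(G_m)|$ lands in the desired asymptotic range relative to $f(m)$. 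The adversarial failure set is $F:=\bigcup_{i=1}^{k} F_{H_i}$, which has $c\cdot k=\Theta(f(m))$ links.

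For the connectivity promise, I would observe that each $F_{H_i}$ by construction preserves $s$-to-$t$ connectivity inside its copy, so every $H_i$ still contributes an $s$-$t$ path in $G_m\setminus F$. Because the copies share only the endpoints $s$ and $t$, these $k$ paths are pairwise link-disjoint, certifying the required $\Theta(f(m))$ link-disjoint $s$-$t$ paths.

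For the impossibility, fix any forwarding pattern $\pi$ on $G_m$ and trace the packet. From the in-port $\bot$ the function $\pi_s^{s,t}$ deterministically chooses an edge into some copy $H_{i_1}$. Once inside $H_{i_1}$ the packet is governed by the natural restriction of $\pi$ to $H_{i_1}$, obtained by treating every port of $s$ that leaves $H_{i_1}$ as a locally failed port and every return to $s$ as a fresh $\bot$ entry. Applying Theorem~\ref{thm:feigenbaum-padded} to that restricted instance on $H_{i_1}\setminus F_{H_{i_1}}$ forces the packet either to loop permanently inside $H_{i_1}$, which immediately gives the impossibility, or to exit back to $s$. In the latter case $\pi_s^{s,t}$ is a fixed finite-state function of the returning in-port, so it produces a deterministic sequence $i_1,i_2,\dots$ of visited copies that must cycle within at most $k+1$ excursions. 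Every copy in this sequence is equipped with a Feigenbaum failure set and hence cannot deliver the packet to $t$, so $\pi$ is trapped in a permanent loop across the visited copies, contradicting $\Theta(f(m))$-resilience.

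The main obstacle I anticipate is rigorously framing the ``restriction of $\pi$ to $H_i$'' as an admissible pattern for Theorem~\ref{thm:feigenbaum-padded}: the shared node $s$ has incident ports and incident failures that straddle several copies, and its forwarding function sees in-ports coming from any copy, not just $H_i$. I would resolve this by showing that Feigenbaum's argument relies only on the local port-map and in-port distinctions at $s$, so declaring all non-$H_i$ ports of $s$ as ``failed from $H_i$'s point of view'' keeps the hypothesis of that theorem intact. A secondary technical point is the arithmetic of padding: one must tune both the number of copies $k$ and the number of stubs per copy so that $|F|$, the disjoint-path count and $m$ simultaneously fall in the stated asymptotic regime uniformly for $f(m)$ across $\Omega(1)$ to $O(m)$.
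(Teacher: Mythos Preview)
Your plan is essentially the paper's approach---replicate the Feigenbaum gadget in parallel, glue at the source, and pad to hit the right asymptotics---but two points deserve correction or simplification.

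First, your sentence ``a constant-size failure set $F_H$ that defeats \emph{every} source/target/in-port-matching forwarding pattern on $H$'' is false as stated: there is no universal failure set. In the gadget proof the failure set is chosen \emph{after} fixing the pattern, and it depends on the forwarding function at the centre node $c$ (not at $s$). In the replicated graph you must therefore fix the global pattern $\pi$ first and then, for each copy $i$, read off $\pi_{c^{(i)}}^{s,t}$ and pick $F_{H_i}$ accordingly. Your later notation $F_{H_i}$ suggests you understand this, but the earlier phrasing would not survive a careful reading.

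Second, the ``restriction of $\pi$ to $H_i$'' machinery and the excursion sequence $i_1,i_2,\dots$ are unnecessary, and your proposed fix (declaring non-$H_i$ ports of $s$ failed) does not actually yield the same behaviour at $s$ as in the real graph. The paper sidesteps this entirely: rather than identifying targets, it keeps each $t^{(i)}$ as an ordinary node and adds a fresh global target $t$ adjacent only to the $t^{(i)}$'s. Then the argument is one line: the Feigenbaum failure set for copy $i$ is determined solely by the forwarding functions at \emph{internal} nodes of that copy (none of which are shared), and it guarantees that a packet inside copy $i$ never reaches $t^{(i)}$; since the only edges into $t$ come from the $t^{(i)}$'s, the packet never reaches $t$, regardless of how $\pi_s$ shuttles it between copies. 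No restriction, no excursion bookkeeping, no worry about $s$'s cross-copy ports. The paper also does the padding inside each gadget (taking the path length $g=\lceil r/f(r)\rceil$ in each of $f(r)$ copies) rather than with external stubs, which makes the arithmetic for $m$, $|F|$, and the disjoint-path count fall out directly.
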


\subsection{Impossibility on Planar Graphs} \label{ssec:planar-negative}

In the previous section, we showed that 
non-planar graphs do not have resilient forwarding patterns.
We now show that there are also relatively small planar graphs that do not permit perfect resiliency.
Chiesa et al.~\cite{DBLP:journals/ton/ChiesaNMGMSS17} already showed the impossibility of perfect resilience on a planar graph with over 30 nodes with just two failures, but it is not clear how to extend their example to also account for the packet source.
However, as we will see later, for every outerplanar graph there is a perfectly resilient forwarding pattern, and there are also non-outerplanar planar graphs that allow for perfect resilience.

\begin{theorem} \label{thm:planar}
  There exists a planar graph $G$ on 7 nodes such that no forwarding pattern $\pi^t$ succeeds on $G$.
\end{theorem}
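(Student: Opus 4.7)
The plan is to exhibit an explicit $7$-node planar graph $G$ with a distinguished target $t$ and argue by contradiction, in the same spirit as the proofs of Lemmas~\ref{nok5} and~\ref{nok3}, that no $\pi^t \in A_p(G,t)$ can exist. A natural candidate is a near-triangulation of a planar disk whose outer face contains $t$ and which has an interior non-target node $v$ of high degree ($\ge 4$), with enough edges among the remaining nodes so that, in the failure-free setting, all of $v$'s neighbors are relevant in the sense of Section~\ref{sec:first}. Since perfect resilience on $G$ would, by Theorem~\ref{result:subset-s}, carry over to every subgraph, it suffices to find any single such $G$ on $7$ nodes whose local structure around $v$ emulates the obstruction built for $K_5$.

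Proceeding by contradiction, assume $\pi^t \in A_p(G,t)$. By Lemma~\ref{lemma:orbit}, the restriction of $\pi^t_v(\cdot, \emptyset)$ to the relevant neighbors of $v$ must be a single cyclic permutation; up to the automorphisms of $G$, this pins the rotation at $v$ down to a small number of canonical cases. For each case, I would construct a failure set $F$ that is \emph{locally invisible} to $v$, i.e.\ $F$ contains no edge incident to $v$, and which (i) severs the direct connection from $t$ to all but one neighbor $u^\star$ of $v$, leaving $u^\star$ as a ``decoy'' path to $t$ that the packet never enters, and (ii) at every other neighbor $u_i$ of $v$, prunes the graph so that the only two relevant neighbors of $u_i$ are two specific members of $V(v)$. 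Applying Lemma~\ref{lemma:orbit} at each such $u_i$ forces the forwarding to bounce back to $v$, and composing these bounces with $v$'s cyclic rotation produces a permanent loop among $v$ and its non-decoy neighbors. Yet the packet's origin and $t$ remain connected in $G \setminus F$ via $u^\star$, contradicting perfect resilience.

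The main obstacle is realizing this $K_5$-style padding \emph{inside a planar $7$-node graph}: one needs all neighbors of $v$ to remain relevant (so Lemma~\ref{lemma:orbit} indeed forces a full cyclic rotation there), and one needs enough internal ``rerouting'' edges among those neighbors to set up the invisible padding, while still preserving a separate decoy path to $t$, all without introducing a $K_5$ or $K_{3,3}$ minor. I expect the construction to use a carefully chosen maximal planar graph on $7$ vertices (with $3 \cdot 7 - 6 = 15$ edges) where $v$ is an interior degree-$5$ or degree-$6$ node; once such a $G$ is fixed, the remainder of the proof reduces to a short case analysis over the (few, after quotienting by symmetry) rotations of $v$'s neighborhood, each closed by a verbatim adaptation of the $F \cup F_t \cup F_{v_2} \cup F_{v_3}$ construction of Lemma~\ref{nok5}.
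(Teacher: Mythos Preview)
Your high-level plan matches the paper's proof exactly: fix a central node $v$ whose neighbors are all relevant under $F=\emptyset$, use Lemma~\ref{lemma:orbit} to force $\pi^t_v(\cdot,\emptyset)$ to be a single cycle, enumerate the (few, up to symmetry) cyclic orders, and for each exhibit a failure set invisible to $v$ that traps the packet in a loop while leaving a path to $t$ alive. That is precisely what the paper does.

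Where your proposal diverges is in the concrete parameters, and some of your choices make the execution harder than necessary or do not quite work as stated. The paper's central node has degree~$4$ (neighbors $\{2,3,4,5\}$), so there are only $3!=6$ cyclic orders to rule out, and three failure sets suffice. Your suggestion of an interior node of degree $5$ or $6$ in a \emph{maximal} planar graph on $7$ vertices is problematic: with degree~$6$, $t$ is already a neighbor of $v$, so under $F=\emptyset$ the packet at $v$ simply goes to $t$ and there is no cyclic constraint on the remaining ports to exploit; with degree~$5$ you face $4!=24$ rotations and heavier symmetry bookkeeping. More importantly, the ``verbatim adaptation'' of the $K_5$ padding relies on the bounced neighbors being mutually adjacent so the packet can shuttle among them without returning to $v$ prematurely; planarity forbids all neighbors of $v$ from being pairwise adjacent, so this cannot be copied wholesale. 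Indeed, in the paper's actual construction the loops are not confined to $V(v)$: one of the three failure sets routes the packet through a node (node~$6$) that is \emph{not} a neighbor of the central node, which your item~(ii) does not anticipate. So the skeleton of your argument is right, but you should drop the maximal-planar/high-degree ansatz and instead look for a sparser planar graph with a degree-$4$ hub and one auxiliary ``bypass'' node outside its neighborhood; the case analysis then closes with only three tailored failure sets rather than a direct transplant of the $K_5$ construction.
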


\begin{figure*}[t]
  \centering
  \includegraphics[width=0.55\columnwidth]{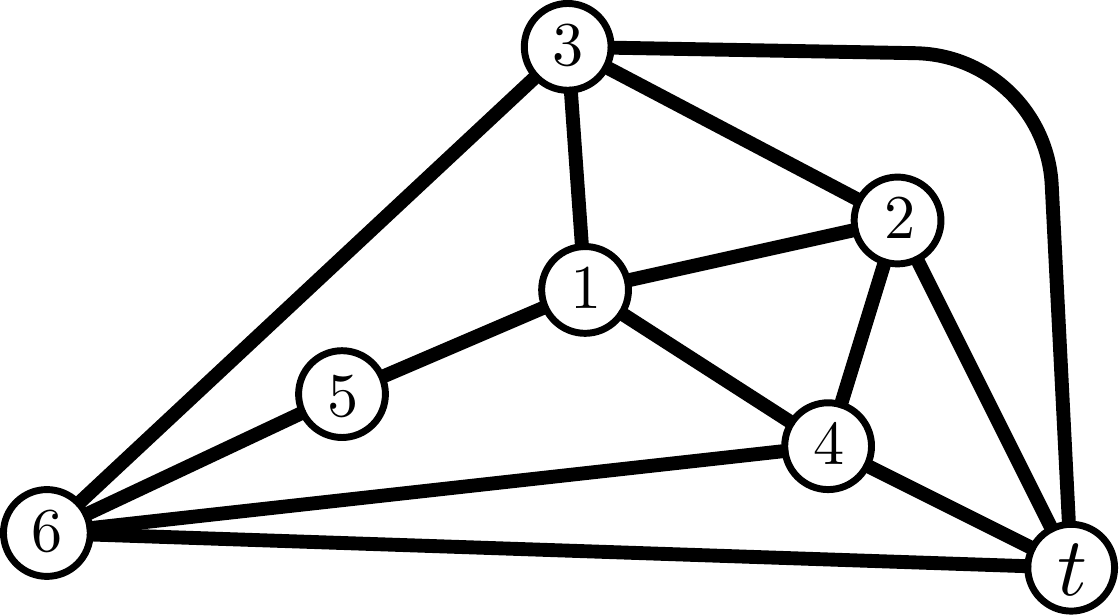}\hfill
  \includegraphics[width=1.4\columnwidth]{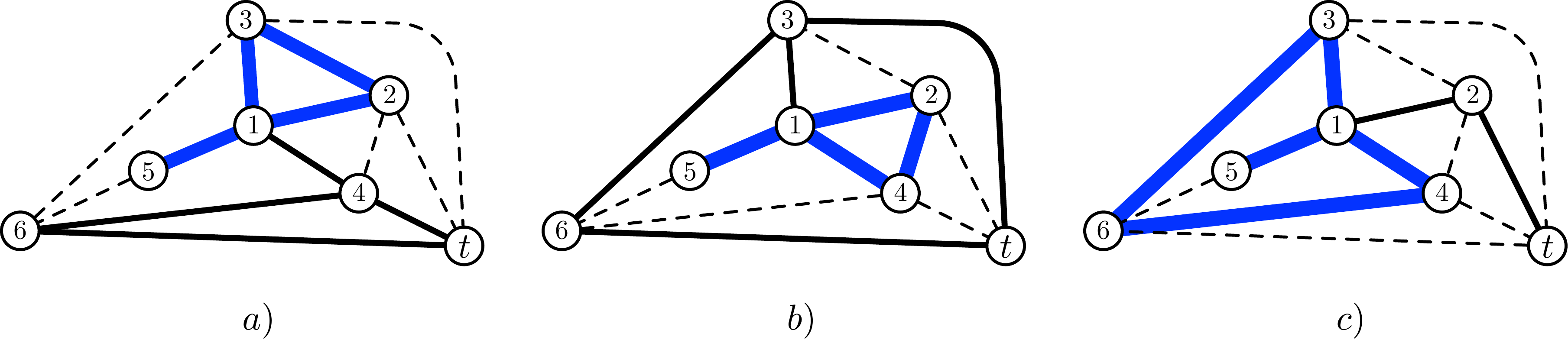}
	\vspace{-2mm}
  \caption{Planar graph without perfect resiliency.
	%perfectly resilient forwarding pattern.
	If the dashed links fail, in any forwarding pattern, packets will be stuck in one of the blue bold loops, even though there is at least one remaining path to the target.}
\label{fig:planar}
\vspace{-3mm}
\end{figure*}
\begin{proof}
Consider a packet emitted by node 5 for target $t$ in the graph $G$ depicted in Figure~\ref{fig:planar}.
In $G$ all neighbors of all nodes are relevant if no failures occur.
Thus the forwarding function at any node $v$ $\pi_v^t(\cdot, \emptyset)$ must be a cyclic permutation
of its neighbors due to Lemma~\ref{lemma:orbit}.
Since failures can only add relevant neighbors to nodes (Observation~\ref{obs:relevance}), any node of degree 2 of $G$ after failures must
forward packets with incoming port $p$ to port $p' \neq p$ due to the same lemma.
With this, we can show that there exists a failure set $F$ that leads to a loop for all possible forwarding
functions at node 1.
 We write  $\pi_1^t(\cdot, \emptyset)=(v_0, v_1, v_2, v_3)$ to denote the permutation that  assigns $v_{i+1 \mod 4}$ to $v_i$ for $v_i\in \{2,3,4,5\}$.
 We name the ports of node 1 by the identifiers of the neighbors they are connected to and we analyse
the different forwarding permutations at node 1, based on the out-port they assign to packets arriving on
in-port 5

Case (i)  $\pi_1^t(5, \emptyset)=2$. To ensure $\pi_1^t(\cdot, \emptyset)$ is a cyclic permutation under
this constraint, a packet arriving on port 2 can only be forwarded to either 3 or 4. Thus the possible cyclic
permutations are $(5,2,3,4)$  and $(5,2,4,3)$, which lead to loops under the failure sets illustrated in
Figure~\ref{fig:planar}.b  and a.

Case (ii)  $\pi_1^t(5, \emptyset)=3$. To ensure $\pi_1^t(\cdot, \emptyset)$ is a cyclic permutation under
this constraint, a packet arriving on port 4 can only be forwarded to either 2 or 5. Thus the possible cyclic
permutations are $(5,3,4,2)$  and $(5,3,2,4)$ which lead to loops under the failure sets illustrated in
Figure~\ref{fig:planar}.a  and c.

Case (iii)  $\pi_1^t(5, \emptyset)=4$. To ensure $\pi_1^t(\cdot, \emptyset)$ is a cyclic permutation under
this constraint, a packet arriving on port 4 can only be forwarded to either 2 or 5. Thus the possible
cyclic permutations are $(5,4,3,2)$  and $(5,4,3,2)$ which lead to loops under the failure sets illustrated in
Figure~\ref{fig:planar}.b  and c.

Case (iv)  $\pi_1^t(5, \emptyset)=5$ There is no cyclic permutation with this assignment, thus there would be a loop according to  Lemma~\ref{lemma:orbit}.

Hence there is exists no forwarding pattern without a failure set that causes a loop.
\end{proof}

We can extend the proof with Lemma~\ref{lem:s-orbit} and Corollary ~\ref{corr:s-paths-deg} to also include the source:

\begin{theorem} \label{thm:planar-s}
  There exists a planar graph $G$ on 8 nodes s.t.\ no forwarding pattern~$\pi^{s,t}$ succeeds on $G$.
\end{theorem}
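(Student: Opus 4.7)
The plan is to augment the 7-node planar graph $G$ from the proof of Theorem~\ref{thm:planar} with one additional node $s$ that plays the role of the source, and to reuse the case analysis of Theorem~\ref{thm:planar} essentially verbatim, replacing each invocation of Lemma~\ref{lemma:orbit} by a suitable invocation of Corollary~\ref{corr:s-paths-deg} (and Lemma~\ref{lem:s-orbit} at degree-$2$ nodes). Concretely, I would attach $s$ in a fixed planar embedding by placing it in a face that lies on the boundary of node $1$ (the node whose forwarding function was constrained to a cyclic permutation in the previous proof), and connect $s$ to two distinct neighbors of node $1$ chosen so that these two neighbors, together with the remaining two neighbors of node $1$, admit a pairing of vertex-disjoint $s$-to-$1$ and $1$-to-$t$ paths.

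First, I would exhibit an explicit planar drawing of the 8-node graph, checking that adjoining $s$ to the two chosen neighbors of node $1$ in the chosen face does not force any edge crossing. Second, I would verify the hypothesis of Corollary~\ref{corr:s-paths-deg} at node $1$ in the failure-free case: for each of the four neighbors $v_x$ of node $1$ there must exist a path from $s$ to node $1$ whose last edge enters through $v_x$ and that is vertex-disjoint from some path from node $1$ to $t$ leaving through a neighbor $v_y \neq v_x$. Given the explicit embedding, this reduces to a finite path-existence check. Consequently $\pi_1^{s,t}(\cdot,\emptyset)$ is forced to be a cyclic permutation of node $1$'s four neighbors, exactly as in the source-free proof.

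Next, I would replay the case split (i)--(iv) of Theorem~\ref{thm:planar} on the four possible choices of $\pi_1^{s,t}(5,\emptyset)$, reusing the three failure sets depicted in Figure~\ref{fig:planar} and additionally failing all edges incident to $s$ except one that steers the packet initially toward node $5$ (and thus makes node $1$ see the same effective in-port as in the source-oblivious proof). The bouncing property at each affected degree-$2$ node, which Theorem~\ref{thm:planar} obtained from Lemma~\ref{lemma:orbit}, now follows from Lemma~\ref{lem:s-orbit} once one verifies that $s$ is still adjacent to two relevant neighbors of that degree-$2$ node in the surviving graph; this is arranged by the choice of where $s$ is attached and is preserved under the prescribed failures because the failures avoid the $s$-incident edges that carry the certifying paths.

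The main obstacle is selecting the placement of $s$: it must be connected enough to guarantee the pairwise node-disjoint path condition of Corollary~\ref{corr:s-paths-deg} at node $1$ and the hypothesis of Lemma~\ref{lem:s-orbit} at each degree-$2$ node appearing in the failure scenarios, while being sparse enough that $s$ does not introduce alternative routes that bypass node $1$ (which would let a forwarding pattern evade the loops) and that planarity is preserved. Once a specific eight-node planar embedding is exhibited and these conditions are checked by inspection, the four cases of the earlier proof carry through unchanged and yield a failure set under which no source-aware forwarding pattern $\pi^{s,t}$ succeeds on $G$.
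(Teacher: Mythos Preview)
Your overall plan matches the paper's proof almost exactly: add a single source node $s$ adjacent to two neighbors of node~$1$ (the paper picks nodes $3$ and $5$), keep the graph planar, fail all but the edge $(s,5)$ so the packet enters node~$1$ via node~$5$, and then replay the case analysis of Theorem~\ref{thm:planar}. The only real work is re-establishing, with source-aware tools, (a) that $\pi_1^{s,t}(\cdot,\emptyset)$ is a cyclic permutation, and (b) that every degree-$2$ node in the three failure pictures must pass the packet through.

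Where your proposal diverges from the paper is in \emph{which} source-aware statement you invoke at which node, and this creates a concrete gap. You propose Corollary~\ref{corr:s-paths-deg} at node~$1$ and Lemma~\ref{lem:s-orbit} at the degree-$2$ nodes; the paper does the opposite. Lemma~\ref{lem:s-orbit} requires $s$ to be \emph{adjacent} to two relevant neighbors of the node $i$ under consideration. With $s$ having only two edges (both into neighbors of node~$1$), this hypothesis fails at several degree-$2$ nodes in the failure scenarios. For instance, in scenario~(c) node~$6$ has remaining neighbors $3$ and $4$; if $s$ is attached to $\{3,5\}$ then $s$ is not adjacent to $4$, so Lemma~\ref{lem:s-orbit} does not apply at node~$6$. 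Similar issues arise at nodes $2$ and $4$ in the other scenarios. Your remark that ``$s$ is still adjacent to two relevant neighbors of that degree-$2$ node'' cannot be arranged with a degree-$2$ source while keeping the construction at $8$ nodes.

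The fix is simply to swap the roles: use Lemma~\ref{lem:s-orbit} at node~$1$ (its hypothesis is exactly ``$s$ adjacent to two relevant neighbors of node~$1$'', which is how you chose $s$'s edges), and use Corollary~\ref{corr:s-paths-deg} at each degree-$2$ node, exhibiting for both of its surviving neighbors a pair of node-disjoint $s$--$i$ and $i$--$t$ paths. The paper carries this out by listing those path pairs explicitly for all seven occurring (node, scenario) combinations.
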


\ifExtended
The full proof is deferred to \cref{sec:planar-s}.
% due to space constraints.
\fi
\ifFull
\begin{proof}
We note that by choosing the node 5 in Figure~\ref{fig:planar} as the source, the proof for Theorem~\ref{thm:planar} would directly carry over, if we could apply Lemma~\ref{lemma:orbit} to all nodes in Figure~\ref{fig:planar}.
However, Lemma~\ref{fig:planar} only considers target-based routing that does not consider the source, and we hence need to utilize \cref{lem:s-orbit} for node 1 (all relevant neighbors are in the same orbit) and \cref{corr:s-paths-deg} for the other nodes, where the degree two case suffices.
In order to apply Lemma~\ref{lem:s-orbit} to node 1, we need to connect the new source $s$ to two relevant neighbors of $s$, where we pick nodes 3 and 5, which keeps the 7+1 node construction planar.
Note that for the actual packet path in Figure~\ref{fig:planar}(a,b,c), we can fail the link $(s,3)$, keeping $(s,5)$ alive, and if $5$ were to route back to $s$, $s$ can only bounce the packet back.
It remains to force all degree 2 nodes in Figure~\ref{fig:planar}(a,b,c) to relay incoming packets through the other port, via Corollary~\ref{corr:s-paths-deg}, which we prove by case distinction.
To this end, for each of the seven cases (nodes 2,3 for a, 2,4 for b, and 3,4,6 for c), we need to show that there are pairs of node-disjoint paths for each such node $i$, with two different surviving links $(v_1,i),(v_2,i)$, that are in $E(G)$ minus all links incident to $i$, except $(v_1,i),(v_2,i)$:\footnote{Recall that $i$ is only aware of its incident link failures.}
\begin{itemize}\setlength\itemsep{0em}
	\item start from $s$ and end with $(v_1,i)$, and start with $(v_2,i)$ and end at $t$,
	\item start from $s$ and end with $(v_2,i)$, and start with $(v_1,i)$ and end at $t$.
\end{itemize}

\vspace{-3mm}

\begin{itemize}\setlength\itemsep{0em}
	\item Figure~\ref{fig:planar}(a): Node 2 $\left[(3,2),(2,1) \right]$:  $s-5-6-3-2$ \textbf{:} $2-4-t$ and $s-5-1-2$ \textbf{:} $2-3-t$
	\item Figure~\ref{fig:planar}(a): Node 3 $\left[(1,3),(3,2) \right]$: $s-5-1-3$ \textbf{:} $3-2-t$ and $s-5-1-2-3$ \textbf{:} $3-6-t$
	\item Figure~\ref{fig:planar}(b): Node 2 $\left[(1,2),(2,4) \right]$:
		 $s-5-1-2$ \textbf{:} $2-4-t$ and $s-5-1-4-2$ \textbf{:} $2-1-t$
	\item Figure~\ref{fig:planar}(b): Node 4 $\left[(2,4),(4,1) \right]$: \item $s-5-1-4$ \textbf{:} $4-2-3-t$ and $s-3-2-4$\footnote{Note that the packet is not actually routed via $3$ in Figure~\ref{fig:planar}(b), but the node 4 must also provision for this case in order to guarantee perfect resilience.} \textbf{:} $4-1-5-6-t$
	\item Figure~\ref{fig:planar}(c): Node 3 $\left[(1,3),(3,6) \right]$: $s-5-1-3$ \textbf{:} $3-6-t$ and $s-5-6-3$ \textbf{:} $3-1-4-t$
	\item Figure~\ref{fig:planar}(c): Node 4 $\left[(1,4),(4,6) \right]$: $s-5-1-4$ \textbf{:} $4-6-t$ and $s-5-6-4$ \textbf{:} $4-1-3-t$
	\item Figure~\ref{fig:planar}(c): Node 6 $\left[(3,6),(6,4) \right]$: $s-5-1-3-6$ \textbf{:} $6-4-t$ and $s-5-1-4-6$ \textbf{:} $6-3-t$
\end{itemize}
\vspace{-3mm}
\end{proof}
\fi

\vspace{-1mm}

\section{Positive Results}
\label{sec:positive}

Despite the numerous networks in which perfect resilience
cannot be achieved, there are several interesting scenarios for which
perfectly resilient algorithms exist.
In this section, we present a particularly simple
algorithmic technique: the algorithm orders its neighbors arbitrarily, 
and reacts to link failures by simply ``skipping over'' the failed links according to this order. 
The exact forwarding function can hence be computed locally based on $F$, and the forwarding rules can be stored in linear space: a significant advantage in practice. 
This motivates us to introduce the following definition:

\begin{Definition}[Skipping Forwarding Functions]
Given a set $S$ and a function $\pi \colon S \mapsto S'$ where $S' \subset S$, define the \emph{tail} of $s$ to be the sequence $(\pi(s), \pi(\pi(s)), \pi(\pi(\pi(s))), \dots )$ for each $s \in S$. We say that a forwarding function $\pi_v^{s,t}$ is \emph{skipping} if there exists a bijection $f \colon E(v) \cup \{ \bot \} \mapsto E(v)$ such that for each failure set $F$ and each $e \in \{ \bot \}\cup E(v) \setminus F$ we have that $\pi_v
^t(e,F)$ equals the first element in the tail of $e$ (with respect to $f$) that is not in $F$. A forwarding pattern is skipping if each of its forwarding functions is skipping.
\end{Definition}

The positive results in the following subsections will all rely on such skipping.

\subsection{The Target is Close}

Assume that a source $s$ is close to the target $t$, even after the failures.
In the following, we show that it is always possible to predefine conditional forwarding rules, with skipping, which ensure a route from $s$ to $t$ if their distance is at most two hops.

\begin{theorem}\label{thm:2hops}
For all graphs $G$ there is a forwarding pattern, matching on the source, that succeeds if source $s$ and target $t$ are at distance at most 2 in $G \setminus F$.
\end{theorem}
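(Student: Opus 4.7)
The plan is to construct a skipping, source-matching forwarding pattern in which $s$ probes its neighbors in a fixed cyclic order, and every other node $v$, upon receiving a packet with source $s$ and target $t$, either forwards it to $t$ (if $(v,t)$ is a non-failed link) or bounces the packet back to $s$. Since $\dist_{G\setminus F}(s,t)\le 2$ forces at least one non-failed neighbor of $s$ to be either $t$ itself or adjacent to $t$ via a non-failed link, the probing must hit such a ``useful'' neighbor and deliver the packet before the skipping cycle at $s$ closes.

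Concretely, at $s$ I fix an arbitrary ordering $u_1,\dots,u_d$ of $s$'s neighbors and let $\pi_s^{s,t}$ be the skipping function whose underlying cyclic permutation is $u_1\to u_2\to\cdots\to u_d\to u_1$ with $\bot$ mapped to $u_1$; then $s$ first sends to the lowest-indexed non-failed neighbor and, whenever a packet returns from some $u_i$, forwards it to the next non-failed neighbor in cyclic order. At any other node $v\notin\{s,t\}$, and for a packet with source $s$ and target $t$, I use a skipping function whose underlying bijection $f_v$ pairs $(v,s)$ and $(v,t)$ as a $2$-cycle when $(v,t)\in E$, and otherwise makes $(v,s)$ a fixed point; the remaining ports may be arranged arbitrarily, since they are never reached under this pattern. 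A direct computation shows that on in-port $(v,s)$ the resulting output is $(v,t)$ if $(v,t)\in E\setminus F$ and $(v,s)$ otherwise, which is exactly the ``forward to $t$ or bounce back'' rule. Both functions depend only on the locally visible $F\cap E(v)$.

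For correctness, the distance assumption yields either $(s,t)\in E\setminus F$, or the existence of a common neighbor $w$ with $(s,w),(w,t)\in E\setminus F$; in either case $s$ has at least one non-failed neighbor $u^\star$ which is either $t$ itself or a non-failed one-hop relay to $t$. The skipping cycle at $s$ visits each of its non-failed neighbors exactly once before repeating, so after at most $|V_G(s)|$ probes the packet reaches $u^\star$ and is delivered to $t$; probes at useless neighbors simply bounce back to $s$, which then moves on to the next neighbor. The only mildly delicate point is verifying that the bounce-back behaviour fits the paper's skipping formalism, which is handled by the $2$-cycle/fixed-point construction above; otherwise the argument is essentially a short case analysis enabled by the distance-at-most-two promise.
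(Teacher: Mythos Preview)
Your proposal is correct and takes essentially the same approach as the paper: both have $s$ cycle through its neighbors via a skipping permutation while every other node forwards to $t$ if possible and otherwise bounces back to $s$. Your treatment of the bounce-back rule as a $2$-cycle (or fixed point when $(v,t)\notin E$) is in fact slightly more explicit than the paper's one-line justification, and your choice to cycle through \emph{all} neighbors of $s$ rather than only those on a pre-failure length-two path is an immaterial simplification.
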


\ifExtended
\fi
\ifFull
\begin{proof}
Resilience can be ensured with the following forwarding pattern.
Let $V(s)$ be the set of neighbors of the source node $s$
which are on a path of length two to the target $t$ before
the failures.
We define the forwarding function $\pi^{s,t}_s$ of the source $s$ as follows:

\begin{itemize}\setlength\itemsep{0em}
\item We order the neighbors $V(s)$ of $s$ arbitrarily, i.e.,
$V(s)=(v_1,v_2,\ldots,v_k)$, and source $s$ first tries to forward
to the first neighbor $v_j$ which is connected to it after the failures ($\pi_s^{s,t}(\bot) = v_j$), i.e., in a skipping fashion.

\item When a packet arrives on the in-port from $v_i$, for any $i$,
$s$ forwards the packet to $v_{j}$, the next neighbor in $V(s)$ such that $\{s, v_j\} \notin F$, i.e., skipping as well.
\end{itemize}

For all other nodes $v$, the forwarding function is defined as follows.
When a packet arrives from the source $s$, send it to the target if the link has not failed. Otherwise return the packet to the source.
Note that this is a skipping pattern, as we can predefine a skipping of $(v,t)$ to route to $s$.
This ensures that the source $s$ tries all of its neighbors which could be
on a path of length two to the target. Since we assumed such a path existed, the forwarding pattern must succeed in routing the packet to the target.
\end{proof}
\fi

The algorithm relies on the source probing all neighbors if they are connected to the destination, which then succeed or return the packet. 
We can extend this idea to not require matching on the source, if the target is at most 2 hops away after failures. 
The idea is to assign unique identifiers to the nodes, converge to a 2-hop minimum identifier, and then forward similarly to the previous proof.

\begin{theorem}\label{thm:2hops-no-s}
%There is a skipping forwarding pattern, not matching on the source, ensuring packets reach target $d$ if all nodes are at most 2 hops apart from $d$ after failures.
There is a forwarding pattern, not matching on the source, that ensures a packet reaches its target $t$ if all nodes are at most 2 hops apart from $t$ after failures.
\end{theorem}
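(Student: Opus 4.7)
The plan is to adapt Theorem~\ref{thm:2hops}'s construction to the no-source-matching setting by using unique node identifiers to implicitly encode the probing role. First I would assign distinct integer identifiers to all nodes. For each $v \ne t$, let $R(v) = \{w \in V_G(v) : (w,t) \in E(G)\}$ be the neighbours of $v$ adjacent to $t$ in $G$, ordered by identifier; the hypothesis that every node is within two hops of $t$ in $G \setminus F$, which I abbreviate $(\ast)$, guarantees that every such $v$ either has $(v,t) \notin F$ or possesses a surviving $w^\ast \in R(v)$ with $(v,w^\ast), (w^\ast, t) \notin F$.

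At each $v \ne t$ with in-port $u \in E(v) \cup \{\bot\}$, I would define the forwarding rule as follows. If $(v,t) \in E$ and $(v,t) \notin F$, deliver directly to $t$. Otherwise: if $u = \bot$, forward to the first surviving element of $R(v)$; if $u \in R(v)$, advance to the next surviving element of $R(v)$ after $u$ in cyclic identifier order, skipping failed links; and if $u \notin R(v) \cup \{\bot\}$, return the packet to $u$. This rule implements a deterministic probing of the relays adjacent to $t$, with the probing ``owner'' role implicitly identified by the in-port lying outside $R(v)$ rather than by matching on the source.

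For correctness I would first dispatch the easy case where the source $s$ is not itself in $V_G(t)$. Then $s \notin R(w)$ for every $w \in R(s)$, so each failed probe at $w$ (i.e., $(w,t) \in F$) triggers the third branch of the rule and bounces the packet back to $s$, which then advances through $R(s)$ in cyclic identifier order. Hypothesis $(\ast)$ ensures some $w^\ast \in R(s)$ eventually succeeds, and a short induction on the identifier-rank of $w^\ast$ in $R(s)$ closes this case. This mirrors the source-matched proof of Theorem~\ref{thm:2hops} almost line by line, since the third branch of the rule exactly emulates the ``relay returns packet to source'' step of that proof.

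The main obstacle is the hard case $s \in V_G(t)$ with $(s,t) \in F$: now $s \in R(w)$ for every relay neighbour $w$ of $s$, so a failed probe at $w$ no longer bounces back to $s$ but instead advances inside $R(w)$, causing the packet to drift among nodes adjacent to $t$ rather than returning to its source. To rule out infinite loops I would invoke the finiteness of the $(v,u)$-state space: any non-terminating walk enters a cycle whose visited nodes must all be relays $r$ with $(r,t) \in F$ (otherwise the direct-delivery rule would fire and exit the cycle). By $(\ast)$, each such $r$ has a surviving relay candidate not in the cycle, and the cyclic-identifier skipping semantics force the walk to eventually land on that candidate, contradicting the assumption that the walk cycles. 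The resulting convergence is toward the minimum identifier among the surviving $2$-hop neighbours of $t$, which is what the paper's informal hint about ``converging to a $2$-hop minimum identifier'' appears to describe.
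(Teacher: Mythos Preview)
Your easy case is fine, but the hard-case argument has a genuine gap, and in fact the forwarding rule you propose is incorrect. The sentence ``the cyclic-identifier skipping semantics force the walk to eventually land on that candidate'' conflates two different things: the cyclic permutation on $R(r)$ does run through all of $R(r)$, but once the \emph{walk} has entered a cycle it visits each node $r$ with one fixed in-port and therefore always exits via the same fixed out-port, which need not be $w^\ast$.

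Here is an explicit counterexample. Take $V=\{t,0,1,3,4,5,6\}$, make every node of $\{0,1,3,4,5,6\}$ adjacent to $t$, add the triangle on $\{1,3,5\}$, and add the three edges $(3,4),(1,6),(5,0)$. Fail $F=\{(1,t),(3,t),(5,t)\}$. Condition $(\ast)$ holds: $0,4,6$ are at distance~$1$ from $t$, and $1,3,5$ are at distance~$2$ via $6,4,0$ respectively. Then $R(3)=\{1,4,5\}$, $R(1)=\{3,5,6\}$, $R(5)=\{0,1,3\}$, and from $s=3$ your rule produces $3\to 1$ (first of $R(3)$), then $1\to 5$ (next after $3$ in $R(1)$), then $5\to 3$ (next after $1$ in $R(5)$), then $3\to 1$ (next after $5$ in $R(3)$ wraps to $1$), looping forever without ever touching $0$, $4$, or $6$.

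The paper's construction is genuinely different and avoids this trap: a node $v$ with $(v,t)\in F$ does \emph{not} restrict attention to $R(v)$. Instead, if $v$ is not the minimum identifier in its surviving neighbourhood it simply forwards to that minimum-ID neighbour; only if $v$ \emph{is} a local minimum does it run a skipping cyclic permutation over all its neighbours. The packet thus descends by identifier until it reaches a node $w$ that is a minimum in its $2$-hop neighbourhood; every neighbour of such a $w$ has $w$ as its own minimum-ID neighbour and must bounce the packet back, so $w$ really does get to probe all of its neighbours, one of which is adjacent to $t$ by $(\ast)$. Your reading of the hint ``converging to a $2$-hop minimum identifier'' as a property of your own scheme is precisely what hides the bug.
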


\ifExtended
The full proofs are deferred to \cref{sec:2hops-no-s,sec:2hops}.
% due to space constraints.
%
\fi
\ifFull
\begin{proof}
Let us assign a unique ID to every node and construct the forwarding pattern based on these identifiers.
The forwarding function of each node $v \in V \setminus \{s,t\}$ is defined as follows:
\begin{itemize}\setlength\itemsep{0em}
	\item If $t$ is a neighbor of $v$, forward to $t$.
	\item If $t$ is not a neighbor, then
	\begin{itemize}\setlength\itemsep{0em}
		\item if $v$ does not have the lowest ID in its neighborhood, forward to the neighbor with lowest ID, i.e., skipping potential lost neighbors with lower ID,
		\item if $v$ has the lowest identifier in its neighborhood, then route according to a skipping pre-defined cyclic permutation of all neighbors.
	\end{itemize}
\end{itemize}
Observe that, unless the target is reached on the way, the packet will reach a node $w$ that has the lowest identifier in its two-hop neighborhood.
From there on, the argument is analogous to the proof of Theorem~\ref{thm:2hops}:
The node $w$ will forward the packet to all its one-hop neighbors, which in turn bounce it back to $w$, unless connected to $t$. Since, by assumption, one of them is connected to $t$, the forwarding succeeds.
\end{proof}
\fi

\subsection{Planar and Outerplanar Graphs}\label{subsec:outerplanar}

A graph is outerplanar if there exists a planar embedding such that all nodes are part of the outer face.
In other words, there is a walk along the links of the outer face that visits all nodes for connected graphs.
This property holds also after arbitrary failures as long as the graph remains connected.
Thus we can route along the links of the outer face of a planar graph using the
well known right-hand rule~\cite{reason:AbeDiS81a,Bondy1976} despite failures.\footnote{Such face routing was first considered over 20 years ago for ad-hoc networks by Kranakis et al.~\cite{DBLP:conf/cccg/KranakisSU99} and Bose et al.~\cite{DBLP:conf/dialm/BoseMSU99}. More involved face-routing algorithms have been devised and analyzed e.g.~\cite{DBLP:conf/sirocco/WattenhoferWW05,DBLP:journals/tc/FreyS10} and go beyond the simple right-hand rule algorithm used for Theorem~\ref{thm:outerplanar}
and Corollary~\ref{cor:sameface}.
They have mostly been studied without considering failures and use either state in packet or at nodes
if source and target are not on the same face. We refer to some reference articles~\cite{DBLP:reference/algo/Zollinger16,DBLP:journals/ton/KuhnWZ08} and book chapters~\cite{doi:10.1002/047174414X.ch12,doi:10.1002/0470095121.ch11} for in-depth discussions and to the article by Behrend~\cite{labyrinth} for a historical overview.}

\begin{theorem}\label{thm:outerplanar}
Let $G=(V,E)$ be an outerplanar graph. % with $t \in V$.
Then, there is a perfectly resilient skipping forwarding pattern $\pi^t$ which does not require source matching.
\end{theorem}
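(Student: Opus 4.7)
The plan is to implement classical right-hand face routing along the outer face of an outerplanar embedding. First I would fix a planar embedding of $G$ in which every vertex lies on the outer face (guaranteed by outerplanarity), and at each vertex $v$ take the counterclockwise cyclic order of incident edges given by this embedding. Define the bijection $f_v \colon E(v) \cup \{\bot\} \to E(v)$ by letting $f_v(e)$ be the edge immediately following $e$ in this cyclic order, and $f_v(\bot)$ be an edge whose traversal from $v$ leaves the outer face on its right (concretely, the edge immediately CCW after some outer-face angular sector at $v$). By construction the induced forwarding function is skipping, depends only on $v$, and ignores the source.

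For perfect resilience I would establish three facts. (i) For every failure set $F$, the embedding of $G\setminus F$ inherited from $G$ still has every vertex of $G\setminus F$ on its outer face: deleting an edge only merges the two faces it separated, so the outer region can only grow, while vertices on its boundary remain on its boundary. (ii) At each vertex $v$, the skipping rule coincides with the right-hand face-traversal rule in the inherited embedding, because advancing to the next surviving edge counterclockwise after the in-port is exactly the edge face routing picks for the face immediately to the right of the incoming directed edge. (iii) The initial edge actually used at $s$ after skipping failed edges still bounds the outer face of $G\setminus F$ on the correct side: if the designated $f_s(\bot)$ failed, the inner face it bounded has been absorbed into the outer face, so the next counterclockwise surviving edge now takes its place as an outer-face edge, and this extends inductively if several consecutive edges have failed.

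Combining (i)--(iii), the packet starting at $s$ follows the closed walk bounding the outer face of $s$'s connected component in $G\setminus F$ under the right-hand rule, and by (i) this walk visits every vertex of the component, including $t$, in finitely many hops. The main technical obstacle I anticipate is a clean justification of (ii) at cut vertices of $G\setminus F$, where the outer face may occupy more than one angular sector at a single vertex: one has to check that whenever the packet arrives at such a vertex along an edge bounding the outer face, the counterclockwise skipping step deterministically continues the outer-face traversal rather than diverting into an inner face. This reduces to the local observation that the CCW-next rule on the surviving cyclic order is precisely the face-next rule for whichever face lies to the right of the incoming directed edge, so as long as the packet enters each cut vertex along an outer-face edge (which is maintained by (iii) applied inductively along the walk), it continues along the outer face.
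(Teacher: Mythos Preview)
Your proposal is correct and follows essentially the same approach as the paper: fix an outerplanar embedding, implement the right-hand rule as ``advance to the next incident edge in the cyclic order, skipping failed links,'' and argue that this traces the outer-face boundary walk of $G\setminus F$, which visits every vertex by outerplanarity. Your decomposition into facts (i)--(iii) and your explicit remark about cut vertices are slightly more structured than the paper's inductive phrasing, but the underlying argument is the same.
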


\ifExtended
\fi
\ifFull
\begin{proof}
Fix an arbitrary outerplanar embedding of the graph without failures (a consistent 
embedding must be used on all nodes when constructing the forwarding pattern). 
Without loss of  generality we define the canonical direction to be clockwise and we 
number the ports of the nodes accordingly starting from an arbitrary~port. 

If a link $l$ belongs to the outer face of a planar graph $G$, it also belongs
to the outer face for all subgraphs of $G$, in particular also in $G\setminus F$ for $l \notin F$. Since 
all nodes belong to the outer face for an outerplanar graph, it is hence enough to 
demonstrate that the forwarding pattern ensures packets use only the links of the 
outer face and do not change the direction despite failures.

Given an in-port belonging to a link on the outer face of a graph in clockwise 
direction we can decide locally which out-port belongs to the outer face to continue 
the walk using the right-hand rule encoded in the the forwarding function with $F$.

The routing algorithm assigns the following forwarding functions to node $v$ with 
degree $d$: $\pi_v^t(x) = x +1 \bmod d$. Hence the forwarding
function assigns out-port~$x +1 \bmod d$  to in-port $x$.
When node $s$ sends a packet to $t$ it picks an arbitrary out-port
that belongs to the outer face in the clockwise direction.
When failures occur, skipping is used to derive the next usable
output. 

As we will see next, this is enough to encode the information of which links belong 
to the outer face of $G\setminus F$. To this end, we show that this forwarding pattern 
routes a packet along the links of the outer face for an arbitrary link failures $F$ in 
clockwise direction.

We show this claim for the source and then for all other nodes on the
path to the target by~induction.

Each node has at least one pair of an incoming port and an outgoing
port on the outerplanar face of $G$ (for a node with degree one these two
ports belong to the same link).  Let $(i_j, o_j)$ denote the $j^{th}$
such pair out of $k$ for node $v$ in $G$, starting to number them from
port 0 at node $v$ in clockwise order.
Note that it holds for all these $k$ pairs that $o_j = i_j + 1 \bmod d$.
%Since failures only reduce the number of faces, the outer face may contain more
%edges after failures but not fewer, as long as the graph is still connected.

Without failures, the source selects an outgoing port on the outer face in
the correct direction by definition. Without loss of generality, let this out-port be
$o_0$ from the pair $(i_0, o_0)$ defined above. If the corresponding link
has failed, the routing scheme will try port $o_0 + j \bmod d$ for
$j = 1, \dots, d-1$ until successful according to the skipping forwarding pattern.
Due to the failures, the corresponding link is now on the outer face and the
statement thus holds for the base case.

For a subsequent node $v$ on the walk to the target the packet will enter the
node on  port $i_j$  for some $j$ if no failures affect this node. In this case,
the routing scheme selects $i_j + 1 \bmod d$ as the out-port, which
happens to be $o_j$ as discussed above. Thus the packet remains using only links
of the outer face travelling in the right direction on $G\setminus F$. If there has been a failure affecting
node $v$, then the outgoing port might no longer be available. As shown for the source,
iterating over the next ports clockwise will guarantee that the packet is still using a link
of the outer face of $G\setminus F$. If the link incident to $i_j$ is down, then $v$'s neighbors
will select another link and the packet might enter the node on a port that
is not among the pairs for the graph without failures. Nevertheless, due to
the arguments above this implies that the incoming port now is part of
the walk along the links of the outer face in clockwise direction
 and also the outgoing port
chosen subsequently is part of the links on the outer face of $G\setminus F$.
Hence the routing scheme succeeds to lead packets to the target successfully
 without maintaining state in packets or at nodes.
\end{proof}
\fi

The face-routing pattern is even target-oblivious: starting on any node, it will visit every node.
Moreover, while we have seen earlier that planar graphs do not offer perfect resilience, the approach guaranteeing perfect resilience for outerplanar graphs can also be applied for planar graphs if source and  target are on the same face.

\begin{corollary}\label{cor:sameface}
%There is a routing algorithm that ensures a packet emitted by source node~$s$ reaches its target $t$ for any two nodes $s$ and $t$ belonging to the same face of a graph.
%
Let $G=(V,E)$ be a planar graph where packets start on the same face as $t$.
Then, there is a perfectly resilient skipping forwarding pattern $\pi^{t}$.
\end{corollary}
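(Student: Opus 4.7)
The plan is to adapt the face-routing argument from Theorem~\ref{thm:outerplanar} by applying the right-hand rule to the specific face $f$ of the planar embedding shared by $s$ and $t$, rather than the outer face. Concretely, I would fix an arbitrary planar embedding of $G$ in which $s$ and $t$ both lie on the boundary of a common face $f$ (such an embedding exists by hypothesis; we may even re-embed so that $f$ becomes the outer face, though this is not strictly needed). I would then define the forwarding pattern exactly as in the outerplanar case: at each node $v$ of degree $d$, number its ports $0, 1, \ldots, d-1$ clockwise according to the embedding and set $\pi^t_v(x) = (x+1) \bmod d$, with skipping over ports whose incident link is in $F$. The source injects the packet on the out-port incident to $f$ in the correct clockwise direction.

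The key topological fact I would invoke is that deleting an edge from a planar embedding merges the two faces adjacent to that edge into a single face and never splits a face. Applying this iteratively to the edges in $F$, the face $f$ of $G$ becomes part of a (possibly much larger) face $f'$ of $G \setminus F$, and every node on the boundary walk of $f$ remains on the boundary walk of $f'$. In particular both $s$ and $t$ still lie on the boundary of $f'$, and because they are connected in $G \setminus F$ by assumption, the boundary walk of $f'$ yields a walk from $s$ to $t$ in $G \setminus F$.

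It then remains to verify that the clockwise-increment-with-skipping rule actually performs a right-hand traversal of $f'$. This follows by the same induction as in the proof of Theorem~\ref{thm:outerplanar}: in the base case, skipping at $s$ selects the first non-failed clockwise port after the port originally on $f$, which is precisely the first edge of the boundary walk of $f'$; in the inductive step, if the packet enters $v$ along an edge of $f'$'s boundary with $f'$ on its right, then the next non-failed port clockwise at $v$ is again on $f'$, because any failed clockwise port merely collapses $f$ with the neighboring face across that failed edge, yielding exactly the local boundary of $f'$ at $v$. Hence the packet follows the boundary walk of $f'$ until it reaches $t$.

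The main obstacle I anticipate concerns non-2-connected structure: if $F$ introduces bridges or cut vertices, the boundary walk of $f'$ may visit some vertices more than once and traverse some edges in both directions, and one must check that the skipping rule does not ``short-circuit'' across such a vertex. This situation is already implicit in Theorem~\ref{thm:outerplanar} (e.g., degree-one nodes traverse the same link as both incoming and outgoing boundary edge), and the same reasoning applies here: the mod-$d$ clockwise increment with skipping correctly picks the next boundary edge of $f'$ at each visit to the node, independently of how many times the boundary walk passes through it.
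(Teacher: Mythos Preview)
Your proposal is correct and follows essentially the same approach as the paper: adapt the right-hand-rule algorithm of Theorem~\ref{thm:outerplanar} so that the source launches the packet along the shared face~$f$, and use the topological fact that link deletions only merge faces (the paper phrases this as ``failures only reduce the number of faces'') to conclude that $s$ and $t$ remain on a common face of $G\setminus F$. The paper's proof is terser and also remarks, as you do, that one may alternatively re-embed so that $f$ is the outer face; your additional discussion of the induction step and of bridges/cut vertices is sound but goes beyond what the paper spells out.
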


\ifExtended
The full proofs are deferred to \cref{sec:outerplanar,sec:sameface}. %due to space constraints.
\fi
\ifFull
\begin{proof}
The algorithm from the proof for Theorem~\ref{thm:outerplanar} only needs to be
adapted at the start. Since the source and the target might not be
part of the outer face in $G$, the source chooses an arbitrary (clockwise direction) out-port that belongs to the face it shares with the target (some nodes may share several faces with the target, any of them can be chosen in this case).
Since failures only reduce the number of faces, the shared face may contain more
links after failures but not fewer and both source and target will always
belong to the same face on $G\setminus F$, as long as the graph is still connected.
Thus following the links of the face in clockwise direction will lead to the target
for arbitrary non-disconnecting failures.
\end{proof}
Another way of thinking about the proof is that we can also choose the embedding of the planar graph s.t.\ any desired face is the outer face, see e.g.\ Schnyder~\cite{DBLP:conf/soda/Schnyder90}.
As before, note that our forwarding pattern can visit every node on the face despite failures. %, starting from any node on the face.
\fi

\subsection{A Non-Outerplanar Planar Graph with Perfect Resilience: $K_4$}

So far, we established that perfect resiliency is possible on outerplanar graphs as well as on the same face of planar graphs, and that it is impossible on some planar graphs and on all non-planar graphs.
This raises the question if perfect resilience is possible on some non-outerplanar planar graphs, which we answer in the affirmative for $K_4$, the complete graph with four nodes: we employ forwarding along a cyclic permutation, unless the target is a neighbor.

\begin{theorem}\label{thm:k4}
$K_4$ allows for perfectly resilient forwarding patterns $\pi^t$ with skipping, i.e., without the source. %o match on the source.
\end{theorem}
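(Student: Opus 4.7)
The plan is to construct an explicit skipping forwarding pattern and verify perfect resilience by a short case analysis on how many of the three target-incident links have failed. Label the non-target nodes as $1,2,3$. At each non-target node $v \in \{1,2,3\}$ I would fix the cyclic order on its three incident ports as $(t,\, v{+}1,\, v{+}2)$, with indices reduced modulo $3$ to representatives in $\{1,2,3\}$, and set the starting port $f_v(\bot) = t$. Operationally this encodes: on each invocation, $v$ first probes its link to $t$ whenever the in-port is $\bot$ or the cyclic predecessor of $t$, and otherwise rotates through the two remaining non-target neighbors, always skipping ports that lie in $F$.

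To verify that the resulting skipping pattern $\pi^t$ is perfectly resilient, fix an arbitrary failure set $F$ and a source $s$ connected to $t$ in $G\setminus F$. Let $V_F := \{v \in \{1,2,3\} : (v,t) \in F\}$; the analysis splits on $|V_F|$. If $|V_F| = 0$ the source delivers directly on the first hop. If $|V_F| = 3$ then $t$ is isolated and there is nothing to prove. If $|V_F| = 1$, say $V_F = \{v\}$, then every non-target node other than $v$ delivers an arriving packet to $t$ in one hop; at $v$ itself the skipping cycle tries $t$ (failed) and then the two remaining non-target neighbors in cyclic order, and the connectivity of $v$ to $t$ in $G\setminus F$ forces at least one of those links to be alive.

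The substantive case is $|V_F| = 2$. Write $V_F = \{i,j\}$ and $\{k\} = \{1,2,3\}\setminus V_F$, so only $k$ has $(k,t)\notin F$. A source at $k$ delivers immediately, so I would trace the deterministic walk for a source in $\{i,j\}$. By design each of $i$ and $j$ first attempts $t$, skips it, and then probes its two non-target neighbors in cyclic order; hence the packet either reaches $k$ on an alive link (after which $k$ forwards to $t$) or is routed to the other $V_F$-node along the surviving intra-$V_F$ link. The main obstacle is ruling out a persistent $i \leftrightarrow j$ ping-pong: such a loop requires both $(i,k)$ and $(j,k)$ to lie in $F$, but together with $(i,t),(j,t) \in F$ this severs every edge from $\{i,j\}$ to $\{k,t\}$, contradicting the premise that $s$ and $t$ lie in the same component of $G\setminus F$. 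Hence the walk terminates at $t$ in every non-vacuous configuration, establishing perfect resilience.
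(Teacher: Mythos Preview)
Your explicit construction is correct---it is precisely the skipping right-hand rule on a planar embedding of $K_4$ with $t$ in the interior---but the case analysis contains a real gap. The assertion that $k$ (and, in the $|V_F|=1$ case, any node with a live $t$-link) ``forwards to $t$'' on arrival is false: with the cyclic order $(t,\,v{+}1,\,v{+}2)$ at $v$, a packet arriving on in-port $v{+}1$ is sent to $v{+}2$, not to $t$. Concretely, take $V_F=\{1,2\}$ (so $k=3$) and additionally fail $(1,2)$. The source $1$ skips $t$ and $2$ and hops to $3$, but $f_3(1)=2$, so node $3$ sends the packet to $2$ rather than to $t$. The walk does eventually terminate at $t$ (via $1\to 3\to 2\to 3\to t$), but your argument does not cover this detour. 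Relatedly, you rule out only the $i\leftrightarrow j$ ping-pong, yet your pattern also supports the outer-face $3$-cycle $1\to 3\to 2\to 1$ whenever all three triangle links are alive; you would need to argue why the trajectory from $\bot$ never enters it. A correct direct verification must track the in-port at each hop, not just which node the packet reaches.

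The paper sidesteps this bookkeeping entirely: $K_4\setminus\{t\}$ is a triangle, hence outerplanar, so by Theorem~\ref{thm:outerplanar} the skipping right-hand rule visits every node of the triangle under arbitrary failures; inserting the $t$-port into each node's cyclic order then guarantees that a live link to $t$ is probed along the way.
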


\ifExtended
The full proof is deferred to \cref{sec:thm:k4}.
% due to space constraints.
\fi
\ifFull
%\begin{proof}
%%
%Let the four nodes of the $K_4$ w.l.o.g.\ be $\{v_1,v_2,v_3,v_4=t\}$.
%%
%The node $v_4=t$ must have at least one connected neighbor, w.l.o.g.\ $v_3$, where we are done if we start on $v_3$ or $v_4$.
%%
%
%%
%It remains to fix the routing for any node not neighboring the target (where we directly reach $t$), for which we choose a cyclic permutation of all neighbors, picking skipping.
%%
%Hence, if the packet starts on w.l.o.g.\ $v_1$, where $v_1$ is part of the connected component of $t$, and reaches $v_3$ or $v_4$, perfect resilience is achieved.
%We now perform a case distinction, where we note that while the node $t$ is distinguishable as a neighbor, for $v_1$ it is unknown which of its neighbors is connected to $t$.
%%
%
%%
%Assume $v_1$ has only one neighbor, then forwarding succeeds after two or three hops (a line).
%%
%If $v_1$ has two neighbors and picks the one neighboring the target, we are done after two hops.
%%
%Else, assume it picks the wrong one, say $v_2$.
%%
%Then, if $v_2$ only has $v_1$ as a neighbor, the packet bounces back, then reaches $v_3$ and then $t$.
%%
%Lastly, if $v_2$ is connected to $t$, we are done, and if not, it is connected to $v_3$, which then forwards the packet to $t$.
%\end{proof}
\fi

%We note that the above proof can also be simplified with the following argument: the $K_4$ becomes outerplanar after removal of the destination.
%
\begin{proof}
From \S\ref{subsec:outerplanar}, we know that we can visit every node on a connected outerplanar graph using the skipping right-hand rule.
Hence, if a graph becomes outerplanar after removal of the destination, as $K_4$ does, it allows for perfect resilience, by traversing all nodes, each time checking if they neighbor the destination.
\end{proof}

\begin{corollary}
Let $G'=(V\setminus\{t\},E)$ be outerplanar. 
Then $G=(V,E)$ allows for perfectly resilient forwarding patterns $\pi^t$ with skipping, i.e., without the source.
\end{corollary}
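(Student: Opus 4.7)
I plan to generalize the argument sketched in the proof of Theorem~\ref{thm:k4}: since $G' = G\setminus\{t\}$ is outerplanar, I will use the outerplanar face-routing of Theorem~\ref{thm:outerplanar} to traverse $G'$, and at every node $v$ adjacent to $t$ I will splice the direct link $(v,t)$ into the cyclic port ordering. Once a live link to $t$ is taken, the packet is delivered; otherwise the face-walk on $G'$ keeps probing other nodes until such a link is found.

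More concretely, I would fix an outerplanar embedding of $G'$ and, at each $v\in V\setminus\{t\}$, take the cyclic port order on $E_{G'}(v)\cup\{\bot\}$ that Theorem~\ref{thm:outerplanar} uses. For every $v$ adjacent to $t$ I then insert the port $(v,t)$ at a fixed position of this cyclic order (for instance immediately after a designated outer-face in-port of $v$). The resulting cyclic permutation of $E_G(v)\cup\{\bot\}$ defines a skipping forwarding function $\pi^t_v$ in the sense of the paper, and the pattern is source-oblivious. For correctness, suppose $s$ and $t$ are connected in $G\setminus F$ and pick any path $s=v_0,\ldots,v_k=t$ in $G\setminus F$. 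Setting $w=v_{k-1}$, the prefix $v_0,\ldots,v_{k-1}$ lies entirely in $G'\setminus(F\cap E(G'))$, so $s$ and $w$ belong to the same connected component of $G'\setminus(F\cap E(G'))$ and $(w,t)\notin F$. Theorem~\ref{thm:outerplanar} guarantees that the face-routing on $G'$ visits every node in that component, so the packet reaches $w$.

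The main obstacle is showing that the packet actually leaves $w$ via $(w,t)$ rather than slipping past $w$ and continuing along the outer face. I plan to handle this by exploiting that the face-walk of Theorem~\ref{thm:outerplanar} is a closed walk on the outer face of $G'\setminus(F\cap E(G'))$, so over one full traversal the packet enters each node $w$ through each of its alive outer-face in-ports. By placing $(w,t)$ in the cyclic order immediately after one such outer-face in-port of $w$, the visit that enters $w$ on that in-port has $(w,t)$ as the first element in its tail; since $(w,t)\notin F$, the forwarding selects $(w,t)$ and delivers the packet. As in the proof of Theorem~\ref{thm:k4}, this establishes perfect resilience, and the pattern is skipping and source-oblivious by construction, so both conclusions of the corollary follow.
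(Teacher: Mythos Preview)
Your approach is the paper's approach, stated more carefully: both run the skipping right-hand rule of Theorem~\ref{thm:outerplanar} on $G'=G\setminus\{t\}$ and splice $(v,t)$ into the cyclic port order at each $t$-neighbour $v$. The paper's one-sentence proof (``traversing all nodes, each time checking if they neighbor the destination'') does not actually say where $(v,t)$ goes in the order so that the pattern remains skipping; you do, and your choice---immediately after an outer-face in-port of $G'$---is the natural one.

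There is, however, a gap in your correctness argument. You place $(w,t)$ right after a designated outer-face in-port $p$ of $w$ in $G'$ and then want the outer-face walk of $G'\setminus F$ to enter $w$ on $p$, so that the next hop is $(w,t)$. When $p\notin F$ this works: an edge on the outer face of $G'$ stays on the outer face of every subgraph containing it, so $p$ is indeed among the ``alive outer-face in-ports'' you invoke. But when $p\in F$ the walk cannot enter $w$ on $p$ at all, and your argument as written says nothing---yet the placement of $(w,t)$ was fixed independently of $F$, so this case must be handled. The fix is short: let $q$ be the last alive port at $w$ at or before $p$ in the cyclic order, so that all ports from $q{+}1$ through $p$ have failed. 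In $G'$ the outer face touches $w$ in the angular sector between $p$ and $p{+}1$; deleting the failed edges $q{+}1,\ldots,p$ merges the whole sector between $q$ and $p{+}1$ into the outer face of $G'\setminus F$, so $q$ is an outer-face in-port of $G'\setminus F$ and the walk enters $w$ on $q$. From $q$ the skipping rule then passes over the failed ports $q{+}1,\ldots,p$ and hits $(w,t)$ as the first non-failed out-port, delivering the packet. With this case added, your argument is complete.
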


%\balance

\subsection{Graph-Families Closed under Subdivision} \label{sec:subdivision}

In this section we prove that under certain conditions, skipping forwarding functions retain any resilience guarantees.
Our result applies to any graph family~$\G$ that is closed under link subdivision. A family~$\G$ is closed under link subdivision if for all $G \in \G$ the graph $G'$ constructed by replacing a link $(u,v)$ of $G$ by a new node $w$, and two new links $(u,w)$ and $(w,v)$ is also in $\G$.%
\footnote{Chiesa et al.~\cite[Fig.~9(b)]{DBLP:journals/ton/ChiesaNMGMSS17} used a similar subdivision idea, where each link was replaced with three links and two new nodes, where we use two links and one new node. However, Chiesa et al.\ used this idea in a different context, namely to show the impossibility of perfect resilience on a planar graph example.}
For example planar and bounded genus graphs are closed under subdivision, but graphs of bounded diameter are~not.

\begin{theorem}\label{thm:sim-argument}
 Let $\G$ be a family of graphs that is closed under the subdivision of links. Assume that for each $G \in \G$ there exists a $f(m)$-resilient forwarding pattern (with or without source matching). Then there exists a $f(m)$-resilient skipping forwarding pattern with the same matching for each $G \in \G$, where $f$ is any monotone increasing function.
\end{theorem}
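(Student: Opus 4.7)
The plan is a simulation argument via subdivision. Given $G \in \G$ with $m$ edges, construct $G'$ by subdividing every edge $e=(u,v) \in E(G)$ with a fresh degree-$2$ vertex $w_e$; then $G' \in \G$ has $2m$ edges and admits an $f(2m)$-resilient pattern $\pi'$ by hypothesis.

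The first step is to put $\pi'$ into a convenient normal form. At every subdivision node $w_e$, replace $\pi'_{w_e}$ by the \emph{swap-with-bounce} pattern: a packet arriving from one neighbor is forwarded to the other if the connecting link is alive, and bounced back otherwise. This is the canonical behavior at a degree-$2$ node and the substitution preserves $f(2m)$-resilience. At each original $u \in V(G)$, identify ports $(u,w_e) \in E_{G'}(u)$ with edges $e \in E_G(u)$ via $(u,w_e) \leftrightarrow e$, and extract a cyclic bijection $f_u$ on $E_G(u)$ together with a starting port $f_u(\bot) \in E_G(u)$ from $\pi'_u$'s failure-free behavior. Further, replace $\pi'_u$ itself by the canonical skipping function driven by $f_u$; since the swap-with-bounce at neighboring $w_e$'s already reacts to any failure on the far side by returning the packet, the failure-conditioned choices of $\pi'_u$ become equivalent to skipping, and the substitution does not cost resilience. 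Define $\pi$ on $G$ as skipping with $f_u$ at each node.

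To verify $f(m)$-resilience of $\pi$, fix any $F \subseteq E(G)$ with $|F| \leq f(m)$ and assume $s,t$ are connected in $G \setminus F$. Lift $F$ one-sidedly to $F' := \{(u, w_e) : e = (u,v) \in F\}$, where for each failed edge $e$ a single fixed endpoint (e.g., the smaller-ID one) contributes the failure. Then $|F'| = |F| \leq f(m) \leq f(2m)$, and every $w_e$ with $e \in F$ becomes a dead-end in $G' \setminus F'$ so $s,t$ remain connected there. Run $\pi$ on $G \setminus F$ and $\pi'$ on $G' \setminus F'$ in parallel and show by induction on hop count that both executions visit the same sequence of original vertices. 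At each inductive step, a skipping attempt on port $e$ at $u$ in $G$ corresponds in $G'$ to the packet moving $u \to w_e$, which either forwards to $v$ when $e \notin F$, or effectively returns to $u$ when $e \in F$: either $(u,w_e) \in F'$ and $\pi'_u$ (now skipping) skips over $e$, or $(w_e,v) \in F'$ and swap-with-bounce returns the packet, after which $\pi'_u$ advances to $f_u(e)$. Since $\pi'$ delivers the packet to $t$, so does $\pi$.

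The main obstacle is justifying the normal-form substitution at original nodes, i.e., that $\pi'_u$ can be replaced by its canonical skipping variant without loss of $f(2m)$-resilience. The underlying intuition is that after subdivision nodes are fixed to swap-with-bounce, the only observable behavior of $\pi'_u$ under failures is the cyclic order in which $u$ attempts its ports; different failure-conditioned choices of $\pi'_u$ that agree in the failure-free case lead to the same set of reachable continuations, and hence to the same set of successful executions. A rigorous formalization must verify this equivalence by comparing, for each reachable configuration, the trajectories of $\pi'$ and of its skipping variant.
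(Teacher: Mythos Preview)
The gap you flag as ``the main obstacle'' is real and, as stated, fatal. Replacing $\pi'_u$ at each original node by the skipping function induced by its failure-free behavior, and asserting that this preserves $f(2m)$-resilience, is circular: that is precisely the content of the theorem, merely pushed to $G'$. Your intuition that ``swap-with-bounce at neighboring $w_e$'s already reacts to any failure on the far side'' does not cover the relevant case, because your one-sided lifting $F'$ places failures $(u,w_e)$ \emph{incident to the original node} $u$; hence $\pi'_u(\cdot, F'\cap E_{G'}(u))$ genuinely varies with $F$, and there is no single base function to skip over. (Two secondary issues compound this: $\pi'_u(\cdot,\emptyset)$ need not be a cyclic permutation of $E_{G'}(u)$ at all, so ``extract a cyclic bijection $f_u$'' is already ill-posed; and the swap-with-bounce substitution at the subdivision nodes is itself not obviously resilience-preserving for an arbitrary $f(2m)$-resilient $\pi'$.)

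The paper avoids all of this by subdividing one step further: replace each $(u,v)$ by \emph{three} links $(u,uv),(uv,vu),(vu,v)$ with two new nodes, obtaining $H\in\G$ with $M=3m$ links, and lift $F$ to the set of \emph{middle} links $F'=\{(uv,vu):(u,v)\in F\}$. Now $|F'|=|F|\le f(m)\le f(M)$ and, crucially, no link of $F'$ is incident to any original node, so the forwarding function of the $f(M)$-resilient pattern $\phi^t$ at every original node is the fixed failure-free one across all lifted failure sets. The skipping pattern on $G$ is read off directly from this fixed function (with a small wrinkle for links that are ``cut'' because the subdivision nodes always bounce), and the simulation goes through without any normalization step: a failed edge in $G$ corresponds in $H$ to a bounce at the new nodes that returns the packet to $u$ on the same port, so $u$ applies the same fixed function again --- which is exactly skipping. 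Single subdivision cannot reproduce this, since every lifted failure is incident to some original node.
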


\begin{proof}[Proof of Theorem~\ref{thm:sim-argument}]
  Consider any $G \in \G$ and let $|E(G)| = m$. Consider a graph $H$ that is constructed by subdividing each link $e = (u,v) \in E(G)$ into three links $e_u = ( u, uv ), e_v = ( v, vu ),$ and $e_{uv} = ( vu, uv )$, where $uv$ and $vu$ are the two new nodes introduced in the subdivision. We call these nodes the \emph{new nodes} and the other nodes \emph{old nodes} of $H$. Let~$M = |E(H)|$. We prove the theorem for forwarding patterns with source matching -- the case with source matching is similar.

  Now given a set of failed links $F \subseteq E$ on $G$, we can consider the corresponding failure set $F' = \{ (uv, vu ) \in E(H) : (u,v) \in F \}$ in $H$. Since $m \leq M$ and therefore $f(m) \leq f(M)$, there must exist a $f(M)$-resilient forwarding pattern $\phi^t$ for $H$ and any $t \in V(H)$. Since only the middle links of the subdivided links fail, the forwarding functions of the old nodes must remain constant over all $F'$ constructed from the failure sets $F$. % of $G$.

  Before simulating $\phi^t$ on $G$, we take care of a technicality. We say that a link $(u,uv) \in E(H)$ is \emph{cut} by $\phi^t$ if, for all $F$ not containing $(u,uv)$, $(uv,vu)$, or $(vu,v)$, either $\phi^t_{uv}(u,F) = u$ or $\phi^t_{vu}(uv,F) = uv$. That is, $uv$ or $vu$ sends the packet from the direction of $u$ back. In this case the packet cannot pass from $u$ to $v$. We modify $\phi_u^t$ to ignore cut links: let $S$ denote the set of cut links in $E(u)$. For each $e \in E(u)$ such that $\phi_u^t(e) \in S$ we set $\phi_u^t(e)$ to be the first element in the tail of $e$ not in $S$.

  Now let $\phi^t$ be such a modified forwarding pattern for any old node $t \in V(H)$. We construct a forwarding pattern $\pi^t$ for $G$ as follows. For each $F$, each $v \in V(G)$, and each $e$ incident to $v$, set $\pi^t_v(e,F)$ to be the first element in the tail of $e$ that is not in $F$. In the degenerate case where $e = E(v) \setminus F$ assign $\pi_v^t(e,F) = e$. Finally, if all links corresponding to non-failed links are cut, we set $\pi_v^t(e,F) = e$ for each non-failed cut link $e$. In this case the forwarding pattern will never reach $v$ in $H$.

  By construction the forwarding pattern $\pi^t$ is skipping. Links corresponding to the cut links of $H$ are never forwarded to and failed links are skipped. It remains to show that it is correct. Let $F \subsetneq E(G)$ be a failure set such that $|F| \leq f(n)$. At each node $v \in V(H)$ with in-port $e$, the packet will travel the tail of $e$ until it finds an out-port that has not failed (i.e.\ the middle link of the subdivided link does not correspond to a failed link of the original failure set $F$). The packet, due to the modification, will then travel over the subdivided links $(v,vu)$, $(vu, uv)$, and $(uv, u)$ to some old node $u$. In the original graph, due to the construction of $\pi^t$, the packet with in-port $e$ will directly be forwarded to $(v,u)$. By an inductive argument for each initial node and failure set $F$, the sequence of nodes followed by the packet in $G \setminus F$ equals the sequence of \emph{old nodes} followed by the packet in $H \setminus F'$.
\end{proof}

\noindent It follows that in many graph classes, it is sufficient to consider skipping forwarding patterns.

\begin{corollary} \label{cor:sim-argument}
  In the following graph classes, there is a resilience-optimal skipping forwarding pattern: planar graphs, graphs of bounded genus, cycles, graphs of bounded maximum degree, and of bounded arboricity.
\end{corollary}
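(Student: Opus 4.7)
The plan is to invoke Theorem~\ref{thm:sim-argument} directly, so the task reduces to verifying that each of the listed graph families is closed under link subdivision. Once closure is established, the theorem guarantees that whenever some $f(m)$-resilient forwarding pattern exists on a graph $G$ in the family, a skipping $f(m)$-resilient forwarding pattern exists on $G$ as well, which is exactly the statement of resilience-optimality for skipping patterns.

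The verification step I would do case by case, and it is routine rather than the main obstacle: \emph{planar graphs} are closed under subdivision because subdividing a link in a planar embedding amounts to placing an extra degree-two vertex on the drawn curve of the link, which preserves the embedding; \emph{graphs of bounded genus} are closed for the same reason applied to an embedding on a surface of genus $g$; a \emph{cycle} $C_n$ subdivided becomes $C_{n+1}$, still a cycle; graphs of \emph{bounded maximum degree $\Delta \geq 2$} are closed because subdivision introduces a new vertex of degree exactly $2$ and leaves the degrees of all other vertices unchanged (and for $\Delta = 1$ the family is trivial); \emph{bounded arboricity} is closed by Nash--Williams, since subdivision replaces a single edge by a path of length two and adds one vertex, and one checks that the density $\lceil |E(H)|/(|V(H)|-1)\rceil$ over every subgraph $H$ cannot increase under this replacement.

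Having verified closure under subdivision in each case, I would conclude by applying Theorem~\ref{thm:sim-argument} with the choice of $f$ taken to be the best achievable resilience function on the class (which, being monotone in $m$ under the theorem's hypothesis, suffices). The resulting skipping pattern matches the original resilience bound exactly, so it is resilience-optimal within the class. The argument works identically with or without source matching because Theorem~\ref{thm:sim-argument} already covers both regimes.

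The only step with any subtlety is the arboricity case, since one must notice that subdivision does not create a denser subgraph: any subgraph $H'$ of the subdivided graph $H$ containing the new degree-two vertex $w$ on the edge $(u,v)$ either contains both new edges $(u,w),(w,v)$ (corresponding to a subgraph of the original containing $(u,v)$, with one extra vertex and one extra edge, which decreases the density ratio) or at most one of them (in which case $w$ is a leaf or isolated and can be removed without decreasing density). Thus the maximum density is unchanged, and arboricity is preserved. Every other class is immediate, so no real difficulty arises and the corollary follows.
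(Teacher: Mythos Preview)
Your approach is correct and matches the paper's: the corollary is stated there as an immediate consequence of Theorem~\ref{thm:sim-argument}, with no explicit proof given, so verifying closure under subdivision for each listed class and then invoking the theorem is exactly what is intended. One minor remark: your arboricity argument via Nash--Williams density works but is slightly delicate (the ``decreases the density ratio'' step only holds when the subgraph already has at least $|V|-1$ edges, which is fine for the maximizing subgraph but needs a word of justification); a cleaner route is to observe that subdividing one edge of a forest yields a forest, so any decomposition of $G$ into $a$ forests directly yields one for the subdivided graph.
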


\section{Conclusion}
\label{sec:conclusion}

We studied the fundamental question of when it is possible
to provide perfect resilience in networks based on local
decisions only. We provided both characteristics of
infeasible instances and algorithms for robust networks for perfect 
and parametrized resilience.

While our results cover a significant part of the problem space,
it remains to complete charting the landscape of the feasibility
of perfect and parametrized resilience, both in the model where the source can
and cannot be matched. Furthermore, we have so far focused
on resilience only, and it would be interesting to
account for additional metrics of the failover paths,
such as their length and congestion.

\vspace{-2mm}

\subsection*{Acknowledgements} 
We would like to thank Jukka Suomela for several fruitful discussions.
We would also like to thank the anonymous reviewers and Wenkai Dai for their helpful comments.

\ifExtended
\clearpage
\fi
\ifFull
\fi

%\clearpage
{
\small
\setlength{\bibsep}{0pt plus 0.3ex}
\bibliographystyle{plainnat}
\bibliography{literature}
}

\appendix

\section{Proofs for Section~\ref{ssec:replicate-feigenbaum}}\label{app:fb}

\begin{figure*}[t]
  \centering
  \includegraphics[width=0.6\textwidth]{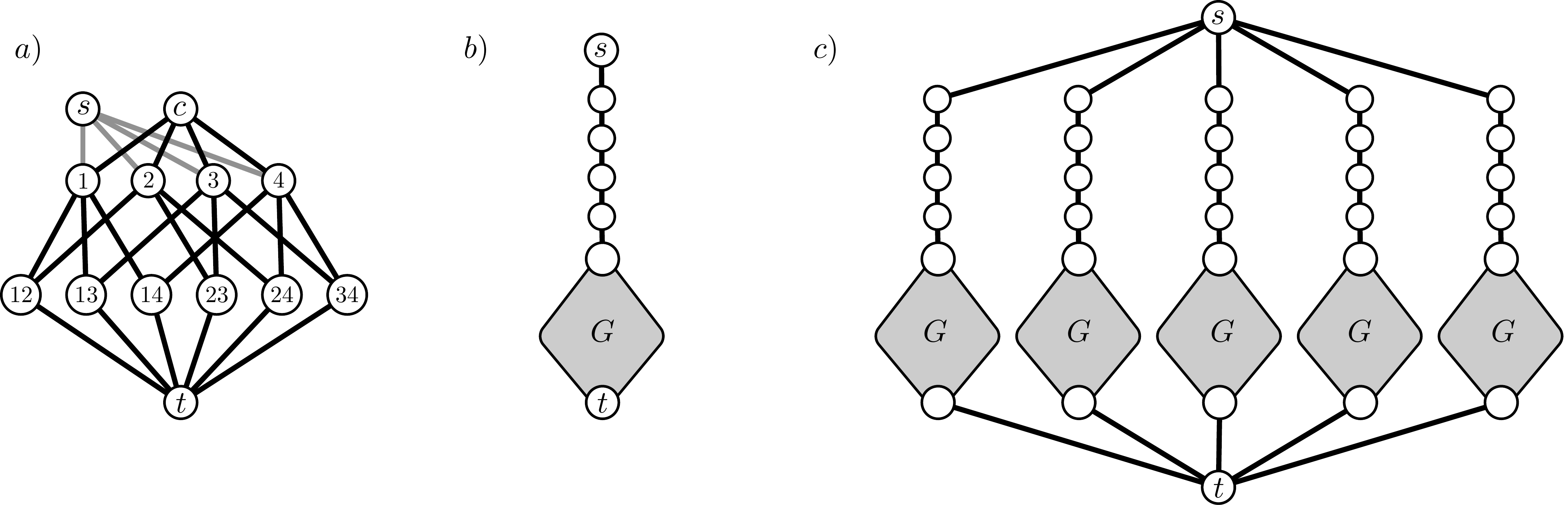}
	\vspace{-5mm}
  \caption{a) The original construction of Feigenbaum et al.\ \cite{podc-ba} with an extra source node $s$. b) The padded construction for Theorem~\ref{thm:feigenbaum-padded}. $G$ represents the gadget from Theorem~\ref{thm:feigenbaum-gadget}. c) Padded and replicated construction from Theorem~\ref{thm:feigenbaum-padded-replicated}. Each gadget $G$ is a copy of the gadget from Theorem~\ref{thm:feigenbaum-gadget}.} \label{fig:feigenbaum-gadget}
	\vspace{-5mm}
\end{figure*}

\begin{theorem} \label{thm:feigenbaum-gadget}
  There exists a graph $G$ on 13 nodes such that no forwarding pattern will succeed on $G$ when source, target, and in-port are known.
\end{theorem}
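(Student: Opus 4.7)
The plan is to take the original 12-node impossibility graph of Feigenbaum et al.\ and attach a single new source vertex $s$ to obtain a 13-node graph, then rerun their case analysis with the source-matching orbit lemmas (Lemma~\ref{lem:s-orbit} and Corollary~\ref{corr:s-paths-deg}) playing the role that Lemma~\ref{lemma:orbit} played in the original, source-oblivious proof. The crux of Feigenbaum et al.'s argument is that at each ``critical'' internal node $v$ of the gadget, one can construct a family of failure sets making two or more neighbors simultaneously relevant, which forces $\pi^t_v(\cdot,F)$ to cycle through those neighbors; combining these constraints across all critical nodes then exhibits a failure set with an $s$--$t$ path on which every forwarding choice induces a loop.

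First, I would fix the Feigenbaum graph $G_0$ and mark its critical nodes together with the relevant-neighbor pairs used in each branch of its impossibility argument. Then I would attach $s$ to a carefully chosen small set of vertices of $G_0$ so that, for every critical node $v$ and every failure scenario exercised in the original proof, the hypothesis of Lemma~\ref{lem:s-orbit} (or Corollary~\ref{corr:s-paths-deg}) is satisfied at $v$: either $s$ is directly adjacent to two surviving relevant neighbors of $v$, or $s$ can reach $v$ through one such neighbor via a path node-disjoint from an $v\to t$ path through another. With these hypotheses met, every critical $v$ is forced into exactly the same cyclic orbit over its relevant neighbors that the original proof demanded, so each step of the original case analysis is reproduced verbatim for $\pi^{s,t}$.

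Second, I would verify that the extra edges incident to $s$ do not open a new $s$--$t$ route that escapes the loop. This is handled by enlarging each original failure set $F$ with $s$-incident links as needed so that $s$ has only a single live edge $(s,u)$ pointing into the gadget, chosen so that $u$'s first forwarding choice feeds the packet into the very substructure where the original argument traps it. The crucial consistency check is that these additional $s$-incident failures remain invisible to every critical non-source node (they are non-incident to $v$) and therefore do not weaken the orbit preconditions established in the previous step. The resulting combined failure set leaves $s$ and $t$ connected in $G\setminus F$ but sends every forwarding pattern into the same loop as in the original proof, yielding the desired contradiction.

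The hard part will be picking the attachment of $s$ so that properties (i) \emph{every critical node's source-based orbit hypothesis holds} and (ii) \emph{no rescue path from $s$ to $t$ is created that cannot be killed by $s$-incident failures} are simultaneously satisfied with a single new vertex. Because the source-based orbit lemmas only require that not all relevant $s$-to-neighbor links fail at once, one has considerable flexibility: attaching $s$ to two vertices on the ``outer rim'' of the Feigenbaum gadget, chosen so that one of them lies on two different original relevance paths, should suffice, with the remaining cases absorbed by Corollary~\ref{corr:s-paths-deg} via internally disjoint routings through the gadget. If this lean attachment is insufficient for some critical node, one can add at most a constant number of extra $s$-edges while staying within 13 nodes, since only $s$ itself is new; the bookkeeping is then a finite case check over the small number of branches in the Feigenbaum argument.
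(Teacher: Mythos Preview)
Your high-level plan---attach a single source vertex $s$ to Feigenbaum et al.'s 12-node gadget and redo the case analysis with source-aware constraints---is exactly what the paper does. The differences lie in the execution.

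First, the paper is concrete about the attachment: $s$ is joined to \emph{all four} level-one nodes $1,2,3,4$ (the nodes adjacent to the center $c$). This is not incidental. Several branches of the case analysis begin by failing all $s$-links except $\{s,i\}$ for a specific $i\in\{1,2,3,4\}$, so that the packet is injected at a chosen level-one node; your tentative ``two vertices on the outer rim'' would not support all branches without further work, though your hedge about adding more $s$-edges would eventually land on the paper's choice.

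Second, and more interestingly, the paper does \emph{not} invoke Lemma~\ref{lem:s-orbit} or Corollary~\ref{corr:s-paths-deg} at all. Instead of establishing orbit constraints abstractly, it argues directly: for each level-one or level-two node $v$ that has degree~2 after failures, it exhibits an explicit failure set leaving a \emph{unique} $s$--$t$ path through $v$, so that any forwarding rule at $v$ other than ``pass through'' immediately fails. This bypasses the need to verify the adjacency/disjoint-path hypotheses of the source-orbit lemmas at every critical node. The final case analysis is then carried out only at the center $c$ (which sees no incident failures, so $\pi_c^{s,t}$ is fixed), showing that every choice of permutation at $c$ admits a failure set creating a loop.

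Your lemma-based route would also work, but it trades one finite case check (unique-path constructions at each degree-2 node) for another (verifying the preconditions of Corollary~\ref{corr:s-paths-deg} at each such node), and the latter is arguably more delicate because the disjoint-path requirement must hold in both directions. The paper's direct argument is shorter for this particular gadget and makes the proof self-contained, independent of the orbit machinery developed in Section~\ref{sec:first}.
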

\vspace{-2mm}
\ifExtended
For the sake of completeness, we present a proof of Theorem~\ref{thm:feigenbaum-gadget} in \cref{sec:feigenbaum-proof}.
\fi
\ifFull
\begin{proof}[Proof of Theorem~\ref{thm:feigenbaum-gadget}]
  Construct the gadget $G$ as follows. Take four nodes 1, 2, 3, and 4 and connect them by a link to a center node $c$. Then, for each unordered pair $(i,j)$ of nodes from $\{1,2,3,4\}$ create a new node $ij$, and connect $ij$ to $i$ and $j$. Connect all $ij$ to a new target node $t$. Finally, create a source node $s$ and connect it to $\{1,2,3,4\}$. $G$ is illustrated in Figure~\ref{fig:feigenbaum-gadget}a.

  We will call nodes 1, 2, 3, and 4 \emph{level one} nodes, and nodes 12, 13, 14, 23, 24, and 34 \emph{level two} nodes.
  We first claim that if nodes of levels one and two have degree 2 after failures, then they must \emph{always} forward the packet coming from $s$, level one, or level two to the other port, with the exception of forwarding back towards $s$.
  \begin{itemize}\setlength\itemsep{0em}
    \item If node $vu$ on level two is connected to $t$, we can cut all other links to $t$ and force $vu$ to forward to $t$, as otherwise the forwarding fails.
    \item If $uv$ is not connected to $t$, we can again create a unique path to $t$ that goes through $uv$. Choose some $vw$ on level two and cut all links to $t$ except $\{vw,t\}$. In addition cut all links to $vw$ except $\{v,vw\}$, and cut all links to $v$ except $\{uv, v\}$. A packet coming from $\{u,uv\}$ must be forwarded to $\{uv,v\}$. The other direction is symmetric.
    \item If a level one node $v$ is connected to $s$, then by setting $F = \{ \{s,u\} : u \neq v \}$ any non-forwarding rule at $v$ fails.
    \item If a level one node $v$ is not connected to $s$, then it is connected to some level two nodes $uv$ and $wv$ or $c$ and some level two node $uv$. In the first case, by cutting all links from level two to $t$ except for $uv$ or $wv$, and, respectively, the link $\{u, uv\}$ or the link $\{w,wv\}$, we force the unique path from $s$ to $t$ to take the links $\{vu, v\}$ and $\{wv, v\}$, in both directions. In the second case, we can create the following unique path from $s$ to $t$ by failing all other links: $(s, u, uv, v, c, w, wu, t)$, where $w$ is some node not $v$ or $u$. This forces forwarding at $u$ from $uw$ to $c$. For the other direction fail everything except the path $(s, u, c, v, vw, t)$ for some $w$.
  \end{itemize}
  Any perfectly resilient forwarding pattern must therefore forward packets over degree-2 nodes.

  We assume that there are no failures around $c$ so the forwarding function $\pi_c^{s,t}$ remains fixed. In the following we identify the incident links of $c$ with the neighbors they connect to. If $\pi_c^{s,t}(i) = i$ for any $i \in [4]$ the forwarding fails when $s$ is connected only to $i$ and $i$ only to $c$. Assume without loss of generality that $\pi_c^{s,t}(1) = 2$. By having $\pi_c^{s,t}(2) = 1$ we have a loop when all other links incident to 2 fail and 1 has links $\{s,1\}$ and $\{1,c\}$.  Without loss of generality assume that $\pi_c^{s,t}(2) = 3$. By a similar argument we see that we must have $\pi_c^{s,t}(3) = 4$, as otherwise we would create a loop. Now consider different settings for $\pi_c^{s,t}(4)$.
  \begin{itemize}
    \item $\pi_c^{s,t}(4) = 1$: we fail all links incident to level one except $\{s,4\}, \{1,13\}, \{13,3\}$, all links incident to $c$ or 2. The forwarding pattern loops $(s, 4, c, 1, 13, 3, c, 4)$ and then either visits $s$ or goes back to $c$ immediately, finishing the loop.
    \item $\pi_c^{s,t}(4) = 2$ or $\pi_c^{s,t}(4) = 3$: These cases are similar, as we can force the packet to never visit 1, which could be the unique remaining path to $t$. Fail all links of $s$ except $\{s,4\}$, and all other links of 4 except $\{4,c\}$. In addition fail all links from 2 and 3 to level two. The packet will start a loop on $(s, 4, c, 2, c, 3, c, 4)$ or on $(s, 4, c, 3, c, 4)$, respectively. Then it will either visit $s$ or go directly back to $c$, finishing the loop.
  \end{itemize}
  As all choices for $\pi_c^{s,t}(4)$ create a forwarding pattern with a loop, one cannot obtain perfect resilience.
\end{proof}
\fi

\balance

By padding and replicating Feigenbaum's construction we gain different parametrizations of the impossibility result. In particular, we observe that no $\omega(1)$-resilient forwarding pattern exists (as a function of $m$), and show that no $\Theta(f(n))$-resilient forwarding pattern exists even when the source and the target are $\Theta(f(n))$-connected. Theorems~\ref{thm:feigenbaum-padded} and \ref{thm:feigenbaum-padded-replicated} are asymptotic in $m$, the number of links in the input graph before failures.

\begin{proof}[Theorem~\ref{thm:feigenbaum-padded}]
  Modify graph $G$ as follows: replace the source node $s$ with a path $P_k = (s_0, s_1, \dots, s_k)$ on $k$ nodes, and connect node $s_k$ to the nodes 1, 2, 3, and 4. The node $s_0$ is the new designated source node. Denote this construction by $G_k$. See Figure~\ref{fig:feigenbaum-gadget}b for an illustration.

  Any forwarding pattern for $G$ could simulate the existence of the path $P_k$. Since there is no forwarding pattern for $G$, there is no forwarding pattern for $G_k$. 
%
%By setting $k$ large enough we can make $|E(G)| = 18 < f(k)$ for any function $f(m) = \omega_m(1)$.
By setting $k$ large enough we can make $|E(G)| = f(k)$ for $f(k)>18$, for any function $f(m) = \omega_m(1)$.
\end{proof}
%\vspace{-2mm}
\begin{proof}[Theorem~\ref{thm:feigenbaum-padded-replicated}]
  First observe that the gadget $G$ contains $m_0 = 22$ links. The gadget $G_k$ from Theorem~\ref{thm:feigenbaum-padded} contains a total of $m_k = k+22$ links.

  The case of $f(m) = O(1)$ is covered by Theorem~\ref{thm:feigenbaum-gadget}. Therefore fix $f(r)$ to be a function between $\omega_r(1)$ and $r$. We construct an infinite family of graphs as follows. For a fixed $r$, take $f = f(r)$ copies $G^{(1)}_g$, $G^{(2)}_g, \dots, G_g^{(f)}$ of $G_g$ for $g = g(r) = \lceil r / f(r) \rceil$. Then identify the nodes $s_0^{(1)}, \dots, s_f^{(1)}$ as the same node, which is designated to be the source node. Finally, add a new designated target node $t$ and connect it to the old target nodes $t^{(1)}, \dots, t^{(f)}$ in each copy of the gadget. See Figure~\ref{fig:feigenbaum-gadget}.c) for an illustration.

  Since the links of the different gadgets can fail independently, and, by Theorem~\ref{thm:feigenbaum-gadget}, for each $G^{(i)}$ and each partial forwarding pattern of $G^{(i)}$ there exists a partial failure set $F$ that prevents the packet from being forwarded to the (old) target node $t^{(i)}$. By assigning the suitable failure sets to each gadget we can see that there exists a path between $s$ and $t$ through each gadget $G^{(i)}$ yet no packet will ever reach an old target node $t^{(i)}$ and therefore cannot reach the target node $t$.

  By construction there are $\Theta(r)$ links and $\Theta(f(r))$ paths between $s$ and $t$ after failures. In each gadget at most a constant number of links fail (inside each original gadget $G^{(i)}$), and therefore the total number of failed links is $\Theta(f(r))$, as required.
\end{proof}

\ifExtended

\appendix

\clearpage

\section{Proof of \cref{cont2}} \label{sec:cont1}
We will utilize the following lemma for the proof of \cref{cont2}:

\begin{lemma}[Contraction Stability: Algorithm Transfer]
\label{cont1}
Given $G$ and $(i,j)\in E(G)$ let $G'$ be the corresponding $(i,j)$-contraction.
	Let $R=\{(j,r),r\in  V(i)\cap V(j)\}$.
	Let $A:F\mapsto \{\pi^t_v(\cdot,F),v\in V\}$.
	Define $A': F\mapsto  \{\pi^t_v(\cdot,F),v\in V'\}$ to be the \emph{$(i,j)$-contracted algorithm} of $A$ as follows:
  \begin{itemize}
  \item Case \RomanNumeralCaps{1}: Identical behavior on unaffected nodes, $\forall v\in V', v\neq \{i\}$, $\pi'^t_v(\cdot,F)=\pi^t_v(\cdot,F\cup R)$.
  \item Case \RomanNumeralCaps{2}: Replace $i$'s permutation by the contracted algorithm. Let $\pi'^t_i(\cdot,F)=\pi^t_{\{i,j\}}(\cdot,F\cup R)$.
  \item Case \RomanNumeralCaps{3}: Replace $j$'s port by $i$'s port on $j$'s neighbors
    permutations: $\forall k\in V(j),\exists v \text{~s.t.~} \pi^t_k(v,F \cup R)=j
    \Rightarrow \pi'^t_k(v,F)=i$,  $\forall k\in V(j),\exists v \text{~s.t.~} \pi_k^t(j,F \cup R)=v
    \Rightarrow \pi'^t_k(j,F)=v$.
  \end{itemize}
Let $P$ (resp $P'$) be the sequence of links traversed using $A$ by a
packet from $s$ to $t$ under $F\cup R$ (resp traversed using $A$'
under $F$). 

Let $Q$ be a rewriting of $P$ in which we replace every occurence of $j$
by $i$. And let $Q'$ be the rewriting of $Q$ in which we remove every
occurrence of $(i,i)$. We have $Q=P'$.
\end{lemma}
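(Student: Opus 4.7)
The plan is to prove $Q'=P'$ by induction on the hop index of the two link sequences, showing hop by hop that the edge produced by $A'$ on $G'\setminus F$ coincides with the next edge of $Q'$, i.e., the next non--self-loop edge obtained from $P$ after rewriting $j\mapsto i$. Cases~\RomanNumeralCaps{1}--\RomanNumeralCaps{3} in the definition of $A'$ are designed precisely so that a single step of $A'$ either mirrors a single step of $A$ away from the contracted pair $\{i,j\}$, or collapses an entire excursion of $A$ inside $\{i,j\}$ into a single step at the contracted node.

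For the base case I would consider the source $s$, noting that $s\neq j$ since $j\notin V(G')$. If $s\neq i$, Case~\RomanNumeralCaps{1} gives $\pi'^{t}_{s}(\bot,F)=\pi^{t}_{s}(\bot,F\cup R)$, so $A'$ and $A$ produce the same first hop and no $(i,i)$ self-loop is introduced. If $s=i$, Case~\RomanNumeralCaps{2} replaces $\pi'^{t}_{i}(\bot,F)$ by $\pi^{t}_{\{i,j\}}(\bot,F\cup R)$, which by definition of the mapping contraction equals the first out-of-pair neighbor that $A$ reaches when starting in $\bot$ and bouncing between $i$ and $j$; all preceding $\{i,j\}$-hops become $(i,i)$ entries that are deleted in $Q'$, so $Q'$ and $P'$ agree on the first hop.

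For the inductive step I would assume $Q'[1{:}k]=P'[1{:}k]$ and argue by cases on the node $v$ currently holding the packet and its in-port $v_-$. If $v\notin V(i)\cup\{i\}$, Case~\RomanNumeralCaps{1} makes the forwarding functions identical and the next step agrees trivially. If $v\in V(j)$, then because all edges in $R$ have failed, $v$ is not simultaneously a neighbor of $i$ in $G\setminus(F\cup R)$; Case~\RomanNumeralCaps{3} then relabels $j$-ports as $i$-ports in both directions, so the next hop agrees under the $j\mapsto i$ rewriting. If $v=i$, the packet has just entered the contraction, and Case~\RomanNumeralCaps{2} invokes $\pi^{t}_{\{i,j\}}(v_-,F\cup R)$, which by construction emulates the whole $\{i,j\}$-excursion of $A$ from the in-port $v_-$ and returns the first exit neighbor; this is exactly the next surviving entry of $Q'$.

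The main obstacle is the bookkeeping around the contracted pair. One must verify that the mapping contraction $\pi^{t}_{\{i,j\}}$ is well defined, which uses precisely the hypothesis that $i$ and $j$ share no neighbors in $G\setminus(F\cup R)$ (guaranteed by failing the edges of $R$), and that Case~\RomanNumeralCaps{3}'s rewriting of $j$-ports as $i$-ports on neighbors of $j$ is unambiguous for the same reason. A subtle second point is that an $\{i,j\}$-excursion of $A$ may in principle be long, but since $\pi^{t}_{\{i,j\}}$ is specified as the closed-form composition of $\pi_i$ and $\pi_j$ restricted to $\{i,j\}$, a single query returns the eventual exit port. Once these two well-definedness facts are in place, the induction closes without further complications, and Theorem~\ref{cont2} then follows by observing that connectivity of $s$ and $t$ in $G'\setminus F$ implies connectivity in $G\setminus(F\cup R)$, so the finite successful execution of $A$ translates, via the lemma, into a finite successful execution of $A'$.
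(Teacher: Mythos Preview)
Your proposal is correct and follows essentially the same approach as the paper: an induction on the hop index of $Q'$ and $P'$, with a case split on the current node's relation to the contracted pair, appealing to Cases~\RomanNumeralCaps{1}--\RomanNumeralCaps{3} exactly as the paper does. Your explicit discussion of the two well-definedness points (no common neighbors after failing $R$, and the closed-form collapse of $\{i,j\}$-excursions via $\pi^t_{\{i,j\}}$) matches the paper's reasoning. One small bookkeeping remark: your inductive case split reads ``$v\notin V(i)\cup\{i\}$'', ``$v\in V(j)$'', ``$v=i$'', which as written leaves out $v\in V(i)\setminus (V(j)\cup\{j\})$; the paper instead splits on $v\in V'(i)$ (neighbors of the contracted node in $G'$) and then sub-cases on the in-port $v_-$. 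Your missing case is harmless---Case~\RomanNumeralCaps{1} covers it directly since no port relabeling is needed---but you should state it for completeness.
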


\begin{proof}[Proof of Lemma~\ref{cont1}]
  We proceed by induction on the $k$ first hops  $Q[1..k]$ and
  $P'[1..k]$ of the sequences. When the context is clear, we write
  $\pi_v$ for the forwarding function of node $v$ using $A$ in context $F\cup R$,
  and $\pi'_v$ the forwarding function of $v$ using $A'$ in context $F$.

\emph{  Base case}: let $Q[1]=(s,q),P'[1]=(s,a)$. We need to prove
  $q=a$. $1)$: if $s\neq i$, $\pi'_s(\bot)=\pi_s(\bot)$ as defined in case
  \RomanNumeralCaps{1}. $2)$: if $s=i$, the packet starts in the
  $i,j$ contraction. Since $a=\pi'_s(\bot)=\pi_{i,j}(\bot)=\pi_i(\bot)$
  (Case \RomanNumeralCaps{2}),
  we deduce that $P$ describes possibly some hops between $i$ and $j$, and
  then $q$. Since $(i,j)$ transitions are rewritten $(i,i)$ in $Q$ and
  removed in $R$, the first node that is not $i$ nor $j$ to appear in
  $P$ must be $a$. Thus $q=a$.

\emph{Induction step:} Assume the following statement holds for any
$k'\leq k$: $Q[k']=P'[k']$. We prove that necessarily $Q[k]=P'[k]$.
Let $Q[k-1]=(v_-,v),P'[k-1]=(v_-,v),Q[k]=(v,q), P'[k]=(v,a)$. We need to
show that $a=q$. We again start by the simplest case $1)$: $v\not\in V'(i)
\cup \{i\}$: since $\pi_v=\pi'_v$ by Case \RomanNumeralCaps{1},
$q=\pi_v(v_-)=\pi'_v(v_-)=a$.
$2)$: if $v\in V'(i)$. We need to look at $v_-$. If $2.1)$: $v_-\neq i$ we
again directly use Case \RomanNumeralCaps{1}:
$\pi_v(v_-)=\pi'_v(v_-)$. If $2.2)$: $v_-= i$ the packet just left the
$(i,j)$ contraction. In $P$, $v_-$ can correspond to either $i$ or $j$
in $P$. If it corresponds to $i$, use Case \RomanNumeralCaps{1}. If it
was a $j$, since in Case \RomanNumeralCaps{3} we replaced $j$'s
connections (in $A$) by $i'$s connections (in $A'$), and since no node
$v$ is both a neighbor of $i$ and $j$ in $F\cup R$, this mapping
uniquely applies.
$3)$: if $v=i$: the packet is in the $(i,j)$
contraction. By definition of $Q$, necessarily $v_-\neq i$ or
$j$. Since $\pi'_i(v_-)=\pi_{ij}(v_-)$ we deduce $q=a$.
\end{proof}

\begin{proof}[Proof of \cref{cont2}]
  %Let $C=V(i)\cap V(j)$. 
	Let $R=\{(j,r),r\in V(i)\cap V(j)\}$. 
	Let $A'$ be the $(i,j)$ contracted algorithm of $A$.
  Let $F$ be a set of link failures of $G'$. 
	Observe that if $s,t$ are connected in $G'\setminus F$
	then $s,t$ are connected in $G\setminus F\cup R$.
  Let $s,t$ be connected in $G'\setminus F$.
	Let $p_{A'}$ (resp. $p_{A}$) be the path produced by $A'$ on $G'\setminus F$ (resp.\ $A$ on $G\setminus (F \cup R)$) from $s$ to $t$.

  Since $A\in A_p$ and $s,t$ are connected in $G\setminus (F\cup R)$,
  then necessarily $p_A$ is finite and ends up in $t$. Therefore,
  \cref{cont1} implies $p_{A'}$ is finite and ends in $t$
  too: $A'$ succeeds.
	%
  %       Let $(s,a)$ and $(b,t)$ be the first and last hops of $p_A$.
  % \begin{itemize}
  % \item If $s\neq i$ and $a\neq j$, observe that $(s,a)$ is also the first hop of $p_{A'}$.
  %       	%
  %         Thus by \cref{cont1} there exists another edge $e'\in p_A$ s.t.\ $e'\in p_{A'}$, unless $a=t$. 
  %       	%
  %       	Since $e'$ is in both traces, we can reapply
  %   Lemma \ref{cont1} and thus proceed to show  $(b,t) \in p_{A'}$:
  %   the packet reaches $t$ in $A'$, so $A'$ succeeds in routing.
  % \item If $(s,a)=(i,j)$, \gilles{fast forward some hops until you get
  %   a nice edge to start recurrence}.
  % \end{itemize}
\end{proof}

\section{Proof of Theorem~\ref{thm:feigenbaum-gadget}} \label{sec:feigenbaum-proof}

\begin{proof}[Proof of Theorem~\ref{thm:feigenbaum-gadget}]
  Construct the gadget $G$ as follows. Take four nodes 1, 2, 3, and 4 and connect them by a link to a center node $c$. Then, for each unordered pair $(i,j)$ of nodes from $\{1,2,3,4\}$ create a new node $ij$, and connect $ij$ to $i$ and $j$. Connect all $ij$ to a new target node $t$. Finally, create a source node $s$ and connect it to $\{1,2,3,4\}$. $G$ is illustrated in Figure~\ref{fig:feigenbaum-gadget}a.

  We will call nodes 1, 2, 3, and 4 \emph{level one} nodes, and nodes 12, 13, 14, 23, 24, and 34 \emph{level two} nodes.

  We first claim that if nodes of levels one and two have degree 2 after failures, then they must \emph{always} forward the packet coming from $s$, level one, or level two to the other port, with the exception of forwarding back towards $s$.
  \begin{itemize}
    \item If node $vu$ on level two is connected to $t$, we can cut all other links to $t$ and force $vu$ to forward to $t$, as otherwise the forwarding fails.
    \item If $uv$ is not connected to $t$, we can again create a unique path to $t$ that goes through $uv$. Choose some $vw$ on level two and cut all links to $t$ except $\{vw,t\}$. In addition cut all links to $vw$ except $\{v,vw\}$, and cut all links to $v$ except $\{uv, v\}$. A packet coming from $\{u,uv\}$ must be forwarded to $\{uv,v\}$. The other direction is symmetric.
    \item If a level one node $v$ is connected to $s$, then by setting $F = \{ \{s,u\} : u \neq v \}$ any non-forwarding rule at $v$ fails.
    \item If a level one node $v$ is not connected to $s$, then it is connected to some level two nodes $uv$ and $wv$ or $c$ and some level two node $uv$. In the first case, by cutting all links from level two to $t$ except for $uv$ or $wv$, and, respectively, the link $\{u, uv\}$ or the link $\{w,wv\}$, we force the unique path from $s$ to $t$ to take the links $\{vu, v\}$ and $\{wv, v\}$, in both directions. In the second case, we can create the following unique path from $s$ to $t$ by failing all other links: $(s, u, uv, v, c, w, wu, t)$, where $w$ is some node not $v$ or $u$. This forces forwarding at $u$ from $uw$ to $c$. For the other direction fail everything except the path $(s, u, c, v, vw, t)$ for some $w$.
  \end{itemize}
  Any perfectly resilient forwarding pattern must therefore forward packets over degree-2 nodes.

  We assume that there are no failures around $c$ so the forwarding function $\pi_c^{s,t}$ remains fixed. In the following we identify the incident links of $c$ with the neighbors they connect to. If $\pi_c^{s,t}(i) = i$ for any $i \in [4]$ the forwarding fails when $s$ is connected only to $i$ and $i$ only to $c$. Assume without loss of generality that $\pi_c^{s,t}(1) = 2$. By having $\pi_c^{s,t}(2) = 1$ we have a loop when all other links incident to 2 fail and 1 has links $\{s,1\}$ and $\{1,c\}$.  Without loss of generality assume that $\pi_c^{s,t}(2) = 3$. By a similar argument we see that we must have $\pi_c^{s,t}(3) = 4$, as otherwise we would create a loop. Now consider different settings for $\pi_c^{s,t}(4)$.
  \begin{itemize}
    \item $\pi_c^{s,t}(4) = 1$: we fail all links incident to level one except $\{s,4\}, \{1,13\}, \{13,3\}$, all links incident to $c$ or 2. The forwarding pattern loops $(s, 4, c, 1, 13, 3, c, 4)$ and then either visits $s$ or goes back to $c$ immediately, finishing the loop.
    \item $\pi_c^{s,t}(4) = 2$ or $\pi_c^{s,t}(4) = 3$: These cases are similar, as we can force the packet to never visit 1, which could be the unique remaining path to $t$. Fail all links of $s$ except $\{s,4\}$, and all other links of 4 except $\{4,c\}$. In addition fail all links from 2 and 3 to level two. The packet will start a loop on $(s, 4, c, 2, c, 3, c, 4)$ or on $(s, 4, c, 3, c, 4)$, respectively. Then it will either visit $s$ or go directly back to $c$, finishing the loop.
  \end{itemize}
  As all choices for $\pi_c^{s,t}(4)$ create a forwarding pattern with a loop, a perfectly resilient forwarding pattern cannot exist.
\end{proof}
%\todo{use lemma from first insights?}
\section{Proof of Theorem~\ref{thm:planar-s}} \label{sec:planar-s}

\begin{proof}
We note that by choosing the node 5 in Figure~\ref{fig:planar} as the source, the proof for Theorem~\ref{thm:planar} would directly carry over, if we could apply Lemma~\ref{lemma:orbit} to all nodes in Figure~\ref{fig:planar}.
However, Lemma~\ref{fig:planar} only considers target-based routing that does not consider the source, and we hence need to utilize \cref{lem:s-orbit} for node 1 (all relevant neighbors are in the same orbit) and \cref{corr:s-paths-deg} for the other nodes, where the degree two case suffices.
In order to apply Lemma~\ref{lem:s-orbit} to node 1, we need to connect the new source $s$ to two relevant neighbors of $s$, where we pick nodes 3 and 5, which keeps the 7+1 node construction planar.
Note that for the actual packet path in Figure~\ref{fig:planar}(a,b,c), we can fail the link $(s,3)$, keeping $(s,5)$ alive, and if $5$ were to route back to $s$, $s$ can only bounce the packet back.

It remains to force all degree 2 nodes in Figure~\ref{fig:planar}(a,b,c) to relay incoming packets through the other port, via Corollary~\ref{corr:s-paths-deg}, which we prove by case distinction.
To this end, for each of the seven cases (nodes 2,3 for a, 2,4 for b, and 3,4,6 for c), we need to show that there are pairs of node-disjoint paths for each such node $i$, with two different surviving links $(v_1,i),(v_2,i)$, that are in $E(G)$ minus all links incident to $i$, except $(v_1,i),(v_2,i)$:\footnote{Recall that $i$ is only aware of its incident link failures.}
\begin{itemize}
	\item start from $s$ and end with $(v_1,i)$, and start with $(v_2,i)$ and end at $t$,
	\item start from $s$ and end with $(v_2,i)$, and start with $(v_1,i)$ and end at $t$.
\end{itemize}

\begin{itemize}
	\item Figure~\ref{fig:planar}(a): Node 2 $\left[(3,2),(2,1) \right]$
	\begin{itemize}
		\item $s-5-6-3-2$ \textbf{:} $2-4-t$ and $s-5-1-2$ \textbf{:} $2-3-t$
	\end{itemize}
	\item Figure~\ref{fig:planar}(a): Node 3 $\left[(1,3),(3,2) \right]$
	\begin{itemize}
		\item $s-5-1-3$ \textbf{:} $3-2-t$ and $s-5-1-2-3$ \textbf{:} $3-6-t$
	\end{itemize}
	\item Figure~\ref{fig:planar}(b): Node 2 $\left[(1,2),(2,4) \right]$
	\begin{itemize}
		\item $s-5-1-2$ \textbf{:} $2-4-t$ and $s-5-1-4-2$ \textbf{:} $2-1-t$
	\end{itemize}
	\item Figure~\ref{fig:planar}(b): Node 4 $\left[(2,4),(4,1) \right]$
	\begin{itemize}
		\item $s-5-1-4$ \textbf{:} $4-2-3-t$ and $s-3-2-4$\footnote{Note that the packet is not actually routed via $3$ in Figure~\ref{fig:planar}(b), but the node 4 must also provision for this case in order to guarantee perfect resilience.} \textbf{:} $4-1-5-6-t$
	\end{itemize}
	\item Figure~\ref{fig:planar}(c): Node 3 $\left[(1,3),(3,6) \right]$
	\begin{itemize}
		\item $s-5-1-3$ \textbf{:} $3-6-t$ and $s-5-6-3$ \textbf{:} $3-1-4-t$
	\end{itemize}
	\item Figure~\ref{fig:planar}(c): Node 4 $\left[(1,4),(4,6) \right]$
	\begin{itemize}
		\item $s-5-1-4$ \textbf{:} $4-6-t$ and $s-5-6-4$ \textbf{:} $4-1-3-t$
	\end{itemize}
	\item Figure~\ref{fig:planar}(c): Node 6 $\left[(3,6),(6,4) \right]$
	\begin{itemize}
		\item $s-5-1-3-6$ \textbf{:} $6-4-t$ and $s-5-1-4-6$ \textbf{:} $6-3-t$
	\end{itemize}
\end{itemize}
\end{proof}

\section{Proof of \cref{thm:2hops}} \label{sec:2hops}
\begin{proof}
Resilience can be ensured with the following forwarding pattern.
Let $V(s)$ be the set of neighbors of the source node $s$
which are on a path of length two to the target $t$ before
the failures.
We define the forwarding function $\pi^{s,t}_s$ of the source $s$ as follows:

\begin{itemize}
\item We order the neighbors $V(s)$ of $s$ arbitrarily, i.e.,
$V(s)=(v_1,v_2,\ldots,v_k)$, and source $s$ first tries to forward
to the first neighbor $v_j$ which is connected to it after the failures ($\pi_s^{s,t}(\bot) = v_j$), i.e., in a skipping fashion.

\item When a packet arrives on the in-port from $v_i$, for any $i$,
$s$ forwards the packet to $v_{j}$, the next neighbor in $V(s)$ such that $\{s, v_j\} \notin F$, i.e., skipping as well.
\end{itemize}

For all other nodes $v$, the forwarding function is defined as follows.
When a packet arrives from the source $s$, send it to the target if the link has not failed. Otherwise return the packet to the source.
Note that this is a skipping pattern, as we can predefine a skipping of $(v,t)$ to route to $s$.
This ensures that the source $s$ tries all of its neighbors which could be
on a path of length two to the target. Since we assumed such a path existed, the forwarding pattern must succeed in routing the packet to the target.
\end{proof}

\section{Proof of \cref{thm:2hops-no-s}} \label{sec:2hops-no-s}
\begin{proof}
Let us assign a unique ID to every node and construct the forwarding pattern based on these identifiers.
The forwarding function of each node $v \in V \setminus \{s,t\}$ is defined as follows:
\begin{itemize}
	\item If $t$ is a neighbor of $v$, forward to $t$.
	\item If $t$ is not a neighbor, then
	\begin{itemize}
		\item if $v$ does not have the lowest ID in its neighborhood, forward to the neighbor with lowest ID, i.e., skipping potential lost neighbors with lower ID,
		\item if $v$ has the lowest identifier in its neighborhood, then route according to a skipping pre-defined cyclic permutation of all neighbors.
	\end{itemize}
\end{itemize}
Observe that, unless the target is reached on the way, the packet will reach a node $w$ that has the lowest identifier in its two-hop neighborhood.
From there on, the argument is analogous to the proof of Theorem~\ref{thm:2hops}:
The node $w$ will forward the packet to all its one-hop neighbors, which in turn bounce it back to $w$, unless connected to $t$. Since, by assumption, one of them is connected to $t$, the forwarding succeeds.
\end{proof}

\section{Proof of \cref{thm:outerplanar}} \label{sec:outerplanar}

\begin{proof}
Fix an arbitrary outerplanar embedding of the graph without failures (a consistent 
embedding must be used on all nodes when constructing the forwarding pattern). 
Without loss of  generality we define the canonical direction to be clockwise and we 
number the ports of the nodes accordingly starting from an arbitrary port. 

Note that if a link $l$ belongs to the outer face of a planar graph $G$, it also belongs
to the outer face for all subgraphs of $G$, in particular also in $G\setminus F$ for $l \notin F$. Since 
all nodes belong to the outer face for an outerplanar graph, it is hence enough to 
demonstrate that the forwarding pattern ensures packets use only the links of the 
outer face and do not change the direction despite failures.

Given an in-port belonging to a link on the outer face of a graph in clockwise 
direction we can decide locally which out-port belongs to the outer face to continue 
the walk using the right-hand rule encoded in the the forwarding function with $F$.

The routing algorithm assigns the following forwarding functions to node $v$ with 
degree $d$: $\pi_v^t(x) = x +1 \bmod d$. Hence the forwarding
function assigns out-port~$x +1 \bmod d$  to in-port $x$.
When node $s$ sends a packet to $t$ it picks an arbitrary out-port
that belongs to the outer face in the clockwise direction.
When failures occur, the skipping method is used to derive the next usable
output. 

As we will see next, this is enough to encode the information of which links belong 
to the outer face of $G\setminus F$. To this end, we show that this forwarding pattern 
routes a packet along the links of the outer face for an arbitrary link failures $F$ in 
clockwise direction.

We proceed to show this claim for the source and then for all other nodes on the
path to the target by induction.

Each node has at least one pair of an incoming port and an outgoing
port on the outerplanar face of $G$ (for a node with degree one these two
ports belong to the same link).  Let $(i_j, o_j)$ denote the $j^{th}$
such pair out of $k$ for node $v$ in $G$, starting to number them from
port 0 at node $v$ in clockwise order.
Note that it holds for all these $k$ pairs that $o_j = i_j + 1 \bmod d$.
%Since failures only reduce the number of faces, the outer face may contain more
%edges after failures but not fewer, as long as the graph is still connected.

Without failures, the source selects an outgoing port on the outer face in
the correct direction by definition. Without loss of generality, let this out-port be
$o_0$ from the pair $(i_0, o_0)$ defined above. If the corresponding link
has failed, the routing scheme will try port $o_0 + j \bmod d$ for
$j = 1, \dots, d-1$ until successful according to the skipping forwarding pattern.
Due to the failures, the corresponding link is now on the outer face and the
statement thus holds for the base case.

For a subsequent node $v$ on the walk to the target the packet will enter the
node on  port $i_j$  for some $j$ if no failures affect this node. In this case,
the routing scheme selects $i_j + 1 \bmod d$ as the out-port, which
happens to be $o_j$ as discussed above. Thus the packet remains using only links
of the outer face travelling in the right direction on $G\setminus F$. If there has been a failure affecting
node $v$, then the outgoing port might no longer be available. As shown for the source,
iterating over the next ports clockwise will guarantee that the packet is still using a link
of the outer face of $G\setminus F$. If the link incident to $i_j$ is down, then $v$'s neighbors
will select another link and the packet might enter the node on a port that
is not among the pairs for the graph without failures. Nevertheless, due to
the arguments above this implies that the incoming port now is part of
the walk along the links of the outer face in clockwise direction
 and also the outgoing port
chosen subsequently is part of the links on the outer face of $G\setminus F$.
Hence the routing scheme succeeds to lead packets to the target successfully
 without maintaining state in the packet or at the nodes.
\end{proof}

\section{Proof of \cref{cor:sameface}} \label{sec:sameface}

\begin{proof}
The algorithm from the proof for Theorem~\ref{thm:outerplanar} only needs to be
adapted at the start. Since the source and the target might not be
part of the outer face in $G$, the source chooses an arbitrary (clockwise direction) out-port that belongs to the face it shares with the target (some nodes may share several faces with the target, any of them can be chosen in this case).
Since failures only reduce the number of faces, the shared face may contain more
links after failures but not fewer and both source and target will always
belong to the same face on $G\setminus F$, as long as the graph is still connected.
Thus following the links of the face in clockwise direction will lead to the target
for arbitrary non-disconnecting failures.
\end{proof}
Another way of thinking about the proof is that we can also choose the embedding of the planar graph s.t.\ any desired face is the outer face, see e.g.\ Schnyder~\cite{DBLP:conf/soda/Schnyder90}.
%https://ask.sagemath.org/question/10900/embed-planar-graph-with-prescribed-outer-face/
%
As before, note that our forwarding pattern can visit every node on the face despite failures. %, starting from any node on the face.

\section{Proof of \cref{thm:k4}} \label{sec:thm:k4}
\begin{proof}
Let the four nodes of the $K_4$ w.l.o.g.\ be $\{v_1,v_2,v_3,v_4=t\}$.
The node $v_4=t$ must have at least one connected neighbor, w.l.o.g.\ $v_3$, where we are done if we start on $v_3$ or $v_4$.
It remains to fix the routing for any node not neighboring the target (where we directly reach $t$), for which we choose a cyclic permutation of all neighbors, picking skipping.
Hence, if the packet starts on w.l.o.g.\ $v_1$, where $v_1$ is part of the connected component of $t$, and reaches $v_3$ or $v_4$, perfect resilience is achieved.
We now perform a case distinction, where we note that while the node $t$ is distinguishable as a neighbor, for $v_1$ it is unknown which of its neighbors is connected to $t$.
Assume $v_1$ has only one neighbor, then forwarding succeeds after two or three hops (a line).
If $v_1$ has two neighbors and picks the one neighboring the target, we are done after two hops.
Else, assume it picks the wrong one, say $v_2$.
Then, if $v_2$ only has $v_1$ as a neighbor, the packet bounces back, then reaches $v_3$ and then $t$.
Lastly, if $v_2$ is connected to $t$, we are done, and if not, it is connected to $v_3$, which then forwards the packet to $t$.
\end{proof}

\fi
\ifFull
\fi

\end{document}